\documentclass[11pt]{article}

\usepackage{amssymb,amsmath,amsthm,amsfonts}
\usepackage[T1]{fontenc}
\usepackage[tt=false, type1=true]{libertine}
\usepackage[varqu]{zi4}
\usepackage[libertine]{newtxmath}
\usepackage[margin=1in]{geometry}
\usepackage{enumitem}
\setlist{nosep,topsep=0pt,leftmargin=*}

\usepackage{hyperref,titlesec,subcaption}
\hypersetup{
    colorlinks,
    allcolors=teal
}

\usepackage[sortcites,sorting=nyt,style=alphabetic,maxbibnames=10]{biblatex}
\usepackage[ruled]{algorithm2e}
    
    \SetAlFnt{\small}
    \SetAlCapFnt{\small}
    \SetAlCapNameFnt{\small}
    \SetAlCapHSkip{0pt}
    \IncMargin{-\parindent}

\usepackage{tikz,xifthen,physics,xcolor,suffix,nicefrac,enumitem,thm-restate,wrapfig,comment}
\usepackage[dvipsnames]{xcolor}
\usetikzlibrary{shapes,positioning,arrows.meta}
\usetikzlibrary{trees}
\usepackage[bb=dsserif]{mathalpha}

\usepackage[capitalize]{cleveref}

\usepackage{subcaption}
\usepackage[normalem]{ulem}

\newcommand{\itparagraph}[1]{\smallskip\noindent\textit{#1}}
\newcommand{\bfparagraph}[1]{\smallskip\noindent\textbf{#1}}

\newcommand{\orderT}[1]{\tilde{\mathcal{O}}\qty(#1)}
\WithSuffix\newcommand\orderT*[1]{\tilde{\mathcal O}(#1)}
\DeclareMathOperator*{\argmax}{arg\,max}

\newcommand{\R}{\mathbb{R}}
\newcommand{\N}{\mathbb{N}}
\newcommand{\floor}[1]{\left\lfloor{#1}\right\rfloor}
\WithSuffix\newcommand\floor*[1]{\lfloor{#1}\rfloor}
\newcommand{\ceil}[1]{\left\lceil{#1}\right\rceil}
\WithSuffix\newcommand\ceil*[1]{\lceil{#1}\rceil}

\let\e\varepsilon

\newcommand{\Line}[4]{%
    #1&%
    \ifthenelse{\isempty{#2}}{\phantom{=}}{#2}%
    #3%
    \ifthenelse{\isempty{#4}}{}{&&\qquad\left(\big.\substack{#4}\right)}%
}

\newcommand{\gl}{\ensuremath{\mathtt{gl}}}

\newcommand{\parent}{\pi}
\newcommand{\ch}{\mathtt{ch}}

\newcommand{\desc}{\mathtt{desc}}
\newcommand{\subTree}{\mathtt{SubT}}

\newcommand{\agentTree}{\mathcal{T}}
\newcommand{\convGraph}{\mathcal{C}}
\newcommand{\inEdges}{\mathtt{in}}
\newcommand{\outEdges}{\mathtt{out}}

\newcommand{\ind}[1]{\mathbb{1}\left[#1\right]}

\newcommand{\figwidth}{0.48\textwidth}

\titlespacing{\subsection}{0pt}{5pt plus 1pt minus 0.5pt}{3pt plus 1pt minus 0.5pt}
\newcommand{\ignore}[1]{}
\newcommand{\Description}[1]{}
\newtheorem{theorem}{Theorem}[section]

\newtheorem{lemma}[theorem]{Lemma}
\newtheorem{corollary}[theorem]{Corollary}
\newtheorem{proposition}[theorem]{Proposition}

\newtheorem{definition}[theorem]{Definition}
\newtheorem{remark}[theorem]{Remark}
\theoremstyle{definition}
\newtheorem{assumption}[theorem]{Assumption}

\newcommand{\citet}[1]{\textcite{#1}}
\newcommand{\citep}[1]{\parencite{#1}}

\title{Resource Allocation and Conversion along the Org Chart}

\author{
Yuan Deng\\
Google Research\\
\texttt{dengyuan@google.com}
\and
Giannis Fikioris%
\thanks{Supported by the Google PhD Fellowship and the ONR MURI grant N000142412742.}\\
Cornell University\footnote{Conducted while part-time Student Researcher at Google Research.}\\
\texttt{gfikioris@cs.cornell.edu}
\and
Chido Onyeze%
\thanks{Supported by AFOSR grant FA9550-23-1-0068.}\\
Cornell University\footnote{Conducted while part-time Student Researcher at Google Research.}\\
\texttt{chidoonyeze@cs.cornell.edu}
\and
Renato Paes Leme\\
Google Research\\
\texttt{renatoppl@google.com}
\and
Balasubramanian Sivan\\
Google Research\\
\texttt{balusivan@google.com}
\and
Mihai Tiuca\\
Google\\
\texttt{mtiuca@google.com}
}

\date{\vspace{-25pt}}

\begin{document}

\let\oldabovedisplayskip\abovedisplayskip
    \setlength{\abovedisplayskip}{\oldabovedisplayskip-3pt}
\let\oldbelowdisplayskip\belowdisplayskip
    \setlength{\belowdisplayskip}{\oldbelowdisplayskip-3pt}
\let\oldtextfloatsep\textfloatsep
    \setlength{\textfloatsep}{\oldtextfloatsep-10pt}

\maketitle{}
\thispagestyle{empty}

\begin{abstract}
    We consider the allocation of multiple heterogeneous resources to agents who are organized according to an organizational hierarchy.
In a company those correspond to business units, departments, and engineering teams.
In government it corresponds to federal, state, and municipal levels as well as various agencies within each.
In a university it corresponds to schools, departments, and research groups.
The resources are also distributed along the same organizational tree: part of the supply is available only to certain sub-trees since it is purchased for the exclusive use of certain departments or units, and some supply is available to the entire tree.
We consider the allocation with multiple types of resources where there is a possibility of converting between certain pairs of resources.

We formulate this allocation problem as a market equilibrium problem and derive the necessary conditions to find a feasible solution.
We prove that such a feasible solution always exists and provide an algorithm to compute it.
Finally, we test our algorithm on a real dataset of the allocation of computing resources, such as TPUs and GPUs at Google.
Our results rely on a novel two-step process, differentiating it from previous approaches.
First, we show how to solve ``easy'' (harmonic) instances of our problem that satisfy certain structural properties.
Then, we show how to efficiently approximate general instances by a series of easier harmonic instances, whose solutions converge efficiently to the solution of the original problem.

\end{abstract}

\clearpage
\setcounter{page}{1}

\section{Introduction}

Any large organization (e.g., company, government agency, university) must decide how to divide a shared, often heterogeneous, pool of resources among the many teams or subunits that sit beneath it in its hierarchy, and do so in a way that respects both the chain of authority and each unit's priorities.
We introduce a new resource management problem motivated by a setting that arises in several organizations when distributing resources along the organizational chart (org chart).
For example, the system deployed at Google for allocating AI training accelerators (such as TPUs and GPUs) uses a market-based mechanism where different teams within the organization receive credits to purchase access to different accelerators in an internal market.
This system was designed and deployed very recently by~\cite{sivanquota}, implementing a simplified version (see~\cref{sec:related_work} for details).
In this work, we formalize this as a new class of combinatorial market-design problems.
We show that there always exists a \emph{unique} market-clearing allocation that can be computed \emph{efficiently} using a novel iterative algorithm we call the \emph{Budget Descent Algorithm}.

Our focus is on markets with two defining characteristics:
\itparagraph{Feature 1: Hierarchical Structure.}
The market mirrors an organization's hierarchy: beneath a single root lie distinct units, each comprising multiple teams (in our running Google example, the units are business divisions and the teams are individual product teams).
Artificial currency, in the form of credits, is endowed to individual teams who can expend these credits to procure resources (e.g., computer chips) to meet their demands (e.g., computational).
On the supply side, resources are provisioned at various levels in the hierarchy.
Some resources are accessible to all teams, while others are purchased for the exclusive use of specific units.
Teams themselves may also procure resources for their own exclusive use.

\itparagraph{Feature 2: Resource Conversions.} Many resources are heterogeneous.
For example, accelerator chips vary by location (e.g., North America, Europe), by generation (newer generations are more capable), etc.
Some compute jobs demand specific chip types---for instance, requiring a North American chip due to data locality or regulatory constraints, or a newer generation chip for specific hardware features.
Conversely, other jobs are flexible and can utilize chips in any location (which we term ``global'' chips, although of course chips are always physically present in some specific location) or of any generation.
This substitutability can be modeled as a resource conversion graph (see Figure \ref{fig:resource_conversion}).

\begin{wrapfigure}{r}{0.4\textwidth}%
    \begin{center}%
        \begin{tikzpicture}[
    node distance=0.8cm and 1.2cm,
    auto,
    block/.style={
        rectangle,
        draw=black,
        thick,
        rounded corners=5pt,
        minimum width=2.cm,
        minimum height=1.cm,
        align=center,
        font=\sffamily\footnotesize
    },
    arrow/.style={
        ->,
        thick,
        >=latex 
    }
]

    \node (nn) [block] {new chip \\ North America};
    \node (on) [block, below=of nn] {older chip \\ North America};
    \node (ng) [block, right=of nn] {new chip \\ global};
    \node (og) [block, right=of on] {older chip \\ global};

    \draw [arrow] (nn) -- (on);
    \draw [arrow] (nn) -- (ng);
    \draw [arrow] (on) -- (og);
    \draw [arrow] (ng) -- (og);

\end{tikzpicture}%
    \end{center}%
    \vspace{-12pt}%
    \caption{Resource Conversion Graph}
    \label{fig:resource_conversion}
    \Description{A directed graph whose nodes are resource types and whose edges indicate the permissible conversions between them.}
\end{wrapfigure}

More formally, this market structure is characterized by two graphs.
First, a tree representing the hierarchy of the participants, where each node may have resource demands and available supply.
A node's supply is available exclusively to the subtree rooted at that node, and conversely, a subtree first consumes resources provisioned at the nodes inside the subtree before purchasing resources from outside the subtree.
The second graph is a resource-conversion graph describing the permissible conversions between resource types.
These two structures, coupled with the natural market forces of supply and demand, give rise to a new class of combinatorial market design problems.

While our running example will be the Google application described above (for which we provide simulational data in \cref{sec:simulations}) this structure is not particular to only this example.
The same structure---a participant hierarchy over a resource-conversion graph---recurs in many other markets.
Examples include energy distribution (federal-to-state-to-county regulators converting among electricity, gas, and solar), international trade (export quotas reserving domestic commodity supply before imports are allowed), multi-echelon manufacturing (raw materials converted downstream into finished goods along a technological graph), and waste management (municipal-to-regional hierarchies sorting commingled streams into recyclables or energy).

Of these examples, we focus on those where resources are allocated using artificial currencies, i.e., the agents derive no utility from holding the currency outside the specific market environment.
The artificial-currency assumption fits several of these settings exactly.
In large-scale government procurement and internal corporate markets like Google's, an exogenous process hands out fixed budgets earmarked for specific activities---such as purchasing energy---that function purely as priority signals, with no value outside the market.
Our primary contribution is a unified, abstract combinatorial market model capturing these hierarchical and networked constraints, along with algorithms to compute a market equilibrium.

When designing our desiderata for such markets, we have two goals in mind:

\begin{enumerate}
    \item As many resources as the central agent has available should be consumed.
    \item The amount of resource each agent gets should be proportional to the amount of currency.
\end{enumerate}

This second goal is a notion of fairness, but not one of social equity across the organization.
Rather, the budget allocation itself encodes the goals of the exogenous process: in government settings, budgets promote equitable access across agencies or regions; in corporate settings like Google, they reflect long-term strategic priorities and profit potential.
Our market then translates these credit endowments faithfully into resource allocations in proportionally to each team's budget. This ensures the realized allocation always mirrors the organization's stated priorities, regardless of which resources happen to be scarce.

\itparagraph{Why a market?}
Why run a market instead of \emph{direct priority-weighted allocation}, where the central agent hands each team a share of every resource proportional to its priority?
Such a rule preserves the intended priorities only in the simplest cases: a team's share of a resource it cannot use is wasted, and any change in supply or demand are handled by recomputing and reissuing allocations.
A market avoids both problems by decoupling physical resources from entitlements: credits and prices reduce the central agent's task to setting one priority per team, and prices translate these into allocations that automatically respect relative scarcity.
But designing such a market is not straightforward: it falls outside the scope of the standard tools for computing market equilibria, an obstruction we make precise (and overcome) in the contributions below.

\itparagraph{From simple problems to more complex.}
To build intuition, consider a single resource (no conversions) and a flat organization: multiple agents reporting to one central agent.
Each leaf agent $i$ is endowed with a budget $B_i$ of artificial currency, representing both its purchasing power and its intended share of resources.
The central agent sets a fixed unit price $p$, and each agent buys as much as its budget allows: too low a price causes excess demand, too high leaves supply unallocated.
The central agent therefore seeks the unique market-clearing price at which all resources are exactly consumed.

We now generalize to an arbitrary directed rooted tree of $n$ agents, where each agent has a local supply in addition to the resources it receives from its parent.
To generalize fairness, the resource each agent receives from its parent should be proportional to the total budget of the subtree rooted at that agent.
The solution is similar---each parent $i$ sets a unit price $p_i$---but the decision is harder: each node must balance its own usage against what it sends to its children, drawing on both its local supply and its parent's resources, and on the extra budget passed up from its children.

The most general model assumes $m$ different resources.
Were they independent, we could split the problem into $m$ separate instances.
Instead, each agent can convert one type of resource into another, either to meet its own demand or to sell to its children.
Thus, each agent $i$ picks a price $p_i^r$ for every resource $r$, balancing the transfers of $r$ with other agents against conversions of $r$ into other resources.

One last complication is that an agent's interest in a resource is not unlimited.
Beyond the budget $B_i^r$ that encodes its priority for resource $r$, each agent $i$ may also have a cap $d_i^r$ on how much of $r$ it actually wants.
This leaves us with two intertwined tasks.
First, we must articulate a set of natural axioms that pin down what the right solution should even be.
Second, we must efficiently compute an allocation that reconciles all of these budgets, caps, conversions, and hierarchical constraints.

\subsection{Contributions}

We summarize our main results below.
Because their proofs are cleanest in the language of the formal model, we defer a detailed overview of our techniques to \cref{ssec:technical_overview}, immediately after the model is introduced in \cref{sec:model}.

\bfparagraph{Problem formulation.}
Our first contribution is formalizing a set of conditions that imply that a proposed collection of prices, resource flows between agents, and conversions between resource types is a good solution.
We have three sets of conditions:
\begin{enumerate}
    \item \textbf{Flow Conservation}: For every agent-resource pair, the resources coming in (local supply, flow from the parent, conversions to this resource) should be equal to the amount coming out (local demand, flows to children, conversions from this resource). Resources in can be more than resources out, only if the corresponding price is $0$.

    \item \textbf{Money Conservation}: For every agent resource pair, the money generated (payment from local usage and payments from children) should be equal to the money given to the parent.
    
    \item \textbf{Conversion Complementarity}: 
    To avoid arbitrage (buying at a low price and selling to a higher one), if the resource conversion graph permits converting resource $r$ to resource $r'$, then (a) the price of $r$ should always weakly exceed that of $r'$, and, (b) conversion from $r$ to $r'$ happens only if their prices are equal.
    For example, an agent flexible about where a resource is sourced can draw on supply from any region.
    Rule (a) says the price of the resource at any specific region weakly exceeds that of the flexible variant (whose price is simply the lowest across all regions), while rule (b) says a flexible agent buys only from the regions offering the lowest price.
\end{enumerate}

A set of prices, flows, and conversions that satisfy the above three conditions will be referred to as a \textit{feasible solution} henceforth.
We formally motivate and define these in \cref{sec:model}.
Our solution concept is related to, but technically very distinct from, classical Arrow-Debreu general equilibrium with production~\cite{arrow1954existence}.
The two key departures are: (i)~\emph{zero-profit intermediation}: unlike the profit-maximizing firms of Arrow-Debreu, nodes retain no surplus---every credit a node collects is passed up to its parent; and (ii)~\emph{non-fungible budgets}: unlike a consumer's single, freely-spendable wealth, each agent's currency is earmarked per resource and cannot be moved across these (agent, resource) accounts.
These two features together appear to place our solution outside the reach of a single global Eisenberg-Gale-type convex program (such as the one that captures Fisher markets~\cite{eisenberg1959consensus}).
Providing an efficient solution despite this difficulty is the primary source of technical novelty in our work.

\bfparagraph{Existence and computation of solutions: the harmonic demand case.}
Our next contribution is to show that feasible solutions exist and can be computed.
First, in \cref{sec:solution}, we examine the above questions in the setting where the agents are allocated a certain budget for every resource and wish to get as many resources as possible given that budget.
This means that the demand of agent $i$ for resource $r$ as a function of the price $p_i^r$ is harmonic $D_i^r(p) = B_i^r / p$, where $B_i^r$ is the budget the agent $i$ has allocated for resource $r$.
In \cref{thm:harmonic:main}, we show how to compute a feasible solution by solving $n$ convex programs, starting from the root of the hierarchy and recursively moving towards the leaves.
We also prove that solutions are \textit{effectively unique}: while there might be multiple prices that satisfy our axioms, in every such solution the agents' allocations are the same (see \cref{def:model:uniqueness} for the formal definition).

\bfparagraph{Existence and computation of solutions: the general demand case.}
Next, in \cref{sec:capped}, we examine a more challenging setting, with more general demands.
Our main practical motivation is an agent $i$ that not only sets a budget for each resource $r$, but also a limit for the amount she cares for, $d_i^r$.
This makes her demand as a function of the price to be $D_i^r(p) = \min\{ d_i^r, B_i^r / p \}$.
However, our results in \cref{sec:capped} extend beyond this function form.
We consider any demand functions with the following properties (\cref{assumption:demands}):
For all $i,r$, $D_i^r(p)$ is non-increasing and $p\cdot D_i^r(p)$ is non-decreasing and bounded.
This setting generalizes the previous one and captures more realistic scenarios where the agent wants to allocate different payments for different prices, as long as these payments increase with the price $p$.
This function form captures the earlier, simpler form, $\min\{ d_i^r, B_i^r / p \}$, which is our focus in our simulation section, \cref{sec:simulations}, in a market used by Google to allocate AI accelerators.

To solve the more general demand case, we build on our previous results for harmonic demand.
First, we show that a feasible solution for the more general problem corresponds to a feasible solution to a harmonic problem.
In particular, we show that any solution to a non-harmonic instance of our problem with demands $\qty\big( D_i^r(\cdot) )_{i,r}$ corresponds to a solution to the harmonic problem where agents have budgets $\qty\big( \hat B_i^r )_{i,r}$ for some optimal ``virtual budgets'' $\qty\big( \hat B_i^r )_{i,r}$.
Specifically, these optimal virtual budgets are such that, for all agents $i$ and resources $r$:
\begin{equation}\label{eq:virtual_budget_condition}
    \frac{\hat B_i^r}{p_i^r}
    =
    D_i^r\qty\big( p_i^r )
\end{equation}
where $p_i^r$ are the prices of the corresponding harmonic problem with budgets $\qty\big( \hat B_i^r )_{i,r}$. In \cref{thm: budget bound}, we show that any prices satisfying \cref{eq:virtual_budget_condition} are feasible prices for the non-harmonic instance. This reduces the problem to finding a set of optimal ``virtual budgets''.

To calculate and prove the existence of these \textit{virtual budgets} $\hat B = \qty\big(\hat B_i^r)_{i,r}$, we employ the following fixed-point iterative process, which we call the \textit{Budget Descent Algorithm} (\cref{algo:BDA}):
\begin{enumerate}
    \item Initially, we overestimate the virtual budgets by setting them to the maximum possible payments: $\tilde B_i^r[0] = \lim_{p \to \infty} p \cdot D_i^r(p)$ for all $i, r$.
    
    \item For every iteration $t$, we calculate the prices $p(\tilde B[t])$ of the harmonic problem with budgets $\tilde B[t]$ (via the convex problem of \cref{sec:solution}).
    Then, we re-estimate the virtual budgets by setting them equal to the payments under the prices $p(\tilde B[t])$ and the real demands $D$: $\tilde B_i^r[t+1] = p(\tilde B[t]) \, D_i^r\qty\big( p(\tilde B[t]) )$.
\end{enumerate}

We show that the virtual budgets converge to the desired condition of \cref{eq:virtual_budget_condition} (\cref{thm: main correctness thm}).
This also proves the existence of virtual budgets that satisfy \cref{eq:virtual_budget_condition}, along with the fact that there always exist feasible prices for demand functions $D$ where $D(p)$ is non-increasing and $p\,D(p)$ is non-decreasing and bounded.
Finally, we show that any feasible solution is effectively unique, i.e., each agent receives the same amount of resource in any feasible solution.

\bfparagraph{Computational efficiency.}
In the harmonic setting of \cref{sec:solution}, the convex program formulation directly implies a polynomial-time algorithm.
For the general demand setting, we measure the error of a candidate allocation by its total absolute deviation, summed over all agent-resource pairs, from the (effectively unique) feasible allocation.
We show that driving this error below $\epsilon$ (up to an instance-dependent constant) takes $O\left(\frac{1}{\epsilon}\log\left(\frac{1}{\epsilon}\right)\right)$ iterations of the Budget Descent Algorithm.

\bfparagraph{Simulations}
In \cref{sec:simulations}, we run our algorithm from \cref{sec:capped} on real-world data from a hierarchical Google market for allocating AI accelerators, where each team declares a maximum demand and budget.
We run our Budget Descent Algorithm on $4$ instances of this market and show convergence in three ways: \cref{eq:virtual_budget_condition} is increasingly satisfied, over-allocation from demand overestimation converges to $0$, and the payment mismatch from infeasibility also converges to $0$.

\subsection{Related Work}
\label{sec:related_work}

Our work builds upon a rich literature applying tools from market design to resource allocation in distributed systems.
\citet{shneidman2005markets} explored this direction early on, arguing that while market-based mechanisms offered theoretical promise for distributed resource allocation, significant practical and technical challenges hindered their immediate adoption at the time.

However, as the scale and complexity of distributed systems grew, these market-based ideas gained substantial traction.
\citet{stokely2009market} demonstrated the viability of this approach by implementing a market economy to provision compute resources across Google's planet-wide clusters, effectively using pricing to balance supply and demand.
Concurrently, the focus on fairness in multi-tenant clusters led to the development of Dominant Resource Fairness (DRF) by \citet{ghodsi2011dominant}, which generalized max-min fairness to multiple resource types and became a standard for fair allocation in cluster schedulers.
\citet{vuppalapati2023karma, fikioris2024incentives} introduced Karma, a credit-based mechanism designed to handle dynamic demands, further bridging the gap between theoretical market principles and the practical requirements of modern production environments.
Most closely related to our application, \citet{sivanquota} very recently deployed a market that allocates AI training accelerators by pricing supply and demand, letting teams express the value of their workloads through spending, focusing on the practical implementation details.
They provide theoretical results only in the case where resource types are independent; in the absence of conversions, the market decouples into $m$ separate single-resource allocations\footnote{In this case, the prices of each market can be found by simple nested binary search algorithms.}, the elementary case of our development.
Their algorithm for the general case is a simple greedy algorithm with no theoretical guarantees.
Other works that study simple credit-based economies are \cite{elokda2024self,elokda2025carma}.
The use of credit-based systems for resource allocation is at this point both a reality in practice and a well-developed research agenda.

We contribute to this line of work by adding two features to the model: resource conversions and hierarchical allocation.
We build on top of the classic Eisenberg-Gale convex program, which can centrally compute equilibria in standard Fisher markets~\cite{eisenberg1959consensus, devanur2008market}.
Resource conversions in our model are related to the conversion of inputs into outputs subject to technological constraints, a feature present in classical Arrow-Debreu general equilibrium models with production~\cite{arrow1954existence}.

Our model, however, differs in that our agents do not seek to maximize profit.
In standard general equilibrium theory, firms choose production plans to maximize the difference between revenue and cost.
In contrast, agents in our hierarchy act as zero-profit intermediaries subject to strict financial autonomy: the virtual credits they collect from downstream agents must exactly equal the credits paid to upstream agents.
Furthermore, our economy operates entirely on artificial currency, similar to the ``pseudomarkets'' of Hylland and Zeckhauser~\cite{hylland1979efficient}, where budgets are exogenous endowments rather than wealth derived from profit shares.

The physical flow of resources and budgets down the organizational chart shares structural similarities with trading networks and supply chain models \cite{ostrovsky2008stability, hatfield2013stability}.
However, the trading network literature typically assumes quasi-linear utilities and real currency, focusing on discrete matching, bilateral contracts, or profit-maximizing intermediaries.
In contrast, our model operates purely on artificial currency and enforces a strict financial autonomy axiom, where internal agents act as sovereign, zero-profit proxies on behalf of their subtrees.

More distantly related, recent work from the Algorithmic Game Theory community studies repeated resource allocation for indivisible resources, e.g., \cite{lin_et_al, gorokh2021remarkable, banerjee2023robust}.
Their main technical tool is allocating every agent a specific amount of virtual currency and then running repeated auctions where the agents bid every round for the right to use the resource.
Their main focus is fairness across all agents, proving that each agent receives a certain amount of value, either in equilibrium or against adversarial environments.

\section{Model of the Hierarchical Market with Convertible Resources}
\label{sec:model}

\bfparagraph{Complementarity Notation.}
We use the $\bot$ symbol to denote complementary inequalities.
Two inequalities are said to be complementary if at least one of them must hold with equality: $(a \geq 0) \;\bot\; (b \geq 0)$ denotes that $a \geq 0$, $b \geq 0$, and $ab=0$.

\bfparagraph{Instance Description.}
An \emph{instance} of our problem consists of $n$ agents (indexed by $i$) and $m$ resource types (indexed by $r$), both of which are distributed along an organizational graph.
Agents are endowed with artificial currency (credits) that can be used to purchase resources in an internal market economy.
Instead of modeling the credit endowments directly, we model them indirectly via demand functions that specify how many units of each resource an agent wants to buy given the local market prices.
An instance is formally specified by the following components: 

\begin{enumerate}
    \item {\bf Organizational tree:} A directed rooted tree $\agentTree$ with nodes representing agents in $[n]$. The structure of this tree specifies the parent-child relationships between the agents. Specifically, a (directed) edge $(i, j) \in \agentTree$ implies that the resources available to $i$ are also available to $j$.
    For an agent $i$, let $\parent(i)$ denote the unique parent of $i$ in the tree $\agentTree$. Furthermore, let $\ch(i)$ denote the set of children of $i$ in $\agentTree$. Finally, let $\desc(i)$ be the set of all strict descendants of $i$ in $\agentTree$.

    \item {\bf Resource-conversion graph:} A weakly connected directed graph $\convGraph$ with nodes representing resources in $[m]$. The structure of this graph represents the set of feasible conversions, i.e., a conversion of resource $r_1$ to resource $r_2$ is feasible if and only if $(r_1, r_2)$ is an edge in $\convGraph$.
    While in many practical applications this graph is naturally a directed acyclic graph (DAG), e.g., representing irreversible hardware downgrades, our model and solution concept naturally accommodate cycles; any such cycles simply resolve into price equivalence classes in any feasible solution (\cref{lem:dag}).
    For any $r \in [m]$, let $\inEdges(r) = \{r' \in [m]: (r', r) \in \convGraph \}$ and $\outEdges(r)= \{r' \in [m]: (r, r') \in \convGraph \}$ denote the set of resources $r'$ adjacent to $r$ and adjacent from $r$ respectively.

    \item {\bf Resource Supply:} A set of values $s_i^r \geq 0$ for $i \in [n]$ and $r \in [m]$ where $s_i^r$ represents the initial, exogenous supply of resource $r$ available directly to agent $i$.
    
    \item {\bf Demands:} A set of demand functions%
    \footnote{\emph{Remark on Separable Demand:} We define independent demand functions for each resource, implicitly assuming that agents' preferences are separable. While a fully combinatorial demand model---where an agent maps a global price vector $\vec{p}_i$ to a bundle of requested resources---would be a natural extension, our separable formulation is both analytically tractable and practically expressive. In our framework, the substitutability between resources is elegantly captured on the \emph{supply side} via the resource-conversion graph, rather than on the demand side. For instance, if an agent is indifferent between two specific resources, the agent can express demand for a ``generic'' resource, and the conversion graph allows either specific resource to fulfill this demand. This mirrors strict corporate budgeting silos where funds are earmarked for specific categories, while system-level technological constraints govern substitutions.}
    $D^r_{i}: (0, \infty) \rightarrow \R_+$ for $i \in [n]$ and $r \in [m]$ where $D_{i}^r(p)$ represents the amount of demand of agent $i$ for resource $r$ at price $p$. 
    The most general demand model we examine is in \cref{sec:capped}, where $D_i^r(p)$ is weakly decreasing, while $p \, D_i^r(p)$ is weakly increasing and bounded (\cref{assumption:demands}).
\end{enumerate}

\bfparagraph{Economic Principles and Solution Concept.}
Our goal is to define a solution concept that achieves efficiency (clearing available supply) and fairness (allocating proportionally to credit endowments) within the constraints of the organizational hierarchy. To make the economic intuition rigorous, our solution concept generalizes a Walrasian equilibrium, guided by three standard microeconomic principles: \emph{Local Market Clearing}, \emph{Zero-Profit Intermediation} (or \emph{Financial Autonomy}), and \emph{No-Arbitrage}. 
To motivate these principles, it is instructive to consider a few simple examples.

\begin{figure}[htbp]
    \centering
    \begin{minipage}[b]{0.33\textwidth}%
        \centering
        \begin{tikzpicture}[
            level distance=1cm,
            sibling distance=1cm,
            dot/.style={circle, fill=black, inner sep=1.5pt},
            edge from parent/.style={draw, ->, >=latex},
            every label/.style={font=\small}
          ]
          \node [dot, label=above:$s$] {}
            child { node [dot, label=below:$D_1(p)$] {} }
            child { node [dot, label=below:$D_2(p)$] {} }
            child { node {\dots} edge from parent[draw=none] }
            child { node [dot, label=below:$D_n(p)$] {} };
        \end{tikzpicture}
        \caption{Ex. 1, Centralized Market.}
        \Description{A star-shaped tree: a root node holding all supply $s$, with $n$ leaf children whose demands are $D_1(p),\dots,D_n(p)$.}
        \label{fig:example1}
    \end{minipage}%
    \begin{minipage}[b]{0.33\textwidth}%
        \centering
        \begin{tikzpicture}[
            level 1/.style={level distance=.7cm, sibling distance=1.8cm},
            level 2/.style={level distance=.7cm, sibling distance=1.8cm},
            filled dot/.style={circle, fill=black, inner sep=1.5pt},
            hollow dot/.style={circle, draw=black, fill=white, inner sep=1.5pt},
            edge from parent/.style={draw, ->, >=latex},
            every label/.style={font=\small}
          ]
          \node [filled dot, label=above:{$s=10$}] {}
            child { node [filled dot, label=below:{$D_1(p) = \frac{2}{p}\qquad$}] {} }
            child { node [filled dot, label=right:{$s=6$}] {}
                child { node [filled dot, label=below:{$D_2(p) = \frac{2}{p}$}] {} }
                child { node [filled dot, label=below:{$D_3(p) = \frac{1}{p}$}] {} }
            };
        \end{tikzpicture}
        \caption{Ex. 2, Hierarchical budgets}
        \Description{A two-level tree: a root with supply $10$ whose children are a leaf with demand $2/p$ and an internal node with exclusive supply $6$; that internal node has two leaves with demands $2/p$ and $1/p$.}
        \label{fig:example2}
    \end{minipage}%
    \hfill
    \begin{minipage}[b]{0.29\textwidth}%
        \raggedleft
        \begin{tikzpicture}[
            grow=left,
            edge from parent/.style={draw, <-, >=latex},
            level 1/.style={level distance=1.5cm, sibling distance=1.5cm},
            filled dot/.style={circle, fill=black, inner sep=1.5pt},
            hollow dot/.style={circle, draw=black, fill=white, inner sep=1.5pt},
            every label/.style={font=\small}
          ]
          \node [filled dot, label=above:{$\qquad D^G(p) = \frac{10}{p}\qquad$}] {}
            child { node [filled dot, label=above:{$D^A(p) = \frac{10}{p}\qquad$}, label=below:{$s^A = 10\qquad$}] {} }
            child { node [filled dot, label=below:{$D^B(p) = \frac{30}{p}\qquad$}, label=above:{$s^B = 10\qquad$}] {} };
        \end{tikzpicture}
        \caption{Ex. 3, Resource Conversions}
        \Description{A single agent demanding a generic resource $G$, drawing on supplies of $10$ units each of resources $A$ and $B$, both of which can convert into $G$.}
        \label{fig:example3}
    \end{minipage}%
\end{figure}

\itparagraph{Example 1: Centralized Market Clearing.}
In the first example (\cref{fig:example1}), there is a single resource type ($m = 1$) and only the root has supply.
All the demand is at the leaves, representing a structure with all the supply centralized at the root.
Walrasian supply and demand matching dictates that the market clears when setting a global clearing price $p$ such that $s = \sum_i D_i(p)$, or that $s > \sum_i D_i(0)$ (supply exceeds maximum demand).
In this case, all nodes face the same uniform price $p$, and exactly $D_i(p)$ units of resource flow from the root to leaf $i$.

\itparagraph{Example 2: Hierarchical Budgets and Financial Autonomy.}
Consider now a slightly more complex example with a single resource ($m=1$) distributed across two hierarchical levels (\cref{fig:example2}).
There are $10$ units of resource at the root that can be consumed by all agents in the tree, and $6$ units of resource that are exclusive to the right subtree.
The leaves have endowments $B_i$ of $2$, $2$, and $1$ credits, respectively.
Assuming they have no cap on what they want to acquire, their demands take the harmonic form $D_i(p) = B_i/p$.

Because budgets are strictly siloed, we can view the right subtree as acting as an autonomous market intermediary representing the aggregate interests of its descendants.
The right subtree sets an internal local price $p_R$, which allows it to collect a virtual revenue of $p_R \cdot \qty\big( D_2(p_R) + D_3(p_R) ) = 2 + 1 = 3$ credits from its leaves---note that due to the demand functions being harmonic, this virtual revenue is independent of $p_R$.
The subtree acts as a zero-profit pass-through: it uses exactly these $3$ collected credits as its budget to acquire supplementary supply from the root.

If the root sets a price $p_0$, Agent $1$ buys $D_1(p_0) = 2/p_0$ resources, while the right subtree uses its $3$ collected credits to buy $3/p_0$.
Clearing the market at the root requires $10 = 2/p_0 + 3/p_0$, which yields $p_0 = 1/2$.
The right subtree, having purchased $3 / (1/2) = 6$ units from the root, can now pool this with its exclusive local supply of $6$ units, allocating $6+6=12$ total units to its children.
Clearing this local market requires $3/p_R = 12$, yielding $p_R = 1/4$.
The agents are thus allocated $4, 8$, and $4$ units respectively.
Note that prices strictly decrease along the tree ($1/2 \to 1/4$) because downstream nodes enjoy exclusive access to local supply.
 
\bfparagraph{Axiomatizing the Solution for a Single Resource.} 
The preceding examples illustrate the first two core axioms of our solution concept.
For a single resource $r$, a solution consists of a local internal price $p_i^r$ for each node $i \in [n]$ together with an upstream-to-downstream physical flow of resources $f_i^r$ corresponding to the transfer $\parent(i) \rightarrow i$.
This is subject to two economic conditions:

The first axiom is \textbf{Local Market Clearing} (or \textbf{Flow Conservation}), which dictates that every node acts as a localized Walrasian market.
Its total available supply (internal supply plus resources acquired from its parent) must equal its total allocation (its own consumption plus the resources sent to its children):
$$D_i^r(p_i^r) + \sum_{j \in \ch(i)} f_{j}^r  = s_i^r + f_{i}^r.$$
This equality must hold whenever the resource is scarce and has strict economic value ($p_i^r > 0$).
In the event of strict oversupply, the price falls to zero ($p_i^r = 0$), allowing supply to exceed demand (i.e., free disposal).
We will summarize this standard condition using complementarity notation.

The second axiom is \textbf{Zero-Profit Intermediation} (or \textbf{Money Conservation}).
Because currency is artificial and intermediating agents derive no utility from hoarding it, each agent operates at zero profit, preserving the financial autonomy of their subtree.
The virtual revenue an agent collects from its subtree at its local price must exactly equal the virtual payment it makes to its parent at the parent's price.
For any node $i$, clearing its local market at price $p_i^r$ generates a total revenue of $p_i^r\qty\big( D_i^r(p_i^r) + \sum_{j \in \ch(i)} f_{j}^r )$, where  $p_i^r \cdot D_i^r(p_i^r)$ is the credit endowment from node $(i, r)$ to fulfill its demand $D_i^r(p_i^r)$.
To purchase its imported flow $f_i^r$, it must pay the parent's rate $p_{\parent(i)}^r$, leading to the condition:
$$p_i^r\;\left(D_i^r(p_i^r) + \sum_{j \in \ch(i)} f_{j}^r\right) = p_{\parent(i)}^r\;f_{i}^r.$$
In instances where there is only one resource, the above two axioms fully characterize our market (or fully characterize $m$ independent markets if there are $m$ resources with no conversions).

\itparagraph{Example 3: Resource Conversions and No-Arbitrage.} 
While the hierarchical structure is an interesting market feature to analyze, an important defining characteristic of many practical applications (such as the Google internal market for AI accelerators) is the ability to convert resources.
Let us consider a simple one-agent topology as in Figure \ref{fig:example3} with three resource types $A, B, G$ where $G$ stands for a ``generic'' resource.
There is an initial supply of $s^A = 10$ and $s^B = 10$.
While there is no supply of $G$, we can convert $A \rightarrow G$ or $B \rightarrow G$.
Now, consider the demand functions for $A$, $B$ and $G$:
$$D_1^A(p_1^A) = \frac{10}{p_1^A} \qquad
D_1^B(p_1^B) = \frac{30}{p_1^B} \qquad
D_1^G(p_1^G) = \frac{10}{p_1^G}$$
Without the generic demand, resources $A$ and $B$ would be priced independently at $1$ and $3$ respectively.
However, the generic demand introduces a substitution effect: the system must decide whether to satisfy it by converting resource $A$ or resource $B$.
An optimizing agent would request the substitute with the lowest price---in this case, $A$.
This spatial arbitrage increases demand for $A$, driving its price up until marginal values equalize: $p^A = p^G = 2 < p^B = 3$.
This leads to the feasible allocation: $D_1^A(2) = 5, D_1^B(3) = 10, D_1^G(2) = 5$, achieved by converting $5$ units of $A$ to $G$.

This highlights the \textit{No-Arbitrage} principle governing resource conversions, which imposes two desirable properties on feasible prices: (i) If we can convert $A \rightarrow G$, then we must have $p_i^G \leq p_i^A$; otherwise, an arbitrageur could generate infinite virtual wealth by converting the cheaper $A$ into the strictly more expensive $G$ (or equivalently, demand for $A$ would perfectly substitute to $G$). (ii) If the conversion $A \rightarrow G$ is actively utilized, then the marginal cost must equal to the marginal benefit, meaning the prices must perfectly equalize ($p_A = p_G$). This observation forms the basis of our conversion complementarity condition.

\subsection{Solution Concept with Conversions} \label{ssec:model:full_model}

Consolidating these economic principles, in its most general version, a solution consists of the following three variables:
\begin{enumerate}
    \item A set of flow values for edge $(\parent(i), i) \in \agentTree$ and resource $r \in [m]$, denoted $f_i^r \geq 0$.
    \item A set of local clearing prices for each agent $i \in [n]$ and resource $r \in [m]$, denoted $p_{i}^r \geq 0$.
    \item A set of conversion amounts for each edge $(r_1, r_2) \in \convGraph$ and agent $i \in [n]$, denoted $c_i^{(r_1, r_2)} \geq 0$.
\end{enumerate}

We say a solution is \emph{feasible} (equivalently, an \emph{equilibrium}) if it satisfies the following complementary economic axioms:
\begin{enumerate}
    \item \textbf{(Flow Conservation / Local Market Clearing)} For all agent and resource pairs $(i, r)$:
    \begin{equation}\label{eq:model:flow}
        D_i^r(p_i^r) + \sum_{j \in \ch(i)} f_{j}^r + \sum_{r' \in \outEdges(r)} c_i^{(r, r')} \le s_i^r + f_{i}^r + \sum_{r' \in \inEdges(r)} c_i^{(r', r)} \quad\quad\bot\quad\quad p_i^r \ge 0    
    \end{equation}
    where we assume $f_{i}^r = 0$ if $i$ is the root node.
    
    \item  \textbf{(Money Conservation / Zero-Profit Intermediation)} For all agent and resource pairs $(i, r)$ where $i$ is not the root:
    \begin{equation}\label{eq:model:money}
        p_i^r\;\left(D_i^r(p_i^r) + \sum_{j \in \ch(i)} f_{j}^r\right) = p_{\parent(i)}^r\;f_{i}^r.        
    \end{equation}

    \item \textbf{(Conversion Complementarity / No-Arbitrage)} For agents $i$ and resources $r, r'$ such that $(r, r') \in \convGraph$:
    \begin{equation}\label{eq:model:conversion}
        p_i^{r} \geq p_i^{r'} \quad\quad\bot\quad\quad c_i^{(r, r')} \geq 0.
    \end{equation}
\end{enumerate}

There are examples where there are multiple feasible solutions.
Such a simple example is when there is a unique agent and resource with supply $s = 1$ and demand function $D(p) = \min\{1, 1/p\}$; any price $p \le 1$ is feasible, but all feasible prices induce the same allocation.
Another example is two agents and one resource, where both agents have demands $D_1 = D_2 = 0$; there are feasible solutions with zero prices and positive flow, or solutions with zero flows and positive prices.
While these examples are pathological, there are multiple ways to induce non-uniqueness.
However, in both these examples, the resource allocations are the same.
We introduce the following definition of \textit{effective uniqueness} that formalizes and generalizes this idea.

\begin{definition}[Effective uniqueness] \label{def:model:uniqueness}
    Fix an instance $(\agentTree, \convGraph, D, s)$.
    We call a solution $(f, p, c)$ \textit{effectively unique} if for any other solution $(\hat f, \hat p, \hat c)$ it holds that $D_i^r(p_i^r) = D_i^r(\hat p_i^r)$ for all $i, r$.
\end{definition}

\subsection{Technical Overview} \label{ssec:technical_overview}

Having defined the model, we now expand the high-level contributions from the introduction into a full technical overview of our approach.

\bfparagraph{Departures from Arrow-Debreu and Eisenberg-Gale.}
Our setting departs from classical general-equilibrium theory in two fundamental ways.
First, intermediate agents are zero-profit pass-throughs rather than profit-maximizing firms: they collect revenue from their subtrees and spend it entirely on resources from their parent, with no residual profit motive.
Second, all currency is artificial and strictly siloed by the org chart, so wealth cannot flow freely across organizational boundaries.
These departures also break the standard algorithmic toolkit.
For classical Fisher markets, equilibrium prices can be found by a \emph{single} global convex program---the celebrated Eisenberg-Gale program---but our hierarchy resists any such joint formulation: each node's optimization depends on its parent's price, which is itself the output of another optimization.
We are therefore forced to solve $n$ convex programs sequentially, top-down along the tree, and, for general demands, to wrap this pipeline as the inner step of an iterative outer loop (our Budget Descent Algorithm, below).
Proving correctness of an iterative algorithm whose inner step is already a non-trivial sequential computation over the tree is what makes our analysis technically demanding, and motivates the solution concept and algorithms we develop next.

\bfparagraph{Structural Reductions.}
Before computing feasible solutions, we establish three structural reductions that simplify the problem significantly.
First, any cyclic resource conversion graph reduces to a DAG: we show that strongly connected components collapse into equivalence classes of equal-priced resources, with no loss of generality (\cref{lem:dag}).
Second, agent-resource pairs that collectively have enough supply to satisfy maximum demand---which we call \emph{degenerate subgraphs}---decouple entirely from the rest of the market.
We show these can be identified and removed efficiently via a series of min-cut computations (\cref{thm:degenerate_and_prices,thm:computation_of_degenerate}), leaving an instance with strictly positive prices where maximum demand cannot be met.
Third, under positive prices the Zero-Profit Intermediation axiom allows us to eliminate the flow variables entirely: each flow $f_i^r$ can be written in closed form as a function of prices alone (\cref{lem: problem reformulation}).
Together, these reductions leave a clean system in prices and conversions only.

\bfparagraph{Harmonic Demands: $n$ Independent Convex Programs.}
For harmonic demands $D_i^r(p) = B_i^r/p$, the critical observation is that the payment $p_i^r D_i^r(p_i^r) = B_i^r$ is a constant, independent of prices.
This makes the Modified Flow Conservation equation (\cref{eq:harmonic:flow}) for agent $i$ depend only on $i$'s own prices and its parent's prices---not on any descendant's prices.
Prices can therefore be computed top-down, one agent at a time, by sequentially solving $n$ convex programs where each depends only on the previous ones (\cref{eq: optimization problem}).
Each program maximizes a separable concave objective over the no-arbitrage constraints $p_i^r \geq p_i^{r'}$ for $(r, r') \in \convGraph$. Via the KKT conditions, we observe that the optimal prices and the optimal conversion amounts correspond to optimal primal and dual solutions respectively (\cref{thm:harmonic:main}).
Effective uniqueness of the solution follows from a careful characterization of when the concave objective fails to be strongly concave, showing that non-uniqueness in prices never translates to non-uniqueness in allocations (\cref{cor:harmonic:uniqueness}).

\bfparagraph{General Demands: Virtual Budget Reduction.}
More general demand functions, and even capped harmonic demands $D_i^r(p) = \min\{B_i^r/p,\, d_i^r\}$, break the constant-payment property, invalidating the top-down decomposition.
Our key insight is a reduction back to harmonic.
Let $\hat B = (\hat B_i^r)_{i,r}$ be a vector of \emph{virtual budgets} and $p_i^r(\hat B)$ be the prices of a feasible solution of the problem when replacing demand function $D_i^r(p)$ with the harmonic $\hat B_i^r / p$.
If $\hat{B}$ satisfies the fixed-point condition $\hat{B}_i^r / p_i^r(\hat{B}) = D_i^r(p_i^r(\hat{B}))$ for all $(i,r)$, then the harmonic solution at $\hat{B}$ is also a feasible solution for the general demand problem (\cref{thm: budget bound}).
Crucially, the converse also holds: every feasible solution of the general problem arises this way.
This bijection between fixed points and feasible solutions reduces the existence question entirely to finding a fixed point of a well-structured operator (as we show next).

\bfparagraph{Budget Descent Algorithm.}
We find such a fixed point via the Budget Descent Algorithm (\cref{algo:BDA}).
Starting from the overestimate $\tilde{B}_i^r[0] = \lim_{p\to\infty} p\,D_i^r(p)$, each iteration solves the harmonic subproblem and updates $\tilde{B}_i^r[t+1] = p_i^r(\tilde{B}_i^r[t])\,D_i^r(p_i^r(\tilde{B}_i^r[t]))$.
The algorithm is parameter-free, requiring no step-size or learning-rate tuning, and maintains throughout the invariant that it always overestimates the virtual budgets.
Establishing convergence requires two independent technical results.

The first is a \emph{monotonicity} result: if virtual budgets decrease componentwise, the resulting harmonic prices also decrease (\cref{lem: key structural lemma}).
Although intuitive---less money in the system should lower prices---the proof is subtle because prices are defined by a recursive chain of $n$ coupled convex programs, one per agent, where each child's program takes its parent's optimal price as an input parameter.
A price decrease at the root must therefore propagate monotonically all the way to the leaves through these dependencies.
We establish this via a careful sequence of applications of Topkis's Theorem (\cref{thm:topkis}), which requires verifying supermodularity of each local objective in the relevant parameters---a non-trivial task given the coupling introduced by the no-arbitrage constraints across resources.
This monotonicity validates the overestimation invariant and implies that the virtual budgets decrease across iterations.

The second is a \emph{continuity} result: the price map $\hat{B} \mapsto p(\hat{B})$ is continuous (\cref{lem: key structural lemma; price continuity}).
We prove this via Berge's Maximum Theorem (\cref{thm:berge's}), again applied recursively from root to leaves.
The key difficulty is that each node's input parameter is its parent's price, which is only known to be continuous after Berge has already been applied at the parent; the proof therefore proceeds level by level from root to leaves.
Together, monotone decrease and continuity guarantee that the virtual budgets converge to the desired fixed point, yielding a constructive, algorithmic proof that a feasible solution exists, alongside the correctness guarantee for the algorithm (\cref{thm: main correctness thm}).

\bfparagraph{Effective Uniqueness.}
We also use the Budget Descent Algorithm to prove the effective uniqueness of any feasible solution, as long as the demands satisfy \cref{assumption:demands}.
In \cref{general:uniqueness}, we prove that our Budget Descent Algorithm approximates any solution from above, i.e., the corresponding virtual budgets and prices of any feasible solution are at most the ones generated by our algorithm.
This implies that any other feasible solution differing in allocations would provide agents with strictly more resources in total than the solution found by the Budget Descent Algorithm, violating the fact that every feasible solution allocates all the available resources.

\bfparagraph{Efficient Convergence of the Budget Descent Algorithm.}
Beyond convergence in the limit, we offer bounds for the error after a finite number of steps.
For any $\epsilon > 0$, after $O\qty(\tfrac{1}{\epsilon}\log\tfrac{1}{\epsilon})$ iterations the allocation is within $\epsilon$ of the (effectively unique) feasible one.
This error is measured by the total absolute deviation in allocation across all agent-resource pairs, up to an instance-dependent constant (\cref{thm: main convergence result}).

\subsection{Reducing the problem to a simpler one}

We now apply three simplifications that reduce the problem to a more convenient form; all omitted proofs are in \cref{app:proof from sec 2}.

\bfparagraph{Turning the resource graph into a DAG.}
First, we may assume the resource graph $\convGraph$ is a DAG.
Any strongly connected component shares a single price in every feasible solution, so it can be merged into one resource (with the aggregate supply and demand) with no loss of generality.

\begin{restatable}{lemma}{LemmaDAG}\label{lem:dag}
    Fix an instance $(\agentTree, \convGraph, D, s)$.
    Consider a new instance $(\agentTree, \hat\convGraph, \hat D, \hat s)$ with $\hat m < m$ resources, where every resource $\hat r \in [\hat m]$ corresponds to a maximal strongly connected component in $\convGraph$.
    Denote with $R(\hat r)$ the resources in $\convGraph$ that correspond to the new resource $\hat r \in [\hat m]$.
    Then, for every $i \in [n]$ and $\hat r \in [\hat m]$ we define
    \begin{equation*}
        \hat s_i^{\hat r} = \sum_{r \in R(\hat r)} s_i^r
        \quad\text{ and }\quad
        \hat D_i^{\hat r}(p) = \sum_{r \in R(\hat r)} D_i^r(p)
    \end{equation*}

    Then, every feasible solution of $(\agentTree, \convGraph, D, s)$ corresponds to a feasible solution of $(\agentTree, \hat \convGraph, \hat D, \hat s)$ and vice versa.
\end{restatable}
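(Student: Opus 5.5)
The plan is to first show that prices are constant on each strongly connected component, and then build explicit maps between solutions in both directions. The step needing care is the converse direction, where aggregated flows must be split back across resources so that Money Conservation holds resource by resource.

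\textbf{Step 1: equal prices on components.} Take any feasible $(f,p,c)$ and any agent $i$. If $r,r'$ lie in the same strongly connected component, pick a directed cycle through both. Applying \eqref{eq:model:conversion} along each edge gives $p_i^{r}\ge \dots \ge p_i^{r'}\ge \dots\ge p_i^{r}$. Hence $p_i^r$ is constant on $R(\hat r)$, and we call this common value $p_i^{\hat r}$.

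\textbf{Step 2: forward direction.} Define the new solution by
\[
\hat f_i^{\hat r}=\sum_{r\in R(\hat r)} f_i^r, \qquad \hat p_i^{\hat r}=p_i^{\hat r}, \qquad \hat c_i^{(\hat r,\hat r')}=\sum c_i^{(r,r')},
\]
where the last sum is over original edges from $R(\hat r)$ to $R(\hat r')$, and conversions inside a component are dropped.

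\emph{Flow Conservation.} Sum \eqref{eq:model:flow} over $r\in R(\hat r)$. Conversions inside the component appear once on each side and cancel, and $\sum_r D_i^r(p)=\hat D_i^{\hat r}(p)$, so the aggregate inequality holds. If $\hat p_i^{\hat r}>0$, every summand constraint is tight, so the aggregate is tight; this gives complementarity.

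\emph{Money Conservation.} Sum \eqref{eq:model:money} over the component. This is valid because all prices in the component are equal, for $i$ and for $\parent(i)$ alike.

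\emph{Conversion Complementarity.} For an edge $(\hat r,\hat r')$ of $\hat\convGraph$ there is some original edge between the two components, which gives $\hat p_i^{\hat r}\ge\hat p_i^{\hat r'}$. If $\hat c_i^{(\hat r,\hat r')}>0$, then some original conversion on such an edge is positive, which forces equality of the prices.

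\textbf{Step 3: converse direction.} Start from a feasible solution $(\hat f,\hat p,\hat c)$ of the new instance. Set $p_i^r=\hat p_i^{\hat r}$ for all $r\in R(\hat r)$. This fixes all demands $D_i^r(p_i^r)$.

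\emph{Splitting the flows.} Define $f_i^r$ bottom-up over the tree.
\begin{itemize}
\item If $\hat p_{\parent(i)}^{\hat r}>0$, put $f_i^r = p_i^r\,\bigl(D_i^r(p_i^r)+\sum_{j\in\ch(i)} f_j^r\bigr)/\hat p_{\parent(i)}^{\hat r}$. This enforces \eqref{eq:model:money} per resource. By induction $\sum_r f_j^r=\hat f_j^{\hat r}$ for the children, so summing and using the aggregate money equation gives $\sum_r f_i^r=\hat f_i^{\hat r}$.
\item If $\hat p_{\parent(i)}^{\hat r}=0$, the aggregate money equation forces $\hat p_i^{\hat r}\bigl(\hat D_i^{\hat r}+\sum_j \hat f_j^{\hat r}\bigr)=0$. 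Since every term is nonnegative, each per-resource left side of \eqref{eq:model:money} is also $0$. We may then split $\hat f_i^{\hat r}$ arbitrarily among $R(\hat r)$.
\end{itemize}

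\emph{Assigning inter-component conversions.} For each edge $(\hat r,\hat r')$, place the whole amount $\hat c_i^{(\hat r,\hat r')}$ on one chosen original edge between the components. Complementarity survives because the endpoint prices equal the new-instance prices.

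\emph{Balancing within a component.} For each agent $i$ and resource $r$, compute the imbalance of \eqref{eq:model:flow} (inflow minus outflow, ignoring conversions inside the component). The imbalances over $R(\hat r)$ sum to the aggregate slack.
\begin{itemize}
\item If $\hat p_i^{\hat r}>0$, the slack is $0$. By strong connectivity, any zero-sum vector of surpluses and deficits can be routed as a nonnegative flow of conversions along directed paths inside the component. This makes every per-resource constraint tight.
\item If $\hat p_i^{\hat r}=0$, route so that all deficits are covered and the leftover surplus stays as slack. This is allowed because the price is zero.
\end{itemize}
Conversions inside a component respect \eqref{eq:model:conversion} automatically, since prices there are equal.

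\textbf{Main obstacle.} The delicate part is Step 3: keeping per-resource Money Conservation while only aggregate quantities are given. The bottom-up proportional split resolves this, with the zero-price cases handled separately. Once it is in place, the conversion routing inside each component is a routine flow-decomposition argument that relies on strong connectivity.
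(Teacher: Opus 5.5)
Your proof is correct and follows the paper's route. You get equal prices on each strongly connected component from Conversion Complementarity around cycles, and you sum over components for the forward direction. For the converse you split flows proportionally from the bottom up; your formula coincides with the paper's whenever $\hat p_{\parent(i)}^{\hat r}>0$. You then put each cross-component conversion on a representative edge and route imbalances along paths inside each component.

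Your treatment of the case $\hat p_{\parent(i)}^{\hat r}=0$ is slightly more careful than the paper's. You split $\hat f_i^{\hat r}$ so that $\sum_{r} f_i^r=\hat f_i^{\hat r}$ still holds. The paper instead sets $f_i^r=0$ whenever the denominator vanishes, which can discard flow that agent $i$ needs.
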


\bfparagraph{Splitting the problem into independent sub-problems.}
Second, we may assume all prices are strictly positive.
Agent-resource pairs that can be priced at $0$ turn out to decouple from the rest of the market, so we can compute and remove them.
To make this precise, we work on the \emph{product graph} $\agentTree \times \convGraph$ on node set $[n]\times[m]$, with edges $(i, r) \to (i', r)$ whenever $i = \parent(i')$ and $(i, r) \to (i, r')$ whenever $r' \in \outEdges(r)$; a node set $\mathcal G$ is \emph{absorbing} if it has no outgoing edges.
An absorbing $\mathcal G$ is \emph{degenerate} if it has at least as much supply as demand and is minimal with this property.
Minimality is essential: a subgraph can have more aggregate supply than demand yet still fail to be self-sufficient, because a surplus at one node cannot always be routed to a deficit at another (\cref{rem:degenerate_minimality}).

\begin{definition}[Degenerate subgraph] \label{def:model:degenerate}
    An absorbing subgraph $\mathcal G \subseteq [n] \times [m]$ is called \textit{degenerate} if, $\sum_{(i, r) \in \mathcal{G}} D_i^r(0) \leq \sum_{(i, r) \in \mathcal{G}} s_i^r$ and no strict subset of $\mathcal G$ satisfies this condition\footnote{Since the demand function may not be well-defined at 0, we take $D_i^r(0)$ to be $\lim_{x \rightarrow 0^+} D_i^r(x)$ (which may be $\infty$). This exists by monotone convergence.}.
\end{definition}

A degenerate subgraph decouples from the rest of the market: setting its prices to $0$ and zeroing its flows and conversions from the outside preserves feasibility and, moreover, every agent's allocation.
Removing degenerate subgraphs iteratively thus leaves an instance whose prices are all strictly positive.

\begin{restatable}{theorem}{TheoremDegeneratePrices}\label{thm:degenerate_and_prices}
    Consider an instance of our problem $(\agentTree, \convGraph, D, s)$ and a feasible solution $(f, p, c)$.
    If no degenerate subgraph exists, then all the prices must be positive.
    On the other hand, if there exists a degenerate $\mathcal{G}$, then generating a new solution where we set to $0$ the price of the nodes in $\mathcal{G}$, along with the flows and conversions between $\mathcal G$ and $([n]\times[m]) \setminus \mathcal G$ retains feasibility and every pair $(i, r) \in \mathcal G$ gets the same allocation.
\end{restatable}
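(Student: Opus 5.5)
The plan is to analyze the set of zero-priced nodes in the product graph $\agentTree\times\convGraph$ and show that, in any feasible solution, it behaves like a self-sufficient absorbing set. Fix a feasible solution $(f,p,c)$ and let $Z=\{(i,r): p_i^r=0\}$. The first step is to show that $Z$ is absorbing.

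\begin{itemize}
\item \emph{Conversion edges.} For an edge $(i,r)\to(i,r')$ with $(i,r)\in Z$, the no-arbitrage inequality \eqref{eq:model:conversion} gives $0=p_i^r\ge p_i^{r'}$, so $(i,r')\in Z$.
\item \emph{Tree edges.} For an edge $(i,r)\to(j,r)$ with $j\in\ch(i)$, money conservation \eqref{eq:model:money} gives $p_j^r\bigl(D_j^r(p_j^r)+\sum_{k\in\ch(j)}f_k^r\bigr)=p_i^r f_j^r=0$. So either $p_j^r=0$, or $j$ has zero demand and sends nothing down.
\end{itemize}

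The second case is the pathology flagged after \cref{def:model:uniqueness}, and I expect it to be the main obstacle. My first attempt is to work with the enlarged set $Z'$ of nodes satisfying $D_i^r(p_i^r)=D_i^r(0)$, i.e.\ nodes already receiving their maximal demand. Failing that, I would reset such prices to $0$, which by the displayed identity does not change any allocation or any money identity. Either way I obtain an absorbing set on which each node receives $D_i^r(0)$.

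Next I check that nothing enters $Z$ from outside.
\begin{itemize}
\item If $(i,r)\notin Z$ is the parent of $(j,r)\in Z$, money conservation at $j$ gives $p_i^r f_j^r=0$, so $f_j^r=0$.
\item If $(i,r)\notin Z$ converts into $(i,r')\in Z$, then $p_i^r>0=p_i^{r'}$, and complementarity forces $c_i^{(r,r')}=0$.
\end{itemize}
I then sum the flow-conservation inequalities \eqref{eq:model:flow} over $Z$. Internal flows and conversions cancel, there are no outgoing edges because $Z$ is absorbing, and incoming terms vanish. This yields $\sum_{Z}D_i^r(0)\le\sum_Z s_i^r$. Since the node set is finite, a minimal absorbing subset with this property exists, and it is degenerate. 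This proves the first claim: with no degenerate subgraph, $Z=\emptyset$ and all prices are positive.

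For the second claim, fix a degenerate $\mathcal G$ and consider $Z\cap\mathcal G$. This set is absorbing, since both $Z$ and $\mathcal G$ are. If it is a nonempty proper subset of $\mathcal G$, the previous paragraph shows it has supply at least its maximal demand, contradicting minimality of $\mathcal G$. So there are two cases.
\begin{itemize}
\item $Z\cap\mathcal G=\mathcal G$. Each node gets $D_i^r(0)$, and inflows into $\mathcal G$ are zero as shown above.
\item $Z\cap\mathcal G=\emptyset$. All constraints in $\mathcal G$ are tight. Summing them gives
\[
\sum_{\mathcal G}D_i^r(p_i^r)=\sum_{\mathcal G}s_i^r+\text{(inflows)}\ \ge\ \sum_{\mathcal G}D_i^r(0)\ \ge\ \sum_{\mathcal G}D_i^r(p_i^r).
\]
Hence the inflows into $\mathcal G$ are zero and $D_i^r(p_i^r)=D_i^r(0)$ for every node of $\mathcal G$.
\end{itemize}
In both cases the boundary flows and conversions are already zero, and allocations in $\mathcal G$ are maximal.

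Finally I verify feasibility of the modified solution. Outside $\mathcal G$ nothing changes, since the boundary terms being zeroed were already zero. Inside $\mathcal G$, I set prices to $0$ and keep the internal flows and conversions. Flow conservation then only needs ``$\le$'', which holds because allocations are unchanged. Money conservation holds trivially, with both sides equal to $0$. For conversions out of $\mathcal G$ there are none, since $\mathcal G$ is absorbing. For conversions into $\mathcal G$, the conversion is $0$ and $p_i^r\ge 0=p_i^{r'}$, so complementarity holds. Allocations in $\mathcal G$ equal $D_i^r(0)$ both before and after the modification.
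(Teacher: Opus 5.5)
Your outline follows the paper's: close the zero-priced set under reachability, kill inflows from positive-priced nodes, sum flow conservation, and for a degenerate $\mathcal G$ split it into zero- and positive-priced parts and use minimality. However, the step you flag as the main obstacle is left open, and it is exactly where the paper's work lies.

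\textbf{The absorbing step is not established.}
\begin{itemize}
\item The set $Z'=\{(i,r): D_i^r(p_i^r)=D_i^r(0)\}$ is not absorbing in general. A positive-priced intermediate node with $D\equiv 0$ belongs to it, while its child with harmonic demand at a positive price does not.
\item Even if you restrict to the reachable closure of $Z$, your inflow bullets do not cover inflows into its positive-priced nodes. These can come from zero-priced parents, or as conversions at equal positive prices.
\item The reset option checks allocations and money identities but not Conversion Complementarity. After setting $p_j^r=0$, any edge $(j,r)\to(j,r')$ with $p_j^{r'}>0$ violates no-arbitrage. Any $c_j^{(r'',r)}>0$, legal before at equal prices, becomes illegal. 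The reset must also cascade to descendants.
\end{itemize}

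What is missing is the content of \cref{lem:propagation_of_zero_prices,prop:smaller_price_or_zero}:
\begin{itemize}
\item The nodes reachable from a zero-priced $(i,r)$ are $\subTree(i)$ times the resources reachable from $r$.
\item Every $(i,r')$ there has price $0$ by no-arbitrage. So \cref{lem: isolate flow lemma} forces $p_k^{r'}D_k^{r'}(p_k^{r'})=0$ and zero downward flow at each positive-priced node of the closure.
\item An induction from the sink resources of the DAG shows these nodes carry no conversions, hence no supply or inflow. The key fact is that a positive conversion into a node that carries nothing would break its tight flow-conservation equality.
\item Only then can the whole closure be reset to $0$ at once.
\end{itemize}
Note also that the reset preserves demand only because $D(p)=0$ at some $p>0$ forces $D\equiv 0$, by monotonicity of $pD(p)$ (\cref{assumption:demands}(b)). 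Without this, the first claim can fail.

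\textbf{The degenerate case has a second gap.} Your earlier argument does not show that $Z\cap\mathcal G$ has supply at least its maximal demand. It only kills inflows from positive-priced nodes. But $Z\cap\mathcal G$ may receive positive flow or conversion from zero-priced nodes in $Z\setminus\mathcal G$, where money conservation is vacuous.

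The paper avoids this as follows. Let $\mathcal G_0$ be the (reset) reachable closure of the zero-priced nodes of $\mathcal G$, and let $\mathcal G_+=\mathcal G\setminus\mathcal G_0$ be the positive-priced part. The paper sums the tight constraints over $\mathcal G_+$, which has no outflow, obtaining $\sum_{(i,r)\in\mathcal G_+}D_i^r(0)\ge\sum_{(i,r)\in\mathcal G_+}s_i^r$. Subtracting this from the degeneracy inequality for $\mathcal G$ gives $\sum_{(i,r)\in\mathcal G_0}D_i^r(0)\le\sum_{(i,r)\in\mathcal G_0}s_i^r$, contradicting minimality.

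The same oversight invalidates your claim that boundary inflows are already zero when $Z\cap\mathcal G=\mathcal G$. Inflows from zero-priced nodes outside $\mathcal G$ can be positive. So after zeroing them you may have to re-route the internal flows rather than keep them. This is possible, since minimality yields the cut condition for routing $D_i^r(0)$ from the supplies inside $\mathcal G$. The paper's own proof is terse on this point as well.
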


Finally, a minimal degenerate subgraph can be found (or ruled out) efficiently via a sequence of min-cut computations (\cref{sec: identify degen graph}), so degenerate subgraphs can be removed recursively until all prices are positive.

\begin{restatable}{theorem}{TheoremComputationOfDegenerate} \label{thm:computation_of_degenerate}
    We can efficiently find a minimal degenerate subgraph or decide that one does not exist.
\end{restatable}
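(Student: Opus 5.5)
We need an efficient algorithm that either finds a minimal degenerate subgraph or certifies that none exists. The plan is to reduce the search to closure (maximum-weight closed set) problems on the product graph $\agentTree\times\convGraph$, each solvable by a single min-cut.

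\textbf{Step 1: reformulate as a closure problem.} A set $\mathcal G\subseteq[n]\times[m]$ is absorbing exactly when it is closed under outgoing edges: if $(i,r)\in\mathcal G$, every successor $(i',r)$ with $i=\parent(i')$ and every $(i,r')$ with $r'\in\outEdges(r)$ is also in $\mathcal G$. Give each node the weight $w_i^r = s_i^r - D_i^r(0)$, which may be $-\infty$.

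Then a nonempty absorbing set with $\sum_{\mathcal G} D_i^r(0)\le\sum_{\mathcal G}s_i^r$ is exactly a nonempty closed set of nonnegative total weight. Nodes with infinite demand can never appear in such a set. I will therefore first delete them, together with every node that can reach one of them, since any absorbing set containing such a node also contains the infinite-demand node. The remaining weights are finite.

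\textbf{Step 2: find such a set by min-cut.} Maximum-weight closure is the classical Picard construction:
\begin{itemize}
\item a source arc to each positive-weight node with capacity $w$;
\item an arc from each negative-weight node to the sink with capacity $-w$;
\item infinite-capacity arcs along the product-graph edges.
\end{itemize}
The source side of a minimum cut is a maximum-weight closed set. Computing this takes polynomial time in $nm$.

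One subtlety is that the empty set has weight $0$, so a maximum of $0$ does not by itself show that a nonempty feasible closed set exists. To handle this, I will iterate over each node $v$ and force $v$ into the closure by giving its source arc infinite capacity. This amounts to $nm$ min-cut computations. A nonempty closed set of nonnegative weight exists iff some forced run returns a nonnegative value. If no run does, we output ``none exists''.

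\textbf{Step 3: shrink to a minimal set (the main obstacle).} Once we have some closed $\mathcal G$ with nonnegative weight, we need a minimal one. The difficulty is that minimality concerns all subsets of $\mathcal G$ satisfying the condition, and a maximum-weight closure need not be inclusion-minimal.

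I read the definition as requiring minimality among absorbing subsets, because the decoupling argument in \cref{thm:degenerate_and_prices} only uses absorbing sets. Under that reading, I will shrink greedily. For each node $v\in\mathcal G$, test by the forced min-cut of Step 2, restricted to the subgraph induced on $\mathcal G$ minus $v$ and everything that can reach $v$ inside $\mathcal G$, whether that subgraph contains a nonempty closed set of nonnegative weight. If it does, replace $\mathcal G$ by that closed set and repeat.

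This subgraph is itself closed, because any node whose successors lead to $v$ has been removed as well. Hence every closed subset found inside it is absorbing in the full graph. Each replacement strictly decreases $|\mathcal G|$, so there are at most $nm$ rounds, each using polynomially many min-cuts. The process stops exactly when no proper absorbing subset satisfies the supply condition, so the final $\mathcal G$ is minimal and hence degenerate.

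The step I expect to need the most care is checking that a proper absorbing subset of $\mathcal G$ must omit some $v$ together with all nodes reaching $v$, so that the subgraphs tested in Step 3 cover every proper absorbing subset. This holds because an absorbing set that contains any node reaching $v$ must also contain $v$. With that fact, a negative answer for every $v$ certifies minimality.
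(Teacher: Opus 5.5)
Your proposal is correct and follows essentially the paper's route: a min-cut/closure reduction on $\agentTree\times\convGraph$ with infinite-capacity product edges, a per-node test that rules out the trivial empty closure, and greedy shrinking to a minimal absorbing set. The differences are minor. You force $v$ into the closure with an infinite source arc, where the paper raises $s_v$ by $1$ and checks whether the min-cut value stays at the old total supply. You also delete $v$ together with all of its ancestors, where the paper deletes only a source node of the induced DAG; your version makes the minimality certificate valid even without first collapsing cycles via \cref{lem:dag}.
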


\begin{remark}[Positive Prices] \label{remark:positive_prices}
    For the rest of this paper, we assume there is no degenerate subgraph, making all the prices positive.
    This would require us to define a more general model, where not every agent resource pair $(i, r)$ appears.
    However, to keep the model and notation more understandable, for simplicity, we will assume that the entire $\agentTree \times \convGraph$ graph is present.
\end{remark}

\bfparagraph{Removing the Flow Variables.}
Third, under positive prices the Money Conservation axiom lets us write each flow $f_i^r$ in closed form as a function of the prices: the payment a child makes to its parent for resource $r$ equals the money generated by the demands of the whole subtree $\subTree(i) = \{i\} \cup \desc(i)$, by induction along the tree (\cref{lem: isolate flow lemma}). This eliminates the flow variables and the Money Conservation axiom entirely.

\begin{lemma} \label{lem: problem reformulation}
    Fix an instance $(\agentTree, \convGraph, D, s)$. Consider a feasible solution $(f, p, c)$ for the instance with $p_i^r > 0$ for all $(i, r) \in [n] \times [m]$.
    Then, for all $(i, r) \in [n] \times [m]$, $(p, c)$ satisfies:
    \begin{equation}\label{eq: payment eq}
        \left(\frac{1}{p_i^r} - \frac{\ind{i \text{ is not root}}}{p_{\parent(i)}^r}\right)\sum_{j \in \subTree(i) } p_{j}^r\;D_j^r(p_{j}^r) = s_i^r + \sum_{r' \in \inEdges(r)} c_i^{(r', r)} - \sum_{r' \in \outEdges(r)} c_i^{(r, r')}
    \end{equation} Furthermore, for any $(p, c)$ satisfying \cref{eq: payment eq} and Conversion Complementarity, there exists unique $f$ such that $(f, p, c)$ is a feasible solution.
\end{lemma}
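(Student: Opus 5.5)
The plan is to first derive a closed form for the flows, which the text attributes to \cref{lem: isolate flow lemma}. For non-root $i$, define the subtree payment $M_i^r = \sum_{j \in \subTree(i)} p_j^r D_j^r(p_j^r)$. I claim that $p_{\parent(i)}^r f_i^r = M_i^r$. This is proved by induction from the leaves upward. Money Conservation \eqref{eq:model:money} at $i$ gives
\[
p_{\parent(i)}^r f_i^r = p_i^r D_i^r(p_i^r) + p_i^r \sum_{j\in\ch(i)} f_j^r .
\]
By the inductive hypothesis, $f_j^r = M_j^r / p_i^r$ for each child $j$, since $\parent(j)=i$. Substituting, $p_i^r \sum_j f_j^r = \sum_{j\in\ch(i)} M_j^r$, and therefore $p_{\parent(i)}^r f_i^r = p_i^r D_i^r(p_i^r) + \sum_{j} M_j^r = M_i^r$. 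The base case is a leaf, where the sum over children is empty. Positivity of the prices lets us write $f_i^r = M_i^r / p_{\parent(i)}^r$.

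Next I substitute into Flow Conservation. Because $p_i^r>0$, the complementarity in \eqref{eq:model:flow} forces equality. On the left, $D_i^r(p_i^r) + \sum_{j\in\ch(i)} f_j^r = \frac{1}{p_i^r}\bigl(p_i^r D_i^r(p_i^r) + \sum_j M_j^r\bigr) = M_i^r / p_i^r$. On the right, $f_i^r = M_i^r/p_{\parent(i)}^r$, which is $0$ at the root. Moving the flow terms to one side and the conversion terms to the other gives exactly \eqref{eq: payment eq}, with the indicator accounting for the root. Conversion Complementarity is inherited unchanged.

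For the converse, take any $(p,c)$ with positive prices satisfying \eqref{eq: payment eq} and \eqref{eq:model:conversion}. Define $f_i^r = M_i^r/p_{\parent(i)}^r \ge 0$ for non-root $i$.
\begin{itemize}
\item Money Conservation holds by reversing the telescoping computation above: $p_i^r(D_i^r + \sum_j f_j^r) = M_i^r = p_{\parent(i)}^r f_i^r$.
\item Flow Conservation holds with equality by reversing the substitution, so complementarity with $p_i^r \ge 0$ is satisfied.
\end{itemize}
Uniqueness follows because Money Conservation alone, together with positive prices, forces $f_i^r = M_i^r/p_{\parent(i)}^r$ by the same leaf-to-root induction: any feasible $f$ must coincide with the one constructed.

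I do not expect a real obstacle here. The only care needed is to set up the induction in the right order, from leaves to root, so that the children's flows are already known when handling $i$. It also helps to note explicitly that positivity of $p$ is used in two places: to divide by the prices, and to turn the complementarity inequality in \eqref{eq:model:flow} into an equality.
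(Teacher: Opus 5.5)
Your proof is correct and follows the paper's argument. The paper factors the leaf-to-root induction $p_{\parent(i)}^r f_i^r = \sum_{j\in\subTree(i)} p_j^r D_j^r(p_j^r)$ out as \cref{lem: isolate flow lemma}, substitutes it into Flow Conservation (an equality because prices are positive), and handles the converse and the uniqueness of $f$ through the same closed form, exactly as you do inline.
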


We have thus reduced the problem to one in the variables $(p, c)$ alone, subject to Conversion Complementarity (\cref{eq:model:conversion}) and the Modified Flow Conservation \cref{eq: payment eq}.

\section{Harmonic Demands}
\label{sec:solution}

In this section, we examine the case when agents' demands are harmonic functions, i.e., for every $i,r$, $D_i^r(p) = \frac{B_i^r}{p}$ for some $B_i^r \ge 0$.
While simple, these demands capture an agent $i$ who has allocated $B_i^r$ budget for resource $r$, and wishes to get as many resources as possible given this restriction.
In addition, solving this simpler problem will be a key step for calculating the solution for more complicated demand functions.
Specifically, in \cref{sec:capped} we use the algorithms and techniques from this section as a subroutine to solve the problem for more general demand functions.
These include the capped demands $D_i^r(p) = \min\big\{ \frac{B_i^r}{p}, d_i^r \big\}$, where an agent allocates a budget $B_i^r$ for a resource but also caps at $d_i^r$ the amount it wants.

We will show that there exist effectively unique feasible prices and that we can calculate them via a convex optimization approach. Throughout this section we assume, following \cref{remark:positive_prices}, that all prices are strictly positive; degenerate subgraphs, where prices may be zero, have already been removed in \cref{sec:model}.
Due to \cref{lem: problem reformulation} and the form of each $D_i^r(\cdot)$, we need to find prices and conversions $(p, c)$ satisfying:
\begin{enumerate}
    \item \textbf{(Modified Flow Conservation)}
    For all agent and resource pairs $(i, r)$:
    \begin{equation} \label{eq:harmonic:flow}
        \left(\frac{1}{p_i^r} - \frac{\ind{i \neq \text{root}}}{p_{\parent(i)}^r}\right)\sum_{j \in \subTree(i) } B_j^r = s_i^r + \sum_{r' \in \inEdges(r)} c_i^{(r', r)} - \sum_{r' \in \outEdges(r)} c_i^{(r, r')}.    
    \end{equation}

    \item \textbf{(Conversion Complementarity)} For agents $i$ and resources $r$ and $r'$ such that $(r, r') \in \convGraph$:
    \begin{equation} \label{eq:harmonic:conversion}
        p_i^{r} \geq p_i^{r'} \quad\quad \bot \quad\quad c_i^{(r, r')} \geq 0.
    \end{equation}
\end{enumerate}

We shall show that we can exactly find a feasible solution to this problem using a sequence of convex optimization problems.
To see this, we make the critical observation that for the root agent, \cref{eq:harmonic:flow,eq:harmonic:conversion} depend only on its prices.
For non-root agents $i$, given $p_{\parent(i)}^r$ for all $r\in [m]$ in a feasible solution, \cref{eq:harmonic:flow,eq:harmonic:conversion} do not depend on prices of any other agents.
Hence, if we find prices from the root going downwards, such that at every step \cref{eq:harmonic:flow,eq:harmonic:conversion} hold, we can fix those prices and deduce the price of the children.
The key idea is then to show that we can indeed find $(p_i, c_i)$ satisfying the \cref{eq:harmonic:flow,eq:harmonic:conversion} for agent $i$, having fixed the prices of $\parent(i)$.
Specifically, we will show that the following optimization problem induces $(p_i, c_i)$ satisfying \cref{eq:harmonic:flow,eq:harmonic:conversion}:
\begin{equation}
    \label{eq: optimization problem}
    \max_{p_i^1, \cdots, p_i^m \geq 0}  F_i(p_i)\quad \text{s.t.} \quad p_i^r \geq p_i^{r'}\;\; \forall (r, r') \in \convGraph
\end{equation}
where
\begin{equation}\label{eq: optimization objective}
    F_i(p_i) = \sum_{r \in [m]}\log(p_i^r)\sum_{j \in \subTree(i) } B_j^r- \sum_{r \in [m]}p_i^r\left(\frac{\ind{i \neq \text{root}}}{p_{\parent(i)}^r}\sum_{j \in \subTree(i) } B_j^r+ s_i^r\right)
\end{equation}
which is a concave function since budgets are non-negative.
The main theorem of this section is that any optimal solution $p$ to this concave program, along with the optimal dual variables, satisfies \cref{eq:harmonic:flow,eq:harmonic:conversion}.

\begin{theorem} \label{thm:harmonic:main}
    Fix an agent $i \in [n]$, along with prices $p_{\parent(i)}^r \in \R_+$ for all $r$ if $i$ is not the root.
    Then, given a set of prices $p_i$ solving program \eqref{eq: optimization problem}, there exist conversions $c_i$ satisfying \cref{eq:harmonic:flow,eq:harmonic:conversion} for agent $i$.
    In addition, each $p_i^r$ is unique if $\sum_{j \in \subTree(i) } B_j^r > 0$ or $s_i^r > 0$.
\end{theorem}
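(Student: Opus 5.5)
Write $\beta^r := \sum_{j \in \subTree(i)} B_j^r$ and $a^r := \frac{\ind{i\neq\text{root}}}{p_{\parent(i)}^r}\beta^r + s_i^r$. Then $F_i(p_i) = \sum_r \beta^r \log p_i^r - \sum_r a^r p_i^r$. The plan is to read off the KKT conditions of program \eqref{eq: optimization problem}, with one multiplier $c_i^{(r,r')} \ge 0$ for each no-arbitrage constraint $p_i^r - p_i^{r'} \ge 0$, $(r,r')\in\convGraph$. These multipliers will serve as the conversions.

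\textbf{Step 1: KKT applies.} The objective is concave: it is a sum of $\beta^r\log$ terms with $\beta^r\ge0$ and linear terms. The constraints are linear, so Slater-type qualification is automatic and linear constraints need no further condition. I must also handle the boundary $p_i^r \ge 0$.
\begin{itemize}
\item If $\beta^r>0$, then $\log p_i^r \to -\infty$ forces $p_i^r>0$ at any optimum.
\item Coordinates with $\beta^r = 0$ are the delicate ones. Under the standing assumption of \cref{remark:positive_prices} (no degenerate subgraphs) I will argue that every optimum has $p_i^r>0$. If needed, I will instead treat the nonnegativity constraint explicitly and show its multiplier vanishes.
\end{itemize}
Since every optimal $p_i$ is interior with respect to nonnegativity, the KKT conditions hold at that $p_i$ with some multipliers $c_i \ge 0$.

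\textbf{Step 2: match KKT to the target equations.} Stationarity in $p_i^r$ gives
\[
\frac{\beta^r}{p_i^r} - a^r + \sum_{r'\in\outEdges(r)} c_i^{(r,r')} - \sum_{r'\in\inEdges(r)} c_i^{(r',r)} = 0 .
\]
Rearranging, this is exactly \cref{eq:harmonic:flow}: $\bigl(\frac{1}{p_i^r} - \frac{\ind{i\neq \text{root}}}{p_{\parent(i)}^r}\bigr)\beta^r = s_i^r + \sum_{\inEdges} c - \sum_{\outEdges} c$. Complementary slackness together with primal and dual feasibility gives \cref{eq:harmonic:conversion}. An optimum exists because for $p$ large the linear term dominates whenever $a^r>0$. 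Coordinates with $a^r=0$ need a separate check: either $\beta^r = 0$ as well, or $\beta^r>0$ makes the program unbounded unless the coordinate is tied by constraints to a coordinate with positive linear cost. Since the theorem is stated as ``given a solution'', I only need the KKT direction, so existence can be deferred.

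\textbf{Step 3: uniqueness.} Take two optima $p, p'$. The optimal set is convex and $F_i$ is constant on it. Along the segment between them, each term $\beta^r \log p^r$ is strictly concave in the coordinate $p^r$ whenever $\beta^r > 0$. Constancy of $F_i$ along the segment therefore forces $p^r = p'^r$ for every $r$ with $\beta^r>0$. This covers the first case of the uniqueness claim.

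For the case $\beta^r=0$ but $s_i^r>0$, the objective's dependence on $p^r$ is the strictly decreasing term $-s_i^r p^r$. I will use the following argument: the optimal set is a face on which $F_i$ is linear in the remaining free coordinates. If $p^r$ varied along it, the linear coefficient $-a^r<0$ would let us lower $p^r$ and strictly improve $F_i$. Such a move is blocked only by active constraints $p^r \ge p^{r'}$ with $p^{r'}$ itself pinned. I would formalize this as follows: $r$ with minimal value in the direction $p-p'$ can be moved toward the smaller value, contradicting optimality unless the difference is zero.

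\textbf{Main obstacle.} I expect Step 3's second case, and the positivity of coordinates with $\beta^r=0$ in Step 1, to be the hardest part. Both require care with chains of tight conversion constraints in the DAG $\convGraph$.
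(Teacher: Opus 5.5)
Your Steps 1--2 and the strict-concavity half of Step 3 are the paper's argument. The conversions are the KKT multipliers of the no-arbitrage constraints, and $\beta^r\log p^r$ pins every coordinate with $\beta^r>0$. Your Step 1 is more careful than the paper, which silently assumes interior prices. The positivity you promise does follow from non-degeneracy, but you should write it out. The set $\subTree(i)\times\{r' : r' \text{ reachable from } r \text{ in } \convGraph\}$ is absorbing, so it contains some $(j,r')$ with $B_j^{r'}>0$. Then $\beta^{r'}>0$ forces $p_i^{r'}>0$, and the constraint chain gives $p_i^r\ge p_i^{r'}>0$. Without this, a sink $r$ with $\beta^r=0<s_i^r$ has optimum $p_i^r=0$, and \eqref{eq:harmonic:flow} fails.

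The genuine gap is the second uniqueness case, $\beta^r=0<s_i^r$. You rely on the principle that a coordinate with strictly positive linear cost cannot vary over the optimal face. That is false for general polyhedra: minimize $x_1+x_2$ subject to $x_1+x_2\ge 1$, $x\ge 0$. So it has to come from the structure of $\mathcal{K}$, which your sketch never uses. The single-coordinate move you propose also fails. Suppose $p^r>p'^r$ and you lower $p^r$. The move is blocked whenever some $p^r\ge p^{r'}$ with $r'\in\outEdges(r)$ is tight. In that case $p^{r'}=p^r>p'^r\ge p'^{r'}$, so the blocking coordinate is never ``pinned''; it differs between the optima too. If you instead lower an extremal coordinate, it may have $s_i^{r}=0$ and give no improvement. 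The conclusion should also read ``unless $s_i^r=0$'', not ``unless the difference is zero''.

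The missing idea is that $\mathcal{K}$ is a lattice. The componentwise minimum $p\wedge p'$ of two optima is feasible, and it lowers \emph{all} coordinates where $p$ exceeds $p'$ at once. The optima agree wherever $\beta^r>0$, and $a^r=s_i^r$ when $\beta^r=0$, so
\[
0\le F_i(p)-F_i(p\wedge p')=-\sum_{r:\,\beta^r=0}s_i^r\bigl(p^r-\min\{p^r,p'^r\}\bigr)\le 0 .
\]
Hence every summand vanishes. Repeating this with $p'$ in place of $p$ gives $s_i^r=0$ wherever $p^r\neq p'^r$. This is exactly how the paper closes the argument.
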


The first part of this result follows from a careful analysis of the KKT conditions. In doing so, we will see that the optimal conversions are exactly the dual variables from the corresponding constraints. To show uniqueness of a solution, we analyze when the optimization objective is strongly concave, and characterize the form it takes when it is not.

\begin{corollary} \label{cor:harmonic:uniqueness}
    In the harmonic demand setting, every feasible solution $(f, p, c)$ is effectively unique (\cref{def:model:uniqueness}): for every $i, r$, either $B_i^r > 0$ which makes $p_i^r$ and $D_i^r(p_i^r)$ unique, or $B_i^r = 0$ which makes $D_i^r(p) = 0$ for all $p$.
\end{corollary}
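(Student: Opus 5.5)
The plan is to show, by top-down induction on $\agentTree$, that for every pair $(i,r)$ with $S_i^r := \sum_{j \in \subTree(i)} B_j^r > 0$, the price $p_i^r$ takes the same value in every feasible solution. The corollary then follows at once. If $B_i^r > 0$ then $S_i^r > 0$, so $p_i^r$ is unique and hence $D_i^r(p_i^r) = B_i^r/p_i^r$ is unique. If $B_i^r = 0$ then $D_i^r \equiv 0$ regardless of the price.

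Following \cref{remark:positive_prices}, we work with positive prices. Pairs lying in degenerate subgraphs receive the same allocation in every feasible solution by \cref{thm:degenerate_and_prices}, so they can be set aside. By \cref{lem: problem reformulation}, every feasible solution yields $(p,c)$ satisfying \cref{eq:harmonic:flow,eq:harmonic:conversion}.

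The key step is a converse to \cref{thm:harmonic:main}: for each agent $i$, the vector $p_i$ of any feasible solution is an optimal solution of program \eqref{eq: optimization problem}, where $F_i$ is built from that solution's parent prices. The argument goes through KKT sufficiency. The program maximizes a concave function over linear constraints $p_i^r - p_i^{r'} \ge 0$, so any point satisfying the KKT conditions is optimal. The nonnegativity constraints are inactive because prices are positive. Take the multiplier of the constraint for edge $(r,r')$ to be $c_i^{(r,r')} \ge 0$. Then:
\begin{itemize}
\item The stationarity condition $\partial F_i/\partial p_i^r + \sum_{r' \in \outEdges(r)} c_i^{(r,r')} - \sum_{r' \in \inEdges(r)} c_i^{(r',r)} = 0$ reads $S_i^r/p_i^r - \ind{i\neq\text{root}}S_i^r/p_{\parent(i)}^r - s_i^r + \sum_{\outEdges} c - \sum_{\inEdges} c = 0$. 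This is exactly \cref{eq:harmonic:flow}.
\item Complementary slackness and dual feasibility are exactly \cref{eq:harmonic:conversion}.
\end{itemize}
I expect the main care to be needed in getting these signs right against the paper's convention for conversions.

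For the induction itself, at the root $F_{\text{root}}$ depends on no other prices. It is therefore the same program for every feasible solution, and the uniqueness clause of \cref{thm:harmonic:main} gives uniqueness of $p_{\text{root}}^r$ whenever $S_{\text{root}}^r > 0$.

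For a non-root $i$, the parent prices enter $F_i$ only through the terms $p_i^r S_i^r / p_{\parent(i)}^r$. These terms vanish when $S_i^r = 0$. When $S_i^r > 0$, we have $S_{\parent(i)}^r \ge S_i^r > 0$, because $\subTree(i) \subseteq \subTree(\parent(i))$. So $p_{\parent(i)}^r$ is unique by the induction hypothesis. Hence $F_i$ is the same function for all feasible solutions, and each solution's $p_i$ optimizes it. The uniqueness part of \cref{thm:harmonic:main} then yields uniqueness of $p_i^r$ whenever $S_i^r > 0$, which completes the induction.

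The subtle point is exactly this observation: we need uniqueness of parent prices only for resources that actually appear in $F_i$. Without it, non-unique parent prices for budget-free resources could change the child's program.
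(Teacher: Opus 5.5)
Your proof is correct and follows the paper's route. The paper's proof of \cref{thm:harmonic:main} uses KKT sufficiency to show that any positive-price $(p,c)$ satisfying \cref{eq:harmonic:flow,eq:harmonic:conversion} is optimal for program \eqref{eq: optimization problem}, uses strict concavity of the $\log$ terms to get uniqueness of $p_i^r$ whenever $\sum_{j\in\subTree(i)} B_j^r>0$, and the corollary is read off from this; your top-down induction (noting that $F_i$ depends on $p_{\parent(i)}^r$ only when $S_i^r>0$, in which case $S_{\parent(i)}^r>0$ too) makes explicit a step the paper leaves implicit, namely that different feasible solutions induce the same program at each agent.
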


\section{General Demands}
\label{sec:capped}
\label{subsec: proof of structural result}

In this section, we consider the general setting in which demand functions are not necessarily harmonic.
Specifically, we consider demand functions that satisfy the following conditions.

\begin{assumption} \label{assumption:demands}
    For every $i \in [n]$ and $r \in [m]$ we assume that the demand function $D_i^r(\cdot)$ is such that:
    (a) $D_i^r(p)$ is weakly decreasing in $p$,
    (b) $p \cdot D_i^r(p)$ is weakly increasing in $p$, and 
    (c) $p \cdot D_i^r(p)$ is bounded, i.e., $\lim_{p \to \infty} p \cdot D_i^r(p) < \infty$.
\end{assumption}

The above conditions generalize the harmonic setting to demand functions.
Specifically, we allow demands that are weakly decreasing (similar to the harmonic case), but we also require that the payment generated ($p \cdot D(p)$) is weakly increasing.
This captures the natural condition that the higher the price, the more the node's payment will be.
Another way to explain the last condition is that $D(p)$ cannot be decreasing too fast, e.g., it cannot be that $D(p) \propto 1/p^2$ around some price $p$.
Our third condition simply states that while the payment increases with the price, this cannot be unbounded, capturing that no player has infinite budget.

\cref{assumption:demands} captures the natural demand model we study in \cref{sec:simulations} where the demand of $(i, r)$ is described by two parameters, a budget $B_i^r$ and a maximum demand $d_i^r$: $D_i^r(p) = \min\big\{ B_i^r/p, d_i^r \big\}$.
These capped harmonic demands capture the interplay between agents wanting as much of each resource as their budget will allow, up to a maximum amount that they are able to gain value from.

The general demand setting possesses additional technical difficulties compared to the harmonic setting.
Specifically, in the harmonic setting, we have that $\sum_{j \in \desc(i) } p_{j}^r\;D_j^r(p_{j}^r)$ is a constant independent of the agents' prices.
In this general setting, this fails to hold in general.
Thus, our technique of solving a sequence of independent optimization problems from the root down no longer works.
Hence, we introduce a new algorithm for the problem that uses a black-box solver for the harmonic setting as a subroutine. 

Consider an instance with agent tree $\agentTree$, conversion graph $\convGraph$ and supplies $s_i^r$, along with capped harmonic demands of the form $D_i^r(p) = \min\big\{ B_i^r / p, d_i^r\big\}$.
In addition, we consider a parametric instance with harmonic demands, parametrized by \textit{virtual budgets} $\hat B = \qty\big( \hat{B}_i^r )_{i, r}$.
Let $\qty\big( p(\hat{B}), c(\hat{B}) )$ be solutions to this harmonic  problem\footnote{One might wonder what happens when there are non-unique solutions to the harmonic problem, i.e., if $p(\hat{B})$ is a set with more than one element. As we showed in \cref{thm:harmonic:main}, the prices are unique, unless they do not appear in the optimization objective. This makes them either unique or non-consequential. Therefore, for simplicity, we assume that $p(\hat{B})$ is unique. We offer a detailed discussion of this in \cref{subsec: justify uniqueness}.}.
Consider what would happen if we set the virtual budgets equal to the actual budgets, $\hat B = B$, and calculate the solution to the harmonic problem $(p(B), c(B))$.
We examine two cases.

First, consider that for all $i,r$, it holds that $\hat B_i^r / p_i^r(B) \leq d_i^r$.
Then, these prices (and their conversions and flows) would also induce a feasible solution in the capped harmonic setting with budgets $B$ and maximum demands $d$, because the demands ``agree:'' $B_i^r / p_i^r(B) = D_i^r\qty\big( p_i^r(B) )$.
In this ``good'' scenario, the maximum demands are not binding, and we can treat the whole problem as one with harmonic demands.

Now consider that for some $i, r$, it holds that $\hat B_i^r / p_i^r(B) > d_i^r$.
Then, we can interpret this as agent $i$ allocating \emph{too much budget} for resource $r$.
Hence, a natural remedy would be to reduce this budget and use a virtual budget $\hat B_i^r < B_i^r$.
However, this may affect the prices of other agents and resources, creating complicated dependencies on how one should reduce budgets that are too high.
That said, if we could find a new budget vector $\hat B$ such that in the induced solution it holds that $\hat B_i^r / p_i^r(\hat B) = D_i^r\qty\big( p_i^r(\hat B) )$ for all $i,r$, then this solution to the harmonic problem is also a solution for the general one, as we show next.
We formalize and generalize this to any demands with our next theorem.
We present this theorem under our standard assumption of \cref{remark:positive_prices} that our instance has positive prices.

\begin{theorem}\label{thm: budget bound}
    Fix a problem with demands $D_i^r(p)$ for $i,r$.
    Fix virtual budgets $\hat B = (\hat B_i^r)_{i,r}$ and let $\qty\big( p(\hat B), c(\hat B) )$ be optimal solutions to the harmonic problem with budgets $\hat B$.
    Assume the prices $p(\hat B)$ are positive and that the allocations between the two problems match: $\hat B^r_i / p_i^r(\hat B) = D_i^r\qty\big(p_i^r(\hat B))$ for all $i,r$.
    Then, $\qty\big(p(\hat{B}), c(\hat{B}))$ are feasible price and conversion vectors for the original problem.
    Furthermore, for any feasible positive prices $p$ for the general problem, setting $\hat{B}^r_i = p^r_i \cdot D_i^r(p^r_i)$ satisfies $\hat{B}^r_i / p_i^r(\hat B) = D_i^r(p_i^r(\hat B))$.
\end{theorem}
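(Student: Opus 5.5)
The whole argument runs through the reformulation of \cref{lem: problem reformulation}. Under positive prices, a pair $(p,c)$ is feasible for an instance (after recovering flows) if and only if it satisfies the Modified Flow Conservation \cref{eq: payment eq} and Conversion Complementarity \cref{eq:model:conversion}. The key observation is that the two instances, one with demands $D$ and one with harmonic demands $\hat B/p$, share the same tree, conversion graph and supplies. They differ only through the payment terms $p_j^r D_j^r(p_j^r)$ that enter \cref{eq: payment eq}. So both directions of \cref{thm: budget bound} reduce to checking that these payment terms coincide. The conversion constraints involve no demands, so they transfer verbatim.

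\textbf{First part.} By \cref{thm:harmonic:main}, applied top-down from the root, the optimal $(p(\hat B),c(\hat B))$ satisfies \cref{eq:harmonic:flow} and \cref{eq:harmonic:conversion} at every agent. The prices are positive by assumption, so I can multiply the hypothesis $\hat B_j^r/p_j^r(\hat B)=D_j^r(p_j^r(\hat B))$ by $p_j^r(\hat B)$. This gives $\hat B_j^r = p_j^r(\hat B)\,D_j^r(p_j^r(\hat B))$ for every $(j,r)$. Summing over $j\in\subTree(i)$ shows that the left-hand side of \cref{eq:harmonic:flow} equals the left-hand side of \cref{eq: payment eq} for the original demands $D$. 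Hence $(p(\hat B),c(\hat B))$ satisfies \cref{eq: payment eq} together with Conversion Complementarity. The converse part of \cref{lem: problem reformulation} then supplies a unique flow $f$ making $(f,p(\hat B),c(\hat B))$ feasible for the original problem.

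\textbf{Second part.} Let $(f,p,c)$ be feasible for the general problem with $p>0$, and set $\hat B_i^r=p_i^r D_i^r(p_i^r)$. By \cref{lem: problem reformulation}, $(p,c)$ satisfies \cref{eq: payment eq}, and with this choice of $\hat B$ that equation is exactly \cref{eq:harmonic:flow}. So $(p,c)$ is a feasible solution of the harmonic instance with budgets $\hat B$.

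It remains to show that $p$ coincides with $p(\hat B)$, the optimizer of the programs \eqref{eq: optimization problem}. I would argue agent by agent from the root down; this step is the main obstacle. Assume $p_{\parent(i)}=p_{\parent(i)}(\hat B)$ has already been established. Then \cref{eq:harmonic:flow} and \cref{eq:harmonic:conversion} at $i$, with $c_i$ serving as multipliers for the constraints $p_i^r\ge p_i^{r'}$, are precisely the KKT conditions of the concave program \eqref{eq: optimization problem}. By concavity, KKT is sufficient, so $p_i$ is optimal.

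The uniqueness clause of \cref{thm:harmonic:main} then gives $p_i^r=p_i^r(\hat B)$ whenever $\sum_{j\in\subTree(i)}\hat B_j^r>0$ or $s_i^r>0$. In the remaining case the coordinate $p_i^r$ does not appear in the objective, and we follow the convention of the footnote and \cref{subsec: justify uniqueness}, choosing $p(\hat B)$ to agree there. In that case $\hat B_i^r=0$, so $\hat B_i^r/p = 0 = D_i^r(p_i^r)$; this holds because $p_i^r>0$ forces $D_i^r(p_i^r)=0$. The claimed identity $\hat B_i^r/p_i^r(\hat B)=D_i^r(p_i^r(\hat B))$ then follows directly from the definition of $\hat B$.
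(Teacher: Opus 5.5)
Your proposal is correct and follows essentially the same route as the paper. Both directions reduce, through \cref{lem: problem reformulation}, to the fact that the payment terms $p_j^r D_j^r(p_j^r)$ and $\hat B_j^r$ coincide; your root-down KKT-sufficiency plus uniqueness argument simply spells out the paper's one-line claim that prices satisfying \cref{eq:harmonic:flow} for budgets $\hat B$ ``are equal to'' $p(\hat B)$, and in the pass-through coordinates you could drop the selection convention, since under \cref{assumption:demands} having $D_i^r(p_i^r)=0$ at one positive price forces $D_i^r\equiv 0$, so the identity holds for any choice of $p_i^r(\hat B)>0$.
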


\begin{proof}
    The two problems differ only in the demand functions, which, when evaluated at $p(\hat B)$, are the same.
    This means that at positive prices $p(\hat B)$, all axioms are satisfied for both problems.

    To see the reverse result, we note that the prices $p$ from the solution to the capped harmonic problem satisfy the Modified Flow Conservation \eqref{eq:harmonic:flow} for $\tilde{D}_i^r(x) = \frac{p_{i}^r\;D_i^r(p_{i}^r)}{x} = \frac{\hat{B}_i^r}{x} $.
    Hence, these prices are equal to $p^r_i(\hat{B})$.
\end{proof}

\begin{algorithm}[t]
\DontPrintSemicolon
\caption{Budget Descent Algorithm}
\label{algo:BDA}
\KwIn{Problem Instance $(\agentTree, \convGraph, s, D)$, maximum iterations $T$}

Let $p(\tilde B)$ be the optimal prices in the harmonic problem with budgets $\tilde B$: $(\agentTree, \convGraph, s, \tilde  B)$

For every $i, r$ set the initial virtual budgets $\tilde B_i^r[0] = \lim_{p \to \infty} \qty\big( p\cdot D_i^r(p) )$

\For{$t = 1, 2, \cdots, T$}
{
    For every $i,r$ update the virtual budgets
    \begin{minipage}{.913\textwidth}
    \begin{equation} \label{eq:update_rule}
        \tilde B_i^r[t]
        = E_i^r\qty( p_i^r\qty\big(\tilde B[t-1]) ) \quad\quad\quad \text{where } E_i^r(p) = p\cdot D_i^r(p)
    \end{equation}
    \end{minipage}
}

\Return prices $p\qty\big(\tilde B[T])$ and the corresponding conversions $c$

\end{algorithm}

Following the cue of this result, we introduce the \textit{Budget Descent Algorithm} (\cref{algo:BDA}).
In this algorithm, we initialize virtual budgets to the highest payment $\tilde B_i^r[0] = \lim_{p \to \infty} \qty\big( p\cdot D_i^r(p) )$ (which is bounded by \cref{assumption:demands}).
We then iteratively decrease them based on the calculated prices $p(\tilde B[t])$ and the general demands (we prove next that the virtual budgets decrease across iterations).
In every iteration, the new virtual budgets equal the payment generated if the current prices were used, where $E_i^r(p) = p \cdot D_i^r(p)$ denotes the payment at price $p$.
This update rule offers multiple advantages.
First, unlike many update rules in iterative algorithms (such as Gradient Descent), it is parameter-free, making it easily implementable without having to tune hyperparameters; we explore this for real datasets in \cref{sec:simulations}.
Second, as we prove next, it guarantees that there is never resource underutilization.
Specifically, we prove that $\tilde B_i^r[t] / p_i^r(\tilde B[t]) \ge D_i^r\qty\big(p_i^r(\tilde B[t]))$, i.e., under the calculated prices, the allocation in the harmonic problem is at least the allocation in the general problem.
Combining this inequality with the update rule \cref{eq:update_rule}, we also get that the virtual budgets decrease over time, as promised in the beginning.
Both of these facts follow from the following technical result, which proves that if the virtual budgets decrease, their corresponding prices also decrease.

\begin{lemma}\label{lem: key structural lemma}
    For $\hat B$ and $\tilde B$ such that $\hat B\preceq \tilde B$, it holds that $p(\hat{B}) \preceq p(\tilde B)$.\footnote{We say $x \preceq y$ if $x_i \leq y_i \;\forall i$.} 
\end{lemma}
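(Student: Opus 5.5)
We prove the monotonicity by induction down the tree $\agentTree$, from the root to the leaves, using the sequential structure of the harmonic problem. By \cref{thm:harmonic:main}, $p_i(\hat B)$ is the maximizer of program \eqref{eq: optimization problem}. The objective $F_i$ depends on $\hat B$ only through the subtree sums $\beta_i^r = \sum_{j\in\subTree(i)} \hat B_j^r$, and on the parent only through the ratios $\beta_i^r / p_{\parent(i)}^r$. It is cleaner to substitute $q_i^r = \log p_i^r$. The feasible set $\{q : q^r \ge q^{r'} \ \forall (r,r')\in\convGraph\}$ is then a lattice under componentwise max/min, since the constraints are of difference form. The objective becomes
\[
G_i(q_i;\theta) = \sum_r \beta_i^r q_i^r - \sum_r e^{q_i^r}\Big(\beta_i^r e^{-q_{\parent(i)}^r}\ind{i\neq\text{root}} + s_i^r\Big).
\]
This is separable in the coordinates of $q_i$, so it is trivially supermodular in $q_i$.

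Next we check increasing differences in the parameters $(\beta_i, q_{\parent(i)})$, ordered so that larger budgets and larger parent prices count as "higher". The cross partial of $G_i$ in $q_i^r$ and $\beta_i^r$ is $1 - e^{q_i^r - q_{\parent(i)}^r}$. Its sign is not obvious; it is nonnegative exactly when $p_i^r \le p_{\parent(i)}^r$. The cross partial in $q_i^r$ and $q_{\parent(i)}^r$ is $\beta_i^r e^{q_i^r - q_{\parent(i)}^r} \ge 0$, which is fine. To deal with $\beta$, I would reparametrize by $\gamma_i^r = \log \beta_i^r$. In $(q,\gamma)$ the relevant term is $\beta(q - e^{q-q_\pi})$, and its cross derivative in $q$ and $\gamma$ is $e^{\gamma}(1 - e^{q-q_\pi})$, so the same sign issue persists.

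The main obstacle is therefore the budget-increase step. My plan is to first establish, as an a priori fact, that optimal prices satisfy $p_i^r \le p_{\parent(i)}^r$ wherever $\beta_i^r>0$. This should follow from \eqref{eq:harmonic:flow}. The right-hand side is $s_i^r$ plus the net conversion into $r$, so the required inequality is equivalent to that net quantity being nonnegative, which is not automatic when conversions flow out. If that fails, I would instead restrict Topkis to the region $\{q_i^r \le q_{\parent(i)}^r\}$, and separately argue via the KKT conditions that the optimum lies in a region where the increasing-differences inequality holds along the relevant comparison. Concretely, I would compare the two maximizers $q$ and $\hat q$ directly. Let $S=\{r: \hat q^r > q^r\}$. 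The standard exchange argument compares the objective values at $q\vee\hat q$ and $q\wedge\hat q$, and it needs increasing differences only on coordinates in $S$, where I can use the first-order conditions to control the sign.

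Granting the root step, with parent-independent parameters $\beta$ and $s$, Topkis's Theorem (\cref{thm:topkis}) gives $p_{\text{root}}(\hat B) \preceq p_{\text{root}}(\tilde B)$, because $\hat B\preceq\tilde B$ implies $\hat\beta\preceq\tilde\beta$. For the inductive step at node $i$, both parameters move in the same direction. The subtree sums decrease, and by the induction hypothesis the parent prices decrease. Applying Topkis jointly in $(\beta_i, q_{\parent(i)})$ yields $p_i(\hat B)\preceq p_i(\tilde B)$. Prices that are non-unique only arise when $\beta_i^r=s_i^r=0$ (\cref{thm:harmonic:main}). For those I choose the minimal or maximal selection consistently, following the uniqueness convention stated in the footnote, so that the set-valued Topkis conclusion gives the pointwise inequality.
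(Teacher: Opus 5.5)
You locate the obstacle correctly (the $(q_i^r,\beta_i^r)$ cross partial $1-e^{q_i^r-q_{\pi(i)}^r}$ has no fixed sign), but the proposal does not close it. As literally stated, the final step ``apply Topkis jointly in $(\beta_i,q_{\pi(i)})$'' is not licensed by \cref{thm:topkis}. That theorem requires nonnegative cross partials on the whole feasible lattice, and these fail wherever $q_i^r>q_{\pi(i)}^r$.

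Your exchange argument is the right repair, but its sign control does not come from first-order conditions at coordinate $r$. Let $q,\hat q$ be the maximizers for $\tilde B,\hat B$, let $\hat\theta\preceq\tilde\theta$ be the corresponding parameters $(\beta_i,q_{\pi(i)})$, let $S=\{r:\hat q^r>q^r\}$, and let $g_r$ be the $r$-th summand of $G_i$. You need $g_r(\hat q^r;\tilde\theta)-g_r(q^r;\tilde\theta)\ge g_r(\hat q^r;\hat\theta)-g_r(q^r;\hat\theta)$ for $r\in S$. The parent-price part of this difference has the right sign. The budget part equals $(\tilde\beta_i^r-\hat\beta_i^r)\int_{q^r}^{\hat q^r}\bigl(1-e^{u-\tilde q^r_{\pi(i)}}\bigr)\,du$, so what you need is $\hat q^r\le\tilde q^r_{\pi(i)}$. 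By induction this reduces to the a priori bound $\hat p_i^r\le\hat p_{\pi(i)}^r$ at the lower solution. You leave that bound as a hope, and you yourself note it does not follow from \eqref{eq:harmonic:flow} at coordinate $r$ when conversions flow out. It is the load-bearing step, and it is missing.

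The missing idea is to evaluate the flow equation at the end of a conversion chain rather than at $r$. Suppose $p_i^r>p_{\pi(i)}^r$ and $c_i^{(r,r_1)}>0$. Complementarity gives $p_i^{r_1}=p_i^r$, while the parent's no-arbitrage constraint gives $p_{\pi(i)}^{r_1}\le p_{\pi(i)}^r$, so the strict inequality propagates along the chain. Since $\convGraph$ is a DAG, you reach some $r^*$ with no outgoing conversion. There \eqref{eq:harmonic:flow} reads $\beta_i^{r^*}\bigl(1/p_i^{r^*}-1/p_{\pi(i)}^{r^*}\bigr)=s_i^{r^*}+(\text{inflow conversions})\ge 0$. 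This forces $\beta_i^{r^*}=s_i^{r^*}=0$, which is excluded for budgets in $\mathcal B$; this is \cref{lem: prices are decreasing}. The exchange argument then shows $\hat q\vee q$ also maximizes the $\tilde B$ problem, and uniqueness gives $\hat q\preceq q$.

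With this supplied, your route is correct and genuinely different from the paper's. The paper moves one parameter at a time, so its intermediate problem (higher budgets, parent prices $p_{\pi(i)}(\hat B)$) is not an actual equilibrium, and the bound is unavailable there. It therefore must restrict to $\mathcal H(\hat B)$ and then remove the restriction via the supply-perturbation argument of \cref{lem: topkis application to supply}. Moving both parameters together keeps both endpoints genuine equilibria. The bound at the lower one plus the induction hypothesis then suffices, and that extra step disappears.
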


Our main technical tool is Topkis's theorem; we state a simplified version of it \cite[Theorem 5]{milgrom1994monotone}.

\begin{theorem}[Topkis's Theorem (simplified)] \label{thm:topkis}
    Let $X \subseteq \R^n$ be a lattice\footnote{A set $X \subseteq \R^n$ is called a lattice if $x_1, x_2 \in X$ implies $x_1 \wedge x_2, x_1 \vee x_2 \in X$, where $\wedge$ and $\vee$ are the piece-wise minimum and maximum operators, respectively.} and $\Theta \subseteq \R^m$.
    For $x \in X$ and $\theta \in \Theta$ let $f(x ; \theta)$ be a twice continuously differentiable function such that $\frac{\partial^2 f}{\partial x_i\partial x_j} \ge 0$ and $\frac{\partial^2 f}{\partial x_i\partial \theta_k} \ge 0$ for any $x\in X$, $\theta \in \Theta$, $i \ne j$ and $k \in [m]$.
    Then $\argmax_{x \in X} f(x ; \theta)$ is weakly increasing in $\theta$. Specifically, when $\argmax_{x \in X} f(x ; \theta)$ has a unique solution for  $\theta \in \Theta$, for any $\theta \preceq \theta'$, $\argmax_{x \in X} f(x ; \theta) \preceq \argmax_{x \in X} f(x ; \theta')$.
\end{theorem}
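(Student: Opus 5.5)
The plan is the classical two-step argument: turn the sign conditions on second derivatives into the lattice inequalities of \emph{supermodularity} and \emph{increasing differences}, then run the standard exchange argument on maximizers. Throughout I assume $f(\cdot;\cdot)$ is $C^2$ on a box (a product of intervals) containing $X\times\Theta$. This is the setting of the application, where the feasible set $\{p_i : p_i^r \ge p_i^{r'}\}$ is a lattice and the objective is defined on all of $\R_{>0}^m$. The integration arguments below need segments between comparable points to stay inside the domain of $f$, not merely inside $X$.

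\textbf{Step 1 (from derivatives to lattice inequalities).} First I would show supermodularity in $x$:
\[
f(x\vee y;\theta)+f(x\wedge y;\theta)\ \ge\ f(x;\theta)+f(y;\theta)\qquad\text{for all } x,y\in X.
\]
Write $x\vee y = (x\wedge y)+a+b$ and $x = (x\wedge y)+a$, $y=(x\wedge y)+b$, where $a,b\succeq 0$ have disjoint supports. The inequality then reads
\[
f(z+a+b)-f(z+b)\ \ge\ f(z+a)-f(z).
\]
Both sides are integrals over $s\in[0,1]$ of $\nabla f\cdot a$, with base points differing by $b$. Their difference is the double integral of $a^\top \nabla^2 f\, b$. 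Because the supports of $a$ and $b$ are disjoint, this only involves cross partials $\partial^2 f/\partial x_i\partial x_j$ with $i\neq j$, which are nonnegative. The identical computation, with $b$ replaced by a displacement $\theta'-\theta\succeq 0$ in the parameter, uses $\partial^2 f/\partial x_i\partial\theta_k\ge 0$. It yields increasing differences: for $x'\succeq x$ and $\theta'\succeq\theta$,
\[
f(x';\theta')-f(x;\theta')\ \ge\ f(x';\theta)-f(x;\theta).
\]

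\textbf{Step 2 (exchange argument).} Fix $\theta\preceq\theta'$, and let $x\in\argmax_X f(\cdot;\theta)$ and $x'\in\argmax_X f(\cdot;\theta')$. Since $X$ is a lattice, both $x\wedge x'$ and $x\vee x'$ lie in $X$. Optimality of $x$ gives $f(x;\theta)-f(x\wedge x';\theta)\ge 0$. Supermodularity then gives $f(x\vee x';\theta)-f(x';\theta)\ge 0$. Increasing differences, applied to $x\vee x'\succeq x'$ and $\theta\preceq\theta'$, lifts this to
\[
f(x\vee x';\theta')-f(x';\theta')\ \ge\ 0 .
\]
Hence $x\vee x'$ also maximizes $f(\cdot;\theta')$. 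Symmetrically, $x\wedge x'$ maximizes $f(\cdot;\theta)$. This is the "weakly increasing in the strong set order" statement. When the maximizers are unique, $x'=x\vee x'\succeq x$, which is the stated conclusion.

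\textbf{Main obstacle.} Step 2 is routine; the only delicate point is Step 1. There, the sign condition on derivatives must be converted into global inequalities, and this requires that the straight segments used in the integration lie in the domain where the derivative hypotheses hold. I would handle this by stating the hypothesis on a box containing $X$, and noting that this is satisfied in the application in the proof of \cref{lem: key structural lemma}, since the objective \eqref{eq: optimization objective} is defined on all positive prices.
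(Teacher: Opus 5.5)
Your proof is correct, and it goes beyond the paper, which gives no argument for this statement. The paper only cites Milgrom--Shannon, whose monotonicity results are phrased through lattice conditions (supermodularity and increasing differences, or ordinal weakenings of these) rather than through signs of second derivatives.

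Your Step~2 is exactly the exchange argument behind that citation. Your Step~1 supplies the link from cross-partial signs to those lattice inequalities (Topkis's characterization), which the paper leaves implicit. The citation buys generality, since no smoothness is needed. Your version buys self-containment and brings the domain issue into view.

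That issue is real. If the sign conditions are imposed only at points of $X\times\Theta$, the statement fails for non-convex lattices. Take $X=\{0,1\}^2$, $\Theta=[-1,1]$, and $f(x;\theta)=-\theta h(x)$, with $h$ smooth, locally constant around each point of $X$, and $h(0,0)<h(1,0)=h(0,1)<h(1,1)$. Every required cross partial vanishes on $X\times\Theta$. Yet the unique maximizer is $(1,1)$ at $\theta=-1$ and $(0,0)$ at $\theta=1$.

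A box is more than you need, though. Since $z+sa+tb=(1-s)(1-t)(x\wedge y)+s(1-t)\,x+(1-s)t\,y+st\,(x\vee y)$, the points you integrate over in the supermodularity step are convex combinations of points of $X$. The increasing-differences step only integrates over the straight segments from $x$ to $x'$ (with $x\preceq x'$ in $X$) and from $\theta$ to $\theta'$. So convexity of $X$ and $\Theta$ suffices, and the hypothesis exactly as stated already covers the convex lattices $\mathcal{K}$ and $\mathcal{K}\cap\mathcal{H}$ used in the paper.

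This is also more accurate than your justification that the objective is defined on all positive prices. For the budget parameter, the mixed partial $1/p^r-1/p^r_{\parent(i)}$ is nonnegative only where $p^r\le p^r_{\parent(i)}$, not on the whole positive orthant. That region is again a box, so your conclusion still holds.
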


For the root $i$, one checks directly that $F_i( p_{i} ; \hat B)$ (\cref{eq: optimization objective}) satisfies the conditions of \cref{thm:topkis}, so the claim follows.
For a non-root $i$, the objective $F_i( p_{i}; \hat B)$ depends on $p_{\pi(i)}(\hat B)$, which need not be differentiable; we instead apply \cref{thm:topkis} to a sequence of parameters---the parent prices $p_{\pi(i)}$, then the budgets $\hat B$ (after restricting the feasible region to a ``good'' set), and finally the supplies $s_i^r$, which removes this restriction.

To apply \cref{thm:topkis}, it will be useful to define the following sets:

\begin{definition}
Let $\mathcal{K} = \{x \in \R^m_+: x^r \geq x^{r'}\;\forall(r,r') \in \convGraph\}$ be the feasible region in the optimization programs defining $p_{i}$, for any $i$.
Let $\mathcal{B}$ be the set of budgets $\hat B$ such that $\sum_{j \in \subTree(i) } \hat B_j^r > 0$ for all $(i, r)$ such that $s_i^r = 0$.
We note that these sets are convex.
For the rest of this section, we will assume $B[t],\bar B \in \mathcal{B}$ for all $t$.
Note that \cref{thm:harmonic:main} tells us that, for $\hat{B} \in \mathcal{B}$, the program in \cref{eq: optimization problem} has a unique solution.
In \cref{subsec: justify uniqueness}, we argue that this is without loss of generality.
\end{definition}

\begin{proof}[Proof of \cref{lem: key structural lemma}]
We first notice that $\mathcal{K}$ is a lattice:
$$\text{If }p_{i}^{r_1} \geq p_{i}^{r_2}\text{ and }\tilde p_{i}^{r_1} \geq \tilde p_{i}^{r_2}\text{, then }\min\{ p_{i}^{r_1}, \tilde p_{i}^{r_1}\} \geq \min\{ p_{i}^{r_2}, \tilde p_{i}^{r_2}\}\text{ and }\max\{ p_{i}^{r_1}, \tilde p_{i}^{r_1}\} \geq \max\{ p_{i}^{r_2}, \tilde p_{i}^{r_2}\}$$

We will prove the result inductively on the tree.
For the root, $F_i( p_{i} ; \hat B)$, the optimization function for the program defining $p^r_{i}(\hat{B})$ parametrized by $\hat B$ (\cref{eq: optimization objective}) is clearly differentiable in all the variables.
By taking appropriate derivatives, we see that the conditions of Topkis's Theorem hold, proving the base case.

We now consider a non-root agent $i$ and assume the result holds for their parent.
Fix $\hat B$ and $\bar B$ such that $\hat B \preceq \bar B$.
Our goal is to show that $p_{i}(\hat B) \preceq p_{i}(\bar B)$.
A natural hope would be to use the exact same idea as in the root case.
However, the objective function $F_i( p_{i} ; \hat B)$ depends on $p_{\pi(i)}(\hat B)$ which will not be differentiable everywhere in general.
Hence, we change our approach by taking $p_{\pi(i)}$ as an independent parameter of its own.
Then, $F_i( p_{i} ; \hat B, p_{\pi(i)})$ is differentiable in all its variables.

We first show that, taking $p_{\pi(i)}$ as the varying parameter, the conditions of Topkis's Theorem hold.
Thus, the optimal prices are weakly increasing in the function $p_{\pi(i)}$.
Furthermore, by our induction hypothesis $p_{\pi(i)}(\hat B) \preceq p_{\pi(i)}(\bar B)$.
Hence,
$$\argmax_{x \in \mathcal{K}} F_i(x; \bar B; p_{\pi(i)}(\hat B)) \preceq \argmax_{x \in \mathcal{K}} F_i(x; \bar B; p_{\pi(i)}(\bar B))  = p_i(\bar B).$$

We then fix $p_{\pi(i)}$ to $p_{\pi(i)}(\hat B)$ and vary the budget.
We show that the conditions of Topkis's Theorem hold if $p_{\pi(i)}(\hat B) \succeq p_i$ on the feasible region of the problem.
By \cref{lem: prices are decreasing}, this would be true if the budget parameter were $\hat B$.
However, when it is not, the inequality will not be true in general.
To remedy this, we restrict the feasible region of the problem to be contained in $\mathcal{H}(\hat B) = \{x \in \R^m: x^r \leq p^r_{\pi(i)}(\hat B)\}$.
Hence, Topkis's theorem holds on the problem $\argmax_{x \in \mathcal{K} \cap \mathcal{H}(\hat B)} F_i(x;\; \cdot \;; p_{\pi(i)}(\hat B))$.
Thus,
$$p_i(\hat B) = \argmax_{x \in \mathcal{K} \cap \mathcal{H}(\hat B)} F_i(x; \hat B; p_{\pi(i)}(\hat B)) \preceq \argmax_{x \in \mathcal{K} \cap \mathcal{H}(\hat B)} F_i(x; \bar B; p_{\pi(i)}(\hat B))$$
where the first equality follows by noticing that when the budget parameter is $\hat B$, the optimal price will be contained in $\mathcal{H}(\hat B)$ regardless of whether we have that constraint.
Hence, it remains to show that
$$\argmax_{x \in \mathcal{K} \cap \mathcal{H}(\hat B)} F_i(x; \bar B; p_{\pi(i)}(\hat B)) \preceq \argmax_{x \in \mathcal{K}} F_i(x; \bar B; p_{\pi(i)}(\hat B)).$$
This inequality follows by another application of Topkis's Theorem, taking the supplies $s_i^r$ as variables, and relating the KKT conditions of the former problems to that of a related optimization program.
We detail this proof step in \cref{lem: topkis application to supply}.
\end{proof}

This lemma guarantees that if we decrease the virtual budgets, the prices also decrease.
Because the update rule in \cref{eq:update_rule} is weakly increasing in the prices (smaller prices lead to smaller virtual budgets), we get that the virtual budgets decrease over the rounds of the algorithm.
The formal condition we need to guarantee that is that in every round $t$, we overestimate the demands.
We formally state this invariant in the next lemma.

\begin{restatable}{lemma}{BDAInvariant} \label{lem:capped:invariant}
    Assume that the demands satisfy \cref{assumption:demands}.
    In every round $t$ of \cref{algo:BDA} and for all $i \in [n]$ and $r \in [m]$ it holds
    \begin{equation}\label{eq:invariant}
        \tilde B_i^r[t] \ge  E_i^r\qty\big(p_i^r(\tilde B[t]))
        .
    \end{equation}
\end{restatable}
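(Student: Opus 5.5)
We argue by induction on $t$, using \cref{lem: key structural lemma} together with the monotonicity of $E_i^r$ from \cref{assumption:demands}(b). The invariant at round $t$ immediately yields the next round's budgets as a componentwise decrease. That decrease, pushed through the monotone price map, yields the invariant at round $t+1$.

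\emph{Base case $t=0$.} By \cref{assumption:demands}(b)--(c), $E_i^r(p) = p\,D_i^r(p)$ is weakly increasing and bounded, with supremum $\lim_{p\to\infty} E_i^r(p) = \tilde B_i^r[0]$. Hence for any positive price, and in particular for $p_i^r(\tilde B[0])$, we have $E_i^r(p_i^r(\tilde B[0])) \le \tilde B_i^r[0]$, which is \cref{eq:invariant} at $t=0$.

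\emph{Inductive step.} Assume \cref{eq:invariant} holds at round $t$. The update rule \cref{eq:update_rule} gives
\[
\tilde B_i^r[t+1] = E_i^r\bigl(p_i^r(\tilde B[t])\bigr) \le \tilde B_i^r[t] \quad \text{for all } i,r,
\]
so $\tilde B[t+1] \preceq \tilde B[t]$. By \cref{lem: key structural lemma}, $p(\tilde B[t+1]) \preceq p(\tilde B[t])$. Since $E_i^r$ is weakly increasing,
\[
E_i^r\bigl(p_i^r(\tilde B[t+1])\bigr) \le E_i^r\bigl(p_i^r(\tilde B[t])\bigr) = \tilde B_i^r[t+1],
\]
which is \cref{eq:invariant} at round $t+1$. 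As a byproduct, the induction also shows that the virtual budgets and the prices are monotonically nonincreasing across iterations.

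\emph{Main obstacle: well-definedness.} The delicate point is not the chain of inequalities but making sure the objects in it are defined. We need $p(\tilde B[t])$ to be unique and positive, so that $E_i^r$ is evaluated at a legitimate price and \cref{lem: key structural lemma} applies. We will rely on the standing assumptions $\tilde B[t] \in \mathcal{B}$ and \cref{remark:positive_prices}, together with the uniqueness discussion referenced in \cref{subsec: justify uniqueness}. If some price is non-unique because the corresponding $(i,r)$ does not enter the objective, then $\sum_{j\in\subTree(i)}\tilde B_j^r = 0$ and $s_i^r = 0$, so in particular $\tilde B_i^r = 0$. Since $E_i^r \ge 0$, the next update gives $\tilde B_i^r[t+1] = E_i^r(\cdot) \le 0$ for any choice of price, hence $\tilde B_i^r[t+1] = 0$. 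Such coordinates therefore never affect the argument. In those coordinates the price inequality should be read as the lemma being applied only to the identified (unique) coordinates, with any consistent selection for the rest.
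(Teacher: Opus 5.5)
Your proof is correct and follows the same route as the paper: induction on $t$, with the base case from $\tilde B_i^r[0]=\lim_{p\to\infty}E_i^r(p)$ being the supremum of the weakly increasing $E_i^r$, and the inductive step reading off $\tilde B[t+1]\preceq\tilde B[t]$ from the invariant, then applying \cref{lem: key structural lemma} and the monotonicity of $E_i^r$. Your well-definedness paragraph goes beyond the paper, which simply relies on the standing assumption $\tilde B[t]\in\mathcal{B}$ and the appendix on pass-through nodes. If you keep it, note that $E_i^r\ge 0$ does not give ``$E_i^r(\cdot)\le 0$ for any choice of price''; you need $E_i^r\equiv 0$ on those coordinates, which follows from $\tilde B_i^r[0]=0$ as in the appendix.
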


\begin{proof}
    We prove the lemma inductively.
    We notice that \cref{eq:invariant} holds for $t = 0$ by the definition of $\tilde B_i^r[0]$ which is set to $\lim_{p \to \infty} p D_i^r(p)$, the maximum value $p D_i^r(p)$ can take (due to the monotonicity of this function).
    
    Fix round $t \ge 1$ and assume \cref{eq:invariant} holds at round $t-1$:
    \begin{equation*}
        \tilde B_i^r[t-1]
        \ge
        E_i^r(p_i^r(\tilde B[t-1]))
    \end{equation*}
Since $\tilde B_i^r[t]$ is set to the r.h.s., we get that $\tilde B_i^r[t-1] \ge \tilde B_i^r[t]$ for all $i,r$.
    By \cref{lem: key structural lemma}, this implies that $p(\tilde{B}[t-1]) \succeq p(\tilde B[t])$.
    Since the function $x \mapsto x \, D_i^r(x)$ is weakly increasing, this proves that
    \begin{equation*}
        \tilde B_i^r[t]
        =
        E_i^r(p_i^r(\tilde B[t-1]))
        \ge
        E_i^r(p_i^r(\tilde B[t]))
    \end{equation*}
    which proves the lemma.
\end{proof}

We now proceed to prove that our algorithm converges to a solution.
Specifically, we will show that the virtual budgets $\tilde B[t]$ converge to the desired values of \cref{thm: budget bound}.
In order to prove this, we have to show two things.
First, we show that the sequence $\tilde B[t]$ does converge, which is fairly straightforward by the fact that $\tilde B[t]$ is non-increasing and non-negative.
Second, we have to show that the sequence converges to the desired value according to \cref{thm: budget bound}.
For this, \cref{eq:invariant} is required to hold with equality, which intuitively is true: we decrease the virtual budgets whenever this condition is not true, so it should converge to an equality.
This, however, requires that the limit sequence of prices $p(\tilde B[t])$ converges, and in particular converges to the optimal price vector of the limit of $\tilde B[t]$.
This is exactly what the following lemma guarantees.

\begin{lemma}\label{lem: key structural lemma; price continuity}
    Fix a sequence of virtual budgets $ \tilde B[1],  \tilde B[2], \ldots$ such that $\lim_{t \to \infty} \tilde B[t] = \bar B$. Then, $p(\bar B) = \lim_{t\rightarrow \infty}p( \tilde B[t])$.
\end{lemma}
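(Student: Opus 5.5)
We prove continuity of the map $\hat B \mapsto p(\hat B)$ by induction down the tree, applying Berge's Maximum Theorem (\cref{thm:berge's}) at every node. We work under the standing assumption that all budgets lie in $\mathcal{B}$, including the limit $\bar B$. By \cref{thm:harmonic:main}, the argmax of program \eqref{eq: optimization problem} is then a singleton for every node. The key device is to regard the parent prices as a separate parameter. For node $i$ we consider the objective $F_i(x;\hat B, q)$ of \cref{eq: optimization objective}, where $q = p_{\parent(i)}$ is treated as an independent variable. This function is jointly continuous in $(x,\hat B,q)$ on $\R^m_{>0}\times \mathcal{B}\times \R^m_{>0}$.

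\textbf{Base case (the root).} The parameter is $\hat B$ alone. If Berge applies, the argmax correspondence is upper hemicontinuous. Since it is single-valued, it is a continuous function, so $p_{\text{root}}(\tilde B[t]) \to p_{\text{root}}(\bar B)$.

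\textbf{Inductive step.} Assume $p_{\parent(i)}(\tilde B[t]) \to p_{\parent(i)}(\bar B)$. By \cref{lem: prices are decreasing}, this limit is positive. The composite parameter $\theta_t = (\tilde B[t], p_{\parent(i)}(\tilde B[t]))$ then converges to $\bar\theta = (\bar B, p_{\parent(i)}(\bar B))$. Berge at node $i$ with parameter $\theta$ gives $p_i(\tilde B[t]) \to p_i(\bar B)$. Note that we only need continuity along this particular sequence, so no joint differentiability issue arises.

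\textbf{Main obstacle: the feasible set.} Berge requires a compact-valued continuous constraint correspondence. However, $\mathcal{K}$ is unbounded, and $\log$ blows up at $0$. The plan is to replace $\mathcal{K}$ by a fixed compact box $\mathcal{K}\cap[\underline p,\overline p]^m$ that contains the maximizers for all $\theta$ in a neighborhood of $\bar\theta$. Such a box is a constant correspondence, hence continuous.

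\emph{Upper bound.} For a non-root node, \cref{lem: prices are decreasing} gives $p_i^r \le p_{\parent(i)}^r$, and the parent prices are bounded along a convergent sequence. For the root, or for resources with positive supply, we use the linear penalty term. For each $r$ with $s_i^r>0$ or $\sum_{j\in\subTree(i)}\hat B_j^r>0$, the coefficient of $p_i^r$ is positive and bounded away from zero near $\bar\theta$. This forces maximizers to lie in a bounded region; the KKT stationarity conditions, as in the proof of \cref{thm:harmonic:main}, give explicit bounds.

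\emph{Lower bound.} For $r$ with $\sum_{j\in\subTree(i)}\bar B_j^r>0$, the log term drives $p_i^r$ away from $0$, and this holds uniformly for $\theta$ near $\bar\theta$. Resources with zero aggregate budget but positive supply have their optimal price pinned at the value forced by the conversion constraints, namely their coefficient pushes them to the minimum allowed. We would handle these by adding a small lower bound $\underline p$ that is inactive near $\bar\theta$.

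This uniform localization, which ensures that maximizers stay in a compact set independent of $t$ for large $t$, is the step we expect to require the most care. Once it is in place, Berge plus uniqueness yields the claimed convergence at every node, completing the induction.
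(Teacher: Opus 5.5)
This is essentially the paper's proof: induct from the root, feed the (already convergent) parent prices into node $i$'s program as part of the parameter, and conclude with Berge's theorem plus uniqueness of the maximizer on $\mathcal{B}$. Your compact-box localization is extra rigor the paper omits, since the paper applies \cref{thm:berge's} directly on the non-compact $\mathcal{K}$.

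Two details of that localization need adjusting. First, positivity of $p_{\parent(i)}(\bar B)$, and your uniform lower bound for resources with $\sum_{j\in\subTree(i)}\bar B_j^r=0$, come from the standing non-degeneracy assumption (\cref{remark:positive_prices}), not from \cref{lem: prices are decreasing}, which only gives $p_i^r\le p_{\parent(i)}^r$. Second, at the root the linear coefficient of $p^r$ is just $s^r$, so for root resources with $s^r=0$ (allowed in $\mathcal{B}$) the upper bound must come from the conversion constraints. For example, it follows from $\sum_r s^r d^r>0$ for every nonzero $d\in\mathcal{K}$, which is forced by the root program at $\bar B$ having a maximizer.
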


We now show that \cref{lem: key structural lemma; price continuity} follows from Berge's Maximum Theorem.
We state a simplified version of \cite[Theorem 1]{walker1979generalization}.

\begin{theorem}[Berge's Maximum Theorem (simplified)] \label{thm:berge's}
    Let $X \subseteq \R^n$ be a compact set and $F_i(x; \theta)$ be a function that is continuous in $x$ and $\theta$. Then, $\arg\max_{x \in X}F_i(x; \theta)$ is upper hemicontinuous. Hence, if $F_i(x; \theta)$ always has a unique maximizer $x^*(\theta) = \arg\max_{x \in X}F_i(x; \theta)$ for $\theta \in Y$, then $x^*(\theta)$ is continuous on $Y$.
\end{theorem}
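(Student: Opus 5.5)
This is a classical result, and I would prove it directly with a sequential compactness argument rather than appeal to general correspondence theory. I read ``continuous in $x$ and $\theta$'' as \emph{joint} continuity of $F_i$ on $X \times \Theta$, where $\Theta \supseteq Y$ is the parameter domain. First, for every fixed $\theta$, the map $x \mapsto F_i(x;\theta)$ is continuous on the compact set $X$. By Weierstrass, $\arg\max_{x \in X} F_i(x;\theta)$ is therefore nonempty, so the correspondence $\theta \mapsto \arg\max_{x\in X} F_i(x;\theta)$ is well defined.

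\textbf{Closed graph.} Take $\theta_k \to \theta$ and $x_k \in \arg\max_{x \in X} F_i(x;\theta_k)$ with $x_k \to \bar x$; we have $\bar x \in X$ since $X$ is closed. Fix any $y \in X$. Optimality of $x_k$ gives
\[
F_i(x_k;\theta_k) \ge F_i(y;\theta_k) \quad \text{for all } k.
\]
By joint continuity, the left side tends to $F_i(\bar x;\theta)$ and the right side tends to $F_i(y;\theta)$. Hence $F_i(\bar x;\theta) \ge F_i(y;\theta)$ for every $y \in X$, i.e.\ $\bar x \in \arg\max_{x\in X} F_i(x;\theta)$.

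\textbf{Upper hemicontinuity.} The correspondence takes values in the compact set $X$ and has a closed graph, and these two facts together give upper hemicontinuity. To spell out the contrapositive: if it failed at $\theta$, there would be an open $U \supseteq \arg\max_{x\in X} F_i(x;\theta)$, parameters $\theta_k \to \theta$, and maximizers $x_k \notin U$ at $\theta_k$. By compactness some subsequence of $(x_k)$ converges, and its limit lies in the closed set $X\setminus U$. By the closed-graph step, that limit is also a maximizer at $\theta$, hence lies in $U$, which is a contradiction.

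\textbf{Continuity under uniqueness.} Suppose the maximizer $x^*(\theta)$ is unique for all $\theta \in Y$, and take $\theta_k \to \theta$ within $Y$. Every subsequence of $(x^*(\theta_k))$ has, by compactness, a further convergent subsequence. By the closed-graph step its limit is a maximizer at $\theta$, so by uniqueness the limit equals $x^*(\theta)$. A sequence in a compact metric space all of whose convergent subsequences share the same limit converges to that limit, so $x^*(\theta_k) \to x^*(\theta)$. Thus $x^*$ is continuous on $Y$.

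I do not expect a genuine obstacle. The one point needing care is that the limit passage in the closed-graph step uses \emph{joint} continuity in $(x,\theta)$: separate continuity in each argument would not allow $F_i(x_k;\theta_k) \to F_i(\bar x;\theta)$. I would make this hypothesis explicit, since it is exactly what must be checked when the theorem is applied recursively to the objectives \eqref{eq: optimization objective} with the parent prices treated as the parameter.
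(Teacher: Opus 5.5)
Your proof is correct. The closed-graph step, the compactness contrapositive for upper hemicontinuity, and the subsequence argument for continuity under uniqueness are all sound. You are also right that the passage $F_i(x_k;\theta_k)\to F_i(\bar x;\theta)$ needs \emph{joint} continuity; objectives that are only separately continuous give counterexamples.

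The paper never proves this statement. It states a simplified form and cites Walker's generalization of the maximum theorem. So the comparison is a self-contained proof against a citation:
\begin{itemize}
\item The citation buys brevity and access to a more general version.
\item Your sequential-compactness argument is elementary. It uses the fact that $X\subseteq\R^n$ and the parameter set are metric, so sequences suffice. It also makes explicit which hypotheses have to be checked before the theorem can be invoked: a nonempty compact $X$, and joint continuity of $F_i$ on $X\times\Theta$.
\end{itemize}

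On that last point, one caution for the paper's own use of the theorem in the proof of the price-continuity lemma. There the feasible set $\mathcal{K}$ is an unbounded cone, and the $\log$ terms are not finite on the boundary of $\R^m_+$. So in that application it is compactness, more than joint continuity, that needs extra work. One fix would be to show that, for parameters near a given one, the maximizers stay in a fixed compact subset of the interior, and then apply the theorem on that subset.
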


\begin{proof}[Proof of \cref{lem: key structural lemma; price continuity}]
        For the root $i$, we see that $F_i(x; \hat{B})$ is clearly continuous in the variables.
        Since there is a unique $p_i(\hat{B}) = \arg\max_{x \in \mathcal{K}}F_i(x; \hat{B})$, by \cref{thm:berge's}, $p_i(\hat{B})$ is continuous.
        We now proceed inductively.
        Assuming $p_{\pi(i)}(\hat{B})$ is continuous, we conclude that $F_i(x; \hat{B})$ is continuous.
        Thus, by \cref{thm:berge's}, $p_i(\hat{B})$ is continuous on $\mathcal{B}$ (where it has a unique maximizer).
        Since $B[t], \bar{B} \in \mathcal{B}$ for all $t$, we conclude  \cref{lem: key structural lemma; price continuity}.
\end{proof}

We now formally state the main result of this section, that \cref{algo:BDA} converges to a solution for the problem with general demands.

\begin{restatable}{theorem}{CappedMainTheorem}\label{thm: main correctness thm}
   Fix an instance $(\agentTree, \convGraph, s, D)$ with general demands that satisfy \cref{assumption:demands}.
   Let $p(\tilde  B[T])$ be the solution that the Budget Descent Algorithm (\cref{algo:BDA}) returns after $T$ rounds.
   Then, the limit $\lim_{T \to \infty} p(\tilde  B[T])$ exists and is a feasible price vector for the general instance.
\end{restatable}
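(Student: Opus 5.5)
The plan is to combine the monotonicity invariant (\cref{lem:capped:invariant}), price monotonicity (\cref{lem: key structural lemma}), price continuity (\cref{lem: key structural lemma; price continuity}), and the reduction of \cref{thm: budget bound}.

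\textbf{Step 1: the virtual budgets converge.} The proof of \cref{lem:capped:invariant} shows that $\tilde B_i^r[t-1] \ge \tilde B_i^r[t]$ for all $i,r,t$. Every $\tilde B_i^r[t] = E_i^r(\cdot)$ is nonnegative. So each coordinate is a monotone bounded sequence, and $\bar B = \lim_{t\to\infty}\tilde B[t]$ exists.

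\textbf{Step 2: the prices converge.} By the standing assumption $\bar B \in \mathcal B$, \cref{lem: key structural lemma; price continuity} gives $\lim_{T\to\infty} p(\tilde B[T]) = p(\bar B) =: \bar p$. This establishes existence of the limit in the statement. Alternatively, \cref{lem: key structural lemma} shows the price sequence is itself coordinatewise non-increasing, which also gives existence; continuity is what identifies the limit as $p(\bar B)$.

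\textbf{Step 3: the fixed point.} Pass to the limit in the update rule $\tilde B_i^r[t] = E_i^r(p_i^r(\tilde B[t-1]))$. The left side tends to $\bar B_i^r$. For the right side I need $E_i^r(p_i^r(\tilde B[t-1])) \to E_i^r(\bar p_i^r)$, and this is the main obstacle, since \cref{assumption:demands} does not assert continuity of $E_i^r$ or $D_i^r$. I would handle it with monotonicity alone:
\begin{itemize}
\item The prices decrease to $\bar p_i^r$ from above, and $E_i^r$ is weakly increasing. So $E_i^r(p_i^r(\tilde B[t])) \ge E_i^r(\bar p_i^r)$ for all $t$, which in the limit gives $\bar B_i^r \ge E_i^r(\bar p_i^r)$.
\item For the reverse inequality, note that any right-limit $E_i^r(\bar p^+)$ equals $\bar B_i^r$. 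Since $D_i^r$ is weakly decreasing, $D_i^r(p) \le D_i^r(\bar p)$ for $p>\bar p$, hence $E_i^r(p) = pD_i^r(p) \le pD_i^r(\bar p)$. Letting $p \downarrow \bar p$ gives $E_i^r(\bar p^+) \le \bar p D_i^r(\bar p) = E_i^r(\bar p)$.
\end{itemize}
Together, $E_i^r$ is right-continuous, so $\bar B_i^r = E_i^r(\bar p_i^r)$. If some prices are eventually constant the claim is immediate.

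\textbf{Step 4: feasibility.} Positivity of $\bar p$ follows from \cref{remark:positive_prices}: the harmonic instance with budgets $\bar B$ is assumed to have no degenerate subgraph, and otherwise I would remove those components first. Dividing the identity from Step 3 by $\bar p_i^r$ gives $\bar B_i^r/p_i^r(\bar B) = D_i^r(p_i^r(\bar B))$, the hypothesis of \cref{thm: budget bound}. That theorem then yields that $(p(\bar B), c(\bar B))$ is feasible for the general instance. With \cref{lem: problem reformulation}, the flows are recovered as well.
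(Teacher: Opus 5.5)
Steps 1--3 are correct once positivity is in hand. Your Step 3 is a slightly leaner version of the paper's argument. The paper proves $D_i^r$ is continuous on $(0,\infty)$ from the two monotonicity conditions. You only use $E_i^r(\bar p_i^r)\le E_i^r(p_t)\le p_t\,D_i^r(\bar p_i^r)$ for $p_t\ge \bar p_i^r$, which suffices because the prices decrease to their limit.

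\textbf{The gap is the positivity of $\bar p=p(\bar B)$.} You need it already in Step 3, since $D_i^r$ and $E_i^r$ are defined only on $(0,\infty)$, and it is an explicit hypothesis of \cref{thm: budget bound}. Your justification does not work. \cref{remark:positive_prices} assumes non-degeneracy of the \emph{original} instance $(\agentTree,\convGraph,D,s)$ only. The harmonic instance with budgets $\bar B$ is an output of the algorithm, and nothing entitles you to assume it is non-degenerate. Nor is this automatic. For capped demands $E_i^r(p)=\min\{B_i^r,d_i^r p\}$, so if $p_i^r(\tilde B[t])\to 0$ then $\tilde B_i^r[t]\to 0$. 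An absorbing set on which the limit budgets vanish would be priced at $0$ by the harmonic programs. Yet in the original instance that set may have $\sum D_i^r(0)>\sum s_i^r$, so price $0$ violates Flow Conservation there. Hence ``removing those components first'' is not an option: you must prove that this collapse never happens.

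This is the content of \cref{lem:prices are positive}, which the paper's proof invokes.
\begin{itemize}
\item Let $\mathcal G$ be the set of pairs whose prices tend to $0$, and pick $\epsilon>0$ below the limits of all other prices.
\item Reachable pairs have weakly smaller prices, so $\mathcal G$ is absorbing. Eventually no conversion enters $\mathcal G$, since that would require equal prices across its boundary.
\item Sum Flow Conservation over $\mathcal G$ and use \cref{lem: isolate flow lemma}. The only inflow comes from parents with price at least $\epsilon$, so
\[
\sum_{\mathcal G}\frac{\tilde B_i^r[t]}{p_i^r(\tilde B[t])}-\sum_{\mathcal G}s_i^r\le \frac{1}{\epsilon}\sum_{\mathcal G}\tilde B_i^r[t].
\]
\item The left-hand side therefore stays bounded while the prices on $\mathcal G$ vanish. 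This forces $\tilde B_i^r[t]\to 0$ on $\mathcal G$, so the right-hand side tends to $0$.
\item The invariant of \cref{lem:capped:invariant} gives $D_i^r(p_i^r(\tilde B[t]))\le \tilde B_i^r[t]/p_i^r(\tilde B[t])$. Letting $t\to\infty$ yields $\sum_{\mathcal G}D_i^r(0)\le\sum_{\mathcal G}s_i^r$, contradicting non-degeneracy of the original instance.
\end{itemize}
With this lemma inserted before Step 3, your proof goes through.
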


The theorem is proven by \cref{lem: key structural lemma; price continuity} and the arguments we highlighted above it. For completeness, we give a full detailed proof in \cref{app: proofs for key lemmas}.

\subsection{Convergence Rate of the Budget Descent Algorithm}

In this section, we prove a bound on the number of iterations required to approximate the optimal allocation for any instance of the problem. Informally, we will show that to achieve an allocation that is $\epsilon$ close to the optimal allocation, $O\left(\frac{1}{\epsilon}\log\left(\frac{1}{\epsilon}\right)\right)$ iterations of the Budget Descent Algorithm suffice, where the hidden constant depends on instance-specific quantities that we make precise below.

\itparagraph{Perturbed Demand Function.} To prove our convergence bound, we will run the Budget Descent Algorithm on a regularized set of demand functions.
For a demand function $D_i^r$ satisfying \cref{assumption:demands}, let $\overline{E_i^r} = \lim_{p \rightarrow \infty} p \cdot D_i^r(p)$ be the maximum payment that can be generated, which is finite by \cref{assumption:demands}.
We call $$D_i^r(p; \delta) = D_i^r(p) + \delta\cdot \frac{ \overline{E_i^r}}{p}$$ the \textit{$\delta$-perturbed demand function}.
We note that the perturbed demand function satisfies \cref{assumption:demands}.
Let $E_i^r(p; \delta) = p\cdot D_i^r(p; \delta)$ be the payment generated by this function at price $p$.

\begin{restatable}{theorem}{ConvergenceResult}\label{thm: main convergence result}
    Let $\tilde{B}_i^{r}[t;\delta]$ be the virtual budgets produced at the $t$-th iteration of the Budget Descent Algorithm on the problem instance with $\delta$-perturbed demand functions. Fix $\epsilon > 0$. Then, for $t = \frac{1 + \epsilon}{\epsilon}\log\left(\frac{1}{\epsilon}\right)$,
    \begin{equation}
\sum_{i,r}\left|D_i^r(p_i^r(\bar{B})) - D_i^r(p_i^r(\tilde{B}[t; \epsilon]))\right| \leq \epsilon\left[\left(\max_{i,r}\frac{\tilde{B}_i^r[0; \epsilon]}{\bar{B}_i^r[\epsilon]}-1\right)\left(\sum_{i,r} s_i^r \right) + \sum_{i,r}\frac{\overline{E_i^r}}{p_i^r(\bar{B})} \right]
    \end{equation}
\end{restatable}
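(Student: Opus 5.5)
We prove the bound by running \cref{algo:BDA} on the $\epsilon$-perturbed instance and tracking the ratios $\rho_i^r[t] = \tilde B_i^r[t;\epsilon]/\bar B_i^r[\epsilon]$. Here $\bar B[\epsilon]$ is the limiting fixed point for the perturbed instance given by \cref{thm: main correctness thm}, and $\bar B$ is the fixed point of the unperturbed instance.

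The first step is to show that $\rho[t]$ contracts toward $1$ at a geometric rate governed by $\epsilon$. Two facts drive this.

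\emph{Scale invariance of the harmonic problem.} If all budgets are multiplied by $\lambda$, then all harmonic prices are multiplied by $\lambda$. This holds because the first-order/KKT conditions of \cref{eq: optimization problem} are homogeneous: scaling $p_i$ and $p_{\parent(i)}$ by $\lambda$ preserves \cref{eq:harmonic:flow,eq:harmonic:conversion}.

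\emph{Domination.} Let $\rho_{\max}[t]=\max_{i,r}\rho_i^r[t]$. Since $\tilde B[t]\preceq \rho_{\max}[t]\bar B[\epsilon]$, \cref{lem: key structural lemma} and scaling give $p(\tilde B[t])\preceq \rho_{\max}[t]\, p(\bar B[\epsilon])$.

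Now apply the update. Because $E_i^r(\cdot)$ is nondecreasing and $pD_i^r(p)\le \overline{E_i^r}$, we have $E(\lambda p)\le E(p)+$ (something bounded) for the unperturbed part. Write $\lambda=\rho_{\max}[t]$. The perturbed payment satisfies
\[
E_i^r(\lambda p;\epsilon)=E_i^r(\lambda p)+\epsilon\overline{E_i^r}\le \overline{E_i^r}(1+\epsilon),
\]
while $\bar B_i^r[\epsilon]\ge \epsilon\overline{E_i^r}$. A convex-combination argument is cleaner, though. Since $pD(p)$ is increasing and $D$ is decreasing, $E(\lambda p)\le \lambda E(p)$ for $\lambda\ge1$. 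The perturbation term $\epsilon\overline{E}$ does not scale, so
\[
\tilde B[t+1]\le \lambda\bar B^{\text{unpert}}+\epsilon\overline E .
\]
Dividing by $\bar B[\epsilon]=\bar B^{\text{unpert}}+\epsilon\overline E$ gives
\[
\rho_{\max}[t+1]-1\le (\rho_{\max}[t]-1)\cdot\frac{\bar B^{\text{unpert}}}{\bar B^{\text{unpert}}+\epsilon\overline E}\le(\rho_{\max}[t]-1)\frac{1}{1+\epsilon}.
\]
The last inequality uses $\bar B^{\text{unpert}}\le\overline E$. Hence $\rho_{\max}[t]-1\le(1+\epsilon)^{-t}(\rho_{\max}[0]-1)$. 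At $t=\frac{1+\epsilon}{\epsilon}\log\frac1\epsilon$ this is at most $\epsilon(\rho_{\max}[0]-1)$.

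The second step converts the ratio bound into an allocation bound. There are two error sources.

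\emph{Optimization error within the perturbed instance.} The lower bound $\rho\ge1$ follows from the invariant \cref{lem:capped:invariant} together with the argument of \cref{general:uniqueness}. Combined with the upper bound, $p(\tilde B[t])$ lies between $p(\bar B[\epsilon])$ and $(1+\eta)p(\bar B[\epsilon])$, where $\eta=\epsilon(\rho_{\max}[0]-1)$. The harmonic allocations $\tilde B/p$ sum, by flow conservation, to the total supply, as all prices are positive and no resource is disposed. Each allocation overshoots its fixed-point counterpart by at most a factor $(1+\eta)$ relative to its share of supply. Summing therefore gives a deviation of at most $\eta\sum_{i,r}s_i^r$, which is the first term in the bound.

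\emph{Perturbation error.} Compare $\bar B[\epsilon]$ with $\bar B$. The perturbed demands equal the true ones plus $\epsilon\overline{E_i^r}/p$. Monotonicity shows $p(\bar B)\preceq p(\bar B[\epsilon])$, since the perturbed fixed point has more money. The allocations differ by at most the extra harmonic demand $\epsilon\overline{E_i^r}/p_i^r$, evaluated at prices at least $p(\bar B)$. This yields the second term, $\epsilon\sum\overline{E_i^r}/p_i^r(\bar B)$. Since both true and perturbed instances clear the same supply, the total decrease in true demand equals the total added perturbed demand, which gives the $\ell_1$ bound.

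The main obstacle is the $\ell_1$ conversion. Monotone price bounds only give one-sided per-coordinate control, so the argument must lean on conservation of total supply to turn sign-consistent deviations into a sum. The plan is to use that $D$ is monotone, so all deviations share a sign, and that total allocation is fixed to $\sum s$; together these convert the sum of deviations into a total-mass difference.
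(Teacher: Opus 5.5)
Your proposal is correct and is essentially the paper's proof. The paper also uses the triangle-inequality split into the perturbation error between $\bar B$ and $\bar B[\epsilon]$, handled by monotonicity in $\delta$ plus conservation of total supply, and the iteration error. In the latter, your $\rho_{\max}[t]$ plays the role of the paper's $\mu_t(\delta)$ and contracts by the same factor $1/(1+\epsilon)=1-\frac{\epsilon}{1+\epsilon}$, via monotonicity, degree-one homogeneity of $p(\cdot)$, $E(\lambda p)\le\lambda E(p)$ and $E\le\overline{E}$.

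Two small points to tighten. First, your $\bar B^{\text{unpert}}$ must denote $E(p(\bar B[\epsilon]))$, not the unperturbed fixed point $\bar B$. Second, in the optimization-error step, the per-coordinate bound $0\le D(p(\bar B[\epsilon]))-D(p(\tilde B[t;\epsilon]))\le \eta\,D(p(\bar B[\epsilon]))$ follows from the price sandwich only because $p\,D(p)$ is nondecreasing, and you should invoke this explicitly.
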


This theorem will follow from two results. The first result bounds the distance in allocations between the limiting budgets in the perturbed problem $\bar{B}[\delta] = \lim_{t\rightarrow \infty} \tilde{B}[t; \delta]$ and the limiting budgets in the original problem $\bar{B} = \bar{B}[0]$. The error is proportional to an instance-specific constant that depends on the ratio between the maximum possible payment $\overline{E_i^r}$ and the optimal price $p_i^r(\bar{B})$.

\begin{restatable}{lemma}{BoundingLimitings}
\label{lem: bounding limiting values}
    For all $\delta > 0$, $$\sum_{i,r}\left|D_i^r(p_i^r(\bar{B})) - D_i^r(p_i^r(\bar{B}[\delta]))\right| \leq \delta \sum_{i,r}\frac{\overline{E_i^r}}{p_i^r(\bar{B})}.$$
\end{restatable}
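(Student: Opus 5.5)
The plan is to compare the two limiting solutions through their prices, and then convert the allocation gap into a total-supply identity. The argument has three steps.

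\textbf{Step 1: price domination.} We show $p(\bar B) \preceq p(\bar B[\delta])$ by comparing the two runs of \cref{algo:BDA} iterate by iterate. At initialization we have $\tilde B_i^r[0;\delta] = (1+\delta)\overline{E_i^r} \ge \overline{E_i^r} = \tilde B_i^r[0]$. Now suppose $\tilde B[t;\delta] \succeq \tilde B[t]$. Then \cref{lem: key structural lemma} gives $p(\tilde B[t;\delta]) \succeq p(\tilde B[t])$. Since $E_i^r(\cdot)$ is weakly increasing and $E_i^r(p;\delta) = E_i^r(p) + \delta \overline{E_i^r} \ge E_i^r(p)$, we get
$$\tilde B_i^r[t+1;\delta] = E_i^r\big(p_i^r(\tilde B[t;\delta]);\delta\big) \ge E_i^r\big(p_i^r(\tilde B[t])\big) = \tilde B_i^r[t+1].$$
Both sequences converge (monotone and bounded, as in \cref{thm: main correctness thm}), so the limits satisfy $\bar B[\delta] \succeq \bar B$. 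By \cref{lem: key structural lemma; price continuity} and \cref{lem: key structural lemma}, this yields $p(\bar B[\delta]) \succeq p(\bar B)$. Because each $D_i^r$ is weakly decreasing, every term $D_i^r(p_i^r(\bar B)) - D_i^r(p_i^r(\bar B[\delta]))$ is then nonnegative, so the absolute values in the statement can be dropped.

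\textbf{Step 2: total allocation equals total supply.} By \cref{thm: main correctness thm} together with \cref{thm: budget bound}, both $p(\bar B)$ and $p(\bar B[\delta])$ are feasible positive prices, for the original and the perturbed instance respectively. Under positive prices, flow conservation \eqref{eq:model:flow} holds with equality. Summing it over all $(i,r)$, each tree flow $f_j^r$ appears once as an inflow to $j$ and once as an outflow of $\parent(j)$, so the flows cancel. Each conversion $c_i^{(r,r')}$ likewise appears once with each sign and cancels. We are left with $\sum_{i,r} D_i^r(p_i^r) = \sum_{i,r} s_i^r$. Applying this to both instances gives
$$\sum_{i,r} D_i^r(p_i^r(\bar B)) = \sum_{i,r} s_i^r = \sum_{i,r} D_i^r(p_i^r(\bar B[\delta])) + \delta\sum_{i,r}\frac{\overline{E_i^r}}{p_i^r(\bar B[\delta])}.$$

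\textbf{Step 3: conclude.} Subtracting the two sides, the sum of the nonnegative differences from Step 1 equals $\delta\sum_{i,r}\overline{E_i^r}/p_i^r(\bar B[\delta])$. By Step 1, $p_i^r(\bar B[\delta]) \ge p_i^r(\bar B)$, so this quantity is at most $\delta\sum_{i,r}\overline{E_i^r}/p_i^r(\bar B)$, which is the claimed bound.

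\textbf{Main obstacle.} I expect the delicate part to be Step 1, in particular justifying that \cref{lem: key structural lemma} and \cref{lem: key structural lemma; price continuity} apply. This requires that both budget sequences stay in $\mathcal B$, and that the perturbed instance also has no degenerate subgraph, so that its prices are positive. The latter should hold because the perturbation only increases demands: adding demand cannot create a subgraph whose supply covers its maximum demand, and the harmonic term makes $D_i^r(0;\delta) = \infty$ wherever $\overline{E_i^r} > 0$.
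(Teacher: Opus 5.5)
Your proposal is correct and follows essentially the paper's proof. The paper also proves $\bar B \preceq \bar B[\delta]$ by an iterate-by-iterate induction (stated for general $\delta \le \delta'$) and drops the absolute values by monotonicity of prices and demands. It then combines the harmonic identity $\sum_{i,r}\hat B_i^r/p_i^r(\hat B) = \sum_{i,r} s_i^r$ with the fixed-point equation $\bar B_i^r[\delta] = E_i^r(p_i^r(\bar B[\delta]);\delta)$, which is exactly what your summed flow-conservation step encodes. Your explicit check that the perturbed instance stays non-degenerate is a detail the paper leaves implicit.
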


The second result bounds the distance in allocations between $\bar{B}[\delta]$ and $\tilde{B}[t; \delta]$. Here, the bound again depends on an instance-specific constant, depending on how far the initial budgets $\tilde B_i^r[0; \delta]$ are from the limiting ones $\bar B_i^r[\delta]$ and the total supply in the system $\sum_{i,r} s_i^r$.

\begin{restatable}{lemma}{BoundingIterate}
\label{lem: bounding limiting value and iterate}
    For all $t$ and $\delta > 0$, 
    $$\sum_{i,r}\left|D_i^r(p_i^r(\bar{B}[\delta])) - D_i^r(p_i^r(\tilde{B}[t;\delta]))\right|\leq \left(\max_{i,r}\frac{\tilde{B}_i^r[0; \delta]}{\bar{B}_i^r[\delta]}-1\right)\left(1 - \frac{\delta}{1 + \delta}\right)^t\left(\sum_{i,r} s_i^r  \right).$$
\end{restatable}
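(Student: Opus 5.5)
The plan is to prove a geometric contraction of the worst-case budget ratio
\[
\rho_t = \max_{i,r} \tilde B_i^r[t;\delta] / \bar B_i^r[\delta],
\]
and then convert ratio bounds into allocation bounds. First, by \cref{lem:capped:invariant} applied to the perturbed instance (which satisfies \cref{assumption:demands}), the iterates decrease monotonically to $\bar B[\delta]$. Hence $\tilde B[t;\delta] \succeq \bar B[\delta]$ and $\rho_t \ge 1$. Monotonicity together with \cref{lem: key structural lemma} also gives $p(\tilde B[t;\delta]) \succeq p(\bar B[\delta])$. Write $\bar p = p(\bar B[\delta])$ and $p_t = p(\tilde B[t;\delta])$.

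\textbf{Step 1 (homogeneity).} I will show that $p(\lambda \hat B) = \lambda\, p(\hat B)$ for every $\lambda>0$, with the same conversions. In \cref{eq:harmonic:flow}, replacing all budgets by $\lambda \hat B$ and all prices by $\lambda p$ leaves the left-hand side unchanged. Positive scaling also preserves the order constraints and complementarity of \cref{eq:harmonic:conversion}. By the uniqueness in \cref{thm:harmonic:main}, the prices must scale as claimed. Since $\tilde B[t;\delta] \preceq \rho_t \bar B[\delta]$, \cref{lem: key structural lemma} then yields $\bar p \preceq p_t \preceq \rho_t \bar p$.

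\textbf{Step 2 (contraction).} For $\lambda \ge 1$, monotonicity of $E_i^r$ and of $D_i^r$ give $E_i^r(\bar p) \le E_i^r(\lambda \bar p) = \lambda \bar p\, D_i^r(\lambda\bar p) \le \lambda E_i^r(\bar p)$. The update rule and the fixed-point property $\bar B_i^r[\delta] = E_i^r(\bar p_i^r) + \delta \overline{E_i^r}$ then give
\[
\frac{\tilde B_i^r[t+1;\delta]}{\bar B_i^r[\delta]} - 1 \le \frac{\rho_t E_i^r(\bar p_i^r) + \delta \overline{E_i^r}}{E_i^r(\bar p_i^r) + \delta \overline{E_i^r}} - 1 = \frac{(\rho_t-1)E_i^r(\bar p_i^r)}{E_i^r(\bar p_i^r) + \delta\overline{E_i^r}} \le \frac{\rho_t - 1}{1+\delta}.
\]
The last inequality uses $E_i^r \le \overline{E_i^r}$. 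This is exactly $\rho_{t+1}-1 \le (1-\tfrac{\delta}{1+\delta})(\rho_t - 1)$, and iterating gives $\rho_t - 1 \le (\rho_0-1)(1-\tfrac{\delta}{1+\delta})^t$.

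\textbf{Step 3 (allocations).} Since $\bar p \preceq p_t \preceq \rho_t \bar p$ and $pD(p)$ is weakly increasing, we get $0 \le D_i^r(\bar p_i^r) - D_i^r(p_{t,i}^r) \le D_i^r(\bar p_i^r)(1 - 1/\rho_t) \le (\rho_t - 1) D_i^r(\bar p_i^r)$. It remains to bound $\sum_{i,r} D_i^r(\bar p_i^r)$ by $\sum_{i,r} s_i^r$. Summing flow conservation \cref{eq:model:flow} over all $(i,r)$ does this: inter-agent flows telescope, and each conversion is one-to-one, so total consumption is at most total supply. The unperturbed demand is at most the perturbed demand, which is what the equilibrium at $\bar p$ allocates.

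I expect the main obstacle to be Step 1, the homogeneity claim. It is needed so that the budget ratio bound can be transferred to prices through the recursive, top-down structure of the harmonic solution. The delicate part is checking uniqueness: any nonunique prices must lie outside $\mathcal B$, and there they do not affect allocations. If Step 1 fails, the fallback is to prove the scaling directly on each program \cref{eq: optimization problem}, inductively from the root. Scaling the parent's prices and the budgets by $\lambda$ multiplies the optimizer by $\lambda$, since the objective changes only by an additive constant under $p \mapsto \lambda p$.
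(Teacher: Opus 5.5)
Your proposal is correct and follows essentially the same route as the paper. Your $\rho_t$ is the paper's $\mu_t(\delta)$, and the three steps match the paper's: you use monotonicity (\cref{lem: key structural lemma}) and homogeneity (\cref{prop: positively homogeneous}) to get $p(\tilde B[t;\delta]) \preceq \rho_t\, p(\bar B[\delta])$; you contract $\rho_t - 1$ by the factor $1/(1+\delta)$ via $E_i^r(\alpha p)\le \alpha E_i^r(p)$ and $E_i^r\le \overline{E_i^r}$; and you bound the allocation gap by $(\rho_t-1)\sum_{i,r}D_i^r(p_i^r(\bar B[\delta]))\le(\rho_t-1)\sum_{i,r}s_i^r$. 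The differences are only cosmetic: you do the contraction inline rather than through the paper's intermediate proposition, and you use monotonicity of $p\,D_i^r(p)$ directly where the paper uses $\tilde B[t+1;\delta]\succeq \bar B[\delta]$, which is the same inequality.
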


The correctness of \cref{lem: bounding limiting values} will follow from the monotonicity of $\tilde{B}[t; \delta]$ with respect to $\delta$ ie. for all $\delta \leq \delta'$, $\tilde{B}[t; \delta] \preceq \tilde{B}[t; \delta']$.

Proving \cref{lem: bounding limiting value and iterate} is more involved. We define $\mu_t(\delta)$ as the smallest value such that $$\tilde{B}[t; \delta] \preceq \mu_t(\delta)\cdot \bar{B}[\delta].$$ We note that since $\bar{B}[\delta] \preceq \tilde{B}[t; \delta]$, $\mu_t(\delta) \geq 1$ is a measure of the distance of $\tilde{B}[t; \delta]$ from $\bar{B}[\delta]$ ie. $\mu_t(\delta) \approx 1$ implies that $\tilde{B}[t; \delta] \approx \bar{B}[\delta]$. We show that $\eta_t(\delta) = \mu_t(\delta) - 1$ decays exponentially with decay rate $\left(1 - \frac{\delta}{1 + \delta}\right)$, as shown in  \cref{lem: bounding limiting value and iterate}.
Interestingly, the only structural properties about the optimal $p_i^r(B)$ generated by our algorithm required for this result are that $p_i^r(B)$ is monotonically increasing (\cref{lem: key structural lemma}) and $p_i^r(B)$ is positively homogeneous of degree 1:
\begin{proposition}\label{prop: positively homogeneous}
    For all $B \in \R^{n\times m}$ and $\alpha > 0$, $p_i^r(\alpha\cdot B) = \alpha\cdot p_i^r(B)$ for all $i,r$.
\end{proposition}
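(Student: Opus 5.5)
The plan is to prove the statement by induction down the tree $\agentTree$, from the root to the leaves. At each agent we will use the optimization program \eqref{eq: optimization problem}, which defines $p_i(B)$ given $p_{\pi(i)}(B)$, together with the uniqueness guaranteed by \cref{thm:harmonic:main} for $B \in \mathcal{B}$. Note that $\alpha B \in \mathcal{B}$ whenever $B \in \mathcal{B}$ and $\alpha>0$, since the positivity conditions are scale-invariant. The key observation is that scaling prices by $\alpha$ transforms the objective $F_i$ in a controlled way. The feasible region $\mathcal{K}$ is a cone, so $x \in \mathcal{K}$ if and only if $\alpha x \in \mathcal{K}$.

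For the root $i$, substitute $x = \alpha y$ into $F_i(x;\alpha B)$. This gives
\begin{equation*}
F_i(\alpha y; \alpha B) = \alpha\sum_{r}\log(y^r)\sum_{j\in\subTree(i)} B_j^r + \alpha\log\alpha\sum_r\sum_{j\in\subTree(i)} B_j^r - \sum_r \alpha y^r s_i^r = \alpha F_i(y;B) + \text{const}.
\end{equation*}
So $y$ maximizes $F_i(\cdot;B)$ over $\mathcal{K}$ exactly when $\alpha y$ maximizes $F_i(\cdot;\alpha B)$ over $\mathcal{K}$. Uniqueness then yields $p_i(\alpha B) = \alpha p_i(B)$.

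For a non-root $i$, assume inductively that $p_{\pi(i)}(\alpha B) = \alpha p_{\pi(i)}(B)$. In the linear term, $\frac{1}{p^r_{\pi(i)}(\alpha B)}\sum_j \alpha B_j^r = \frac{1}{p^r_{\pi(i)}(B)}\sum_j B_j^r$, which is unchanged. The same substitution $x=\alpha y$ therefore again gives $F_i(\alpha y;\alpha B, p_{\pi(i)}(\alpha B)) = \alpha F_i(y;B,p_{\pi(i)}(B)) + \text{const}$, and the argmaxes correspond. This completes the induction.

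The only delicate point is uniqueness when some $(i,r)$ has $\sum_{j \in \subTree(i)} B_j^r=0$ and $s_i^r=0$. For $B\in\mathcal{B}$ this case is excluded. Otherwise, I would appeal to the convention of \cref{subsec: justify uniqueness} that $p(\hat B)$ is chosen canonically, and check that this choice commutes with scaling. Alternatively, one can state the result as equality of solution sets, $p(\alpha B)=\alpha\, p(B)$, which the argument above already gives.
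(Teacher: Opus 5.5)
Your proof is correct and is essentially the paper's argument. The paper notes that $(\alpha p(B), c(B))$ still satisfies Modified Flow Conservation \eqref{eq:harmonic:flow} and Conversion Complementarity for budgets $\alpha B$, since the factor $\alpha$ cancels in $(1/p_i^r - 1/p_{\pi(i)}^r)\sum_j B_j^r$, and it then relies on uniqueness. You check the same invariance at the level of each agent's program ($F_i(\alpha y;\alpha B) = \alpha F_i(y;B) + \text{const}$ on the cone $\mathcal{K}$) and induct down the tree; because the program's KKT conditions are exactly those axioms (\cref{thm:harmonic:main}), the two are equivalent, and your version simply makes the top-down induction and the reliance on uniqueness explicit.
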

\begin{proof}
    It should be immediate that, given any prices that satisfy the Modified Flow Conservation (\cref{eq:harmonic:flow}), scaling both the prices and the budgets by the same positive factor does not affect the correctness of the equality.
\end{proof}

We detail the proof of \cref{thm: main convergence result} in \cref{app: convergence rate} of the appendix.

\subsection{Uniqueness of the Optimal Allocation} \label{general:uniqueness}

We now prove that all $\hat{B}$ such that $E_i^r(p_i^r(\hat{B})) = \hat{B}_i^r$ induce the same allocation. This result follows from the fact that the Budget Descent Algorithm converges to a fixed point that dominates all other fixed points, i.e., the limiting virtual budgets $\bar B$ are weakly bigger than the budgets of any other fixed point.

\begin{proposition}
    Let $\hat{B}$ such that $E_i^r(p_i^r(\hat{B})) = \hat{B}_i^r$. Then, $\hat{B} \preceq \bar{B}$.
\end{proposition}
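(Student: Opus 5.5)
We prove by induction on $t$ that $\hat B \preceq \tilde B[t]$ for every iteration $t$ of the Budget Descent Algorithm. Since $\bar B = \lim_{t\to\infty}\tilde B[t]$ (the limit exists because the sequence is non-increasing and non-negative, as in the proof of \cref{thm: main correctness thm}), and the componentwise order $\preceq$ is preserved under limits, this gives $\hat B \preceq \bar B$. The only tools are the monotonicity of the harmonic price map (\cref{lem: key structural lemma}) and the monotonicity of the payment function $E_i^r(p)=p\,D_i^r(p)$ from \cref{assumption:demands}(b).

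\emph{Base case.} For every $i,r$ we have $\hat B_i^r = E_i^r(p_i^r(\hat B))$. Because $E_i^r$ is weakly increasing and bounded, $E_i^r(p) \le \lim_{q\to\infty} E_i^r(q)$ for all $p$. Hence $\hat B_i^r \le \tilde B_i^r[0]$, so $\hat B \preceq \tilde B[0]$.

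\emph{Inductive step.} Suppose $\hat B \preceq \tilde B[t-1]$. By \cref{lem: key structural lemma}, $p(\hat B) \preceq p(\tilde B[t-1])$. Applying the weakly increasing map $E_i^r$ componentwise gives
\[
\hat B_i^r = E_i^r\bigl(p_i^r(\hat B)\bigr) \le E_i^r\bigl(p_i^r(\tilde B[t-1])\bigr) = \tilde B_i^r[t],
\]
where the last equality is the update rule \cref{eq:update_rule}. This closes the induction.

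The main obstacle is not the induction but the hypotheses needed to invoke \cref{lem: key structural lemma}. That lemma is proven assuming the harmonic prices are unique, i.e., budgets lie in $\mathcal B$, and that prices are positive. A fixed point $\hat B$ may have $\sum_{j\in\subTree(i)}\hat B_j^r = 0$ for some pair with $s_i^r=0$, and then $p_i^r(\hat B)$ is not pinned down. I would handle this as in \cref{subsec: justify uniqueness}. For such pairs $\hat B_i^r=0$ anyway, so the inequality $\hat B_i^r \le \tilde B_i^r[t]$ is trivial there. For the remaining pairs, the prices that enter the argument are exactly the unique ones, so monotonicity should still go through. Alternatively, I would perturb $\hat B$ to $\hat B + \eta\mathbf 1 \in \mathcal B$, run the comparison on the perturbed budgets, and let $\eta\to 0$ using the continuity from \cref{lem: key structural lemma; price continuity}. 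Either way, the ordering argument itself is unchanged.
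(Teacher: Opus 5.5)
Your proof is correct and follows the paper's argument essentially verbatim. Both show by induction that $\hat B \preceq \tilde B[t]$: the base case uses $\hat B_i^r \le \lim_{p\to\infty} E_i^r(p) = \tilde B_i^r[0]$, and the inductive step combines monotonicity of $p(\cdot)$ (\cref{lem: key structural lemma}) with monotonicity of $E_i^r$, after which one passes to the limit. Your extra discussion of the non-uniqueness caveat goes beyond the paper, which simply relies on the standing without-loss-of-generality assumption from \cref{subsec: justify uniqueness}.
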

\begin{proof}
    We show inductively that $\hat{B} \preceq \tilde{B}[t]$. The base case holds since $$\hat{B}_i^r = E_i^r(p_i^r(\hat{B})) \leq \max_pE_i^r(p) = \tilde{B}_i^r[0].$$ We now assume  $\hat{B} \preceq \tilde{B}[t]$. Since $E_i^r$ and $p_i^r$ are monotone: $$\hat{B}_i^r = E_i^r(p_i^r(\hat{B})) \leq E_i^r(p_i^r(\tilde{B}[t])) = \tilde{B}_i^r[t+1].$$ Hence, $\hat{B}$ lower bounds the sequence $\tilde{B}[t]$. Hence, $\hat{B} \preceq \lim_{t \rightarrow \infty}\tilde{B}[t]  = \bar{B}$.
\end{proof}

We now prove the uniqueness result:
\begin{theorem}
    Any solution to an instance satisfying \cref{assumption:demands} is effectively unique, i.e., has unique allocations.
\end{theorem}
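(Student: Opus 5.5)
Take an arbitrary feasible solution $(f,p,c)$ with positive prices. By \cref{thm: budget bound}, setting $\hat B_i^r = E_i^r(p_i^r)$ gives a fixed point, $E_i^r(p_i^r(\hat B)) = \hat B_i^r$, and the solution's allocation is $D_i^r(p_i^r(\hat B)) = \hat B_i^r/p_i^r(\hat B)$. By the preceding proposition, $\hat B \preceq \bar B$, where $\bar B$ is the limit of the Budget Descent Algorithm. It is also a fixed point by \cref{thm: main correctness thm} and \cref{lem: key structural lemma; price continuity}. So it suffices to show that any fixed point $\hat B \preceq \bar B$ yields the same allocation as $\bar B$.

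First, \cref{lem: key structural lemma} gives $p(\hat B) \preceq p(\bar B)$. Since each $D_i^r$ is weakly decreasing, this yields the pointwise inequality
\[
D_i^r\bigl(p_i^r(\hat B)\bigr) \ge D_i^r\bigl(p_i^r(\bar B)\bigr) \quad \text{for all } i,r .
\]
Second, we need a conservation identity showing that total allocation is fixed. Sum the Modified Flow Conservation equation \cref{eq: payment eq} over all $i$ and $r$. For each $r$, the terms $\frac{1}{p_i^r}\sum_{j\in\subTree(i)} E_j^r$ and $-\frac{1}{p_{\parent(i)}^r}\sum_{j\in\subTree(i)}E_j^r$ telescope along the tree: the positive term for $i$ contains $E_i^r/p_i^r$ together with the subtree sums of the children divided by $p_i^r$, and these children's sums are exactly cancelled by the children's negative terms. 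What remains for each $r$ is $\sum_i D_i^r(p_i^r)$ on the left-hand side. On the right-hand side, each conversion variable $c_i^{(r,r')}$ appears once with a plus sign (at $r'$) and once with a minus sign (at $r$), so the conversions cancel after summing over $r$. This gives
\[
\sum_{i,r} D_i^r(p_i^r) = \sum_{i,r} s_i^r
\]
for every feasible solution with positive prices. Note that we need equality and not an inequality here; this is why we use \cref{eq: payment eq} (valid under \cref{remark:positive_prices}) rather than the complementarity form \cref{eq:model:flow}.

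Combining the two steps, the pointwise inequalities $D_i^r(p_i^r(\hat B)) \ge D_i^r(p_i^r(\bar B))$ are between two nonnegative vectors with equal sums, so every one of them is an equality. Hence every feasible solution has the same allocation as the BDA limit, which is effective uniqueness.

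\textbf{Main obstacles.} The delicate points are the passages to $\hat B,\bar B \in \mathcal B$ and the handling of degenerate subgraphs. For budgets outside $\mathcal B$, we invoke \cref{subsec: justify uniqueness}: prices that are non-unique do not affect any allocation. For degenerate parts, \cref{thm:degenerate_and_prices} shows that allocations are preserved when those parts are split off, so the argument applies to the positive-price remainder. The telescoping computation itself is routine.
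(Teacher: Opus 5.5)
Your proposal is correct and follows the paper's proof essentially step for step. The BDA limit $\bar B$ dominates every fixed point $\hat B$, monotonicity (\cref{lem: key structural lemma}) turns this into the pointwise inequalities $D_i^r(p_i^r(\hat B)) \ge D_i^r(p_i^r(\bar B))$, and the fact that total allocation equals total supply forces every one of them to be an equality. Your additions are elaborations the paper leaves implicit: you derive that supply identity by telescoping \cref{eq: payment eq} over the tree and cancelling conversions across resources, and you pass from an arbitrary feasible solution to a fixed point via \cref{thm: budget bound}.
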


\begin{proof}
    Since $\hat{B} \preceq \bar{B}$ and $D_i^r$ is non-increasing and $p_i^r$ is non-decreasing, $D_i^r(p_i^r(\bar{B})) \leq D_i^r(p_i^r(\hat{B}))$. Hence,
    \begin{align*}
        0 \leq D_i^r(p_i^r(\hat{B})) - D_i^r(p_i^r(\bar{B})) = \frac{\hat{B}_i^r}{p_i^r(\hat{B})} - \frac{\bar{B}_i^r}{p_i^r(\bar{B})}
    \end{align*}
    We now sum over all $i,r$ to deduce that 
    \begin{align*}
        0 \leq \sum_{i,r}D_i^r(p_i^r(\hat{B})) - D_i^r(p_i^r(\bar{B})) = \sum_{i,r}\frac{\hat{B}_i^r}{p_i^r(\hat{B})} - \sum_{i,r}\frac{\bar{B}_i^r}{p_i^r(\bar{B})} = \sum_{i,r}s_i^r - \sum_{i,r}s_i^r = 0
    \end{align*} Hence, we conclude that $D_i^r(p_i^r(\hat{B})) - D_i^r(p_i^r(\bar{B})) = 0$ for all $i,r$ as claimed.
\end{proof}

\begin{remark}[Robustness to Misreporting]
\label{rem:incentives}
    Since allocations are driven by the reported demand functions, it is natural to ask whether an agent can secure more resources by misreporting its demand.
    We consider the capped harmonic example where $D_i^r(p) = \min\{ B_i^r / p, d_i^r \}$ and the two parameters the agent controls: its budget $B_i^r$ and its demand cap $d_i^r$.
    We argue that misreporting these, is not a concern in our setting.

    First, \emph{overreporting} either parameter is either infeasible or useless.
    Budgets are exogenously endowed, so an agent cannot spend credits it was never given.
    Inflating the demand cap $d_i^r$ only registers demand for resources the agent derives no value from, so any over-allocated units are worthless.

    Second, \emph{underreporting} does not help: replacing $B_i^r$ or $d_i^r$ (or both) by smaller values cannot increase the equilibrium allocation of $(i, r)$.
    Intuitively, the share of resources an agent draws from its ancestors tracks the budget it brings to the market, so understating that budget only weakens its claim.
    This is consistent with the monotonicity of feasible prices in the virtual budgets (\cref{lem: key structural lemma}): reducing $B_i^r$ or $d_i^r$ weakly lowers the virtual budget that $(i, r)$ contributes, which weakly lowers prices throughout the tree and, in turn, the demand the market clears for $(i, r)$.
\end{remark}

\printbibliography{}

@article{arrow1954existence,
  title     = {Existence of an equilibrium for a competitive economy},
  author    = {Arrow, Kenneth J and Debreu, Gerard},
  journal   = {Econometrica: Journal of the Econometric Society},
  volume    = {22},
  number    = {3},
  pages     = {265--290},
  year      = {1954},
  publisher = {JSTOR}
}

@inproceedings{banerjee2023robust,
  title     = {Robust Pseudo-Markets for Reusable Public Resources},
  author    = {Banerjee, Siddhartha and Fikioris, Giannis and Tardos, Eva},
  booktitle = {Proceedings of the 24th ACM Conference on Economics and Computation},
  pages     = {241--241},
  year      = {2023}
}

@article{devanur2008market,
  title     = {Market equilibrium via a primal--dual algorithm for a convex program},
  author    = {Devanur, Nikhil R and Papadimitriou, Christos H and Saberi, Amin and Vazirani, Vijay V},
  journal   = {Journal of the ACM (JACM)},
  volume    = {55},
  number    = {5},
  pages     = {1--18},
  year      = {2008},
  publisher = {ACM New York, NY, USA}
}

@article{eisenberg1959consensus,
  title     = {Consensus of subjective probabilities: The pari-mutuel method},
  author    = {Eisenberg, Edmund and Gale, David},
  journal   = {The Annals of Mathematical Statistics},
  volume    = {30},
  number    = {1},
  pages     = {165--168},
  year      = {1959},
  publisher = {JSTOR}
}

@article{elokda2024self,
  title     = {A self-contained karma economy for the dynamic allocation of common resources},
  author    = {Elokda, Ezzat and Bolognani, Saverio and Censi, Andrea and D{\"o}rfler, Florian and Frazzoli, Emilio},
  journal   = {Dynamic Games and Applications},
  volume    = {14},
  number    = {3},
  pages     = {578--610},
  year      = {2024},
  publisher = {Springer}
}

@article{elokda2025carma,
  title     = {CARMA: Fair and efficient bottleneck congestion management via nontradable karma credits},
  author    = {Elokda, Ezzat and Cenedese, Carlo and Zhang, Kenan and Censi, Andrea and Lygeros, John and Frazzoli, Emilio and D{\"o}rfler, Florian},
  journal   = {Transportation Science},
  volume    = {59},
  number    = {2},
  pages     = {340--359},
  year      = {2025},
  publisher = {INFORMS}
}

@inproceedings{fikioris2024incentives,
  title        = {Incentives in dominant resource fair allocation under dynamic demands},
  author       = {Fikioris, Giannis and Agarwal, Rachit and Tardos, {\'E}va},
  booktitle    = {International Symposium on Algorithmic Game Theory},
  pages        = {108--125},
  year         = {2024},
  organization = {Springer}
}

@inproceedings{ghodsi2011dominant,
  title     = {Dominant resource fairness: Fair allocation of multiple resource types},
  author    = {Ghodsi, Ali and Zaharia, Matei and Hindman, Benjamin and Konwinski, Andy and Shenker, Scott and Stoica, Ion},
  booktitle = {8th USENIX symposium on networked systems design and implementation (NSDI 11)},
  year      = {2011}
}

@inproceedings{gorokh2021remarkable,
  title     = {The Remarkable Robustness of the Repeated Fisher Market},
  author    = {Gorokh, Artur and Banerjee, Siddhartha and Iyer, Krishnamurthy},
  booktitle = {Proceedings of the 22nd ACM Conference on Economics and Computation},
  pages     = {562--562},
  year      = {2021}
}

@article{hatfield2013stability,
  title     = {Stability and competitive equilibrium in trading networks},
  author    = {Hatfield, John William and Kominers, Scott Duke and Nichifor, Alexandru and Ostrovsky, Michael and Westkamp, Alexander},
  journal   = {Journal of Political Economy},
  volume    = {121},
  number    = {5},
  pages     = {966--1005},
  year      = {2013},
  publisher = {University of Chicago Press Chicago, IL}
}

@article{hylland1979efficient,
  title     = {The efficient allocation of individuals to positions},
  author    = {Hylland, Aanund and Zeckhauser, Richard},
  journal   = {Journal of Political economy},
  volume    = {87},
  number    = {2},
  pages     = {293--314},
  year      = {1979},
  publisher = {The University of Chicago Press}
}

@inproceedings{lin_et_al,
  author    = {Lin, David X. and Fikioris, Giannis and Banerjee, Siddhartha and Tardos, \'{E}va},
  title     = {{Robust Resource Allocation via Competitive Subsidies}},
  booktitle = {17th Innovations in Theoretical Computer Science Conference (ITCS 2026)},
  pages     = {96:1--96:15},
  series    = {Leibniz International Proceedings in Informatics (LIPIcs)},
  isbn      = {978-3-95977-410-9},
  issn      = {1868-8969},
  year      = {2026},
  volume    = {362},
  editor    = {Saraf, Shubhangi},
  publisher = {Schloss Dagstuhl -- Leibniz-Zentrum f{\"u}r Informatik},
  address   = {Dagstuhl, Germany},
  url       = {https://drops.dagstuhl.de/entities/document/10.4230/LIPIcs.ITCS.2026.96},
  urn       = {urn:nbn:de:0030-drops-253835},
  doi       = {10.4230/LIPIcs.ITCS.2026.96}
}

@article{milgrom1994monotone,
  title     = {Monotone comparative statics},
  author    = {Milgrom, Paul and Shannon, Chris},
  journal   = {Econometrica: Journal of the Econometric Society},
  pages     = {157--180},
  year      = {1994},
  publisher = {JSTOR}
}

@article{ostrovsky2008stability,
  title     = {Stability in supply chain networks},
  author    = {Ostrovsky, Michael},
  journal   = {American Economic Review},
  volume    = {98},
  number    = {3},
  pages     = {897--923},
  year      = {2008},
  publisher = {American Economic Association}
}

@inproceedings{shneidman2005markets,
  title     = {Why Markets Could (But Don't Currently) Solve Resource Allocation Problems in Systems.},
  author    = {Shneidman, Jeffrey and Ng, Chaki and Parkes, David C and AuYoung, Alvin and Snoeren, Alex C and Vahdat, Amin and Chun, Brent N},
  booktitle = {HotOS},
  year      = {2005}
}

@inproceedings{sivanquota,
  title     = {Quota Marketplace: Dynamic Pricing for Efficient Allocation of ML Training Resources},
  author    = {Sivan, Balasubramanian and Leme, Renato Paes and Tiuca, Mihai and McFarlane, Ian and Gkatzelis, Vasilis and Mehta, Nehal and Yeganeh, Soheil Hassas and Mirrokni, Vahab and Vahdat, Amin},
  booktitle = {20th USENIX Symposium on Operating Systems Design and Implementation (OSDI 26)},
  year      = {2026}
}

@inproceedings{stokely2009market,
  title        = {Using a market economy to provision compute resources across planet-wide clusters},
  author       = {Stokely, Murray and Winget, Jim and Keyes, Ed and Grimes, Carrie and Yolken, Benjamin},
  booktitle    = {2009 IEEE International Symposium on Parallel \& Distributed Processing},
  pages        = {1--8},
  year         = {2009},
  organization = {IEEE}
}

@inproceedings{vuppalapati2023karma,
  title     = {Karma: Resource allocation for dynamic demands},
  author    = {Vuppalapati, Midhul and Fikioris, Giannis and Agarwal, Rachit and Cidon, Asaf and Khandelwal, Anurag and Tardos, {\'E}va},
  booktitle = {17th USENIX Symposium on Operating Systems Design and Implementation (OSDI 23)},
  pages     = {645--662},
  year      = {2023}
}

@article{walker1979generalization,
  title     = {A generalization of the maximum theorem},
  author    = {Walker, Mark},
  journal   = {International Economic Review},
  pages     = {267--272},
  year      = {1979},
  publisher = {JSTOR}
}

\appendix
\section{Omitted Proofs of Section \ref{sec:model}}\label{app:proof from sec 2}

\subsection{Eliminating Cycles in \texorpdfstring{$\convGraph$}{the Conversion Graph}}

\LemmaDAG*

\begin{proof}[Proof of \cref{lem:dag}]
    Fix a solution $(f, p, c)$ of $(\agentTree, \convGraph, D, s)$.
    We then create a solution $(\hat p, \hat c, \hat f)$ of $(\agentTree, \hat \convGraph, \hat D, \hat s)$, as follows.
    A price $\hat p_i^{\hat r}$ is equal to $p_i^{r}$ for any $r \in R(\hat r)$; all these prices are equal because all the resources in $R(\hat r)$ are in a cycle and by Conversion Complementarity.
    Flows are $\hat f_i^{\hat r} = \sum_{r \in R(\hat r)} f_i^r$.
    Conversions are $\hat c_i^{(\hat r_1, \hat r_2)} = \sum_{r_1 \in R(\hat r_1)} \sum_{r_2 \in R(\hat r_2)} \hat c_i^{(r_1, r_2)}$.
    Because all the new allocation-related variables (demands, flows, conversions, supplies) are the summation of the old ones, while the prices stay the same, it is not hard to show that the new solution is feasible.

    Now fix a solution $(\hat p, \hat c, \hat f)$ of $(\agentTree, \hat \convGraph, \hat D, \hat s)$ and create a solution $(f, p, c)$ of $(\agentTree, \convGraph, D, s)$, as follows.
    A price $p_i^{r}$ is equal to $p_i^{\hat r}$ for all $r \in R(\hat r)$.
    Flows are defined recursively from bottom to top: let $\hat r$ such that $r \in R(\hat r)$, then
    \begin{equation*}
        f_i^r = \hat f_i^{\hat r} \frac{
            p_i^r\qty(D_i^r(p_i^r) + \sum_{j \in \ch(i)}f_j^r)
        }{
            \sum_{r' \in R(\hat r)} p_i^{r'}\qty(D_i^{r'}(p_i^{r'}) + \sum_{j \in \ch(i)}f_j^{r'})
        }
    \end{equation*}
    if the denominator is $0$ then $f_i^r = 0$.
    There are multiple ways to define the conversions.
    Because between two strongly connected components $R(\hat r_1), R(\hat r_2)$ of $\convGraph$ we can convert resources in multiple ways, we can set $c_i^{r_1, r_2} = \hat c_i^{\hat r_1, \hat r_2}$ for some representatives $r_1 \in R(\hat r_1)$ and $r_2 \in R(\hat r_2)$ and set the rest of these cross-component conversions to $0$.
    Then these cross-component conversions can be routed to the other nodes of the component, since there is a path from $r_1$ to every other node in $R(\hat r_1)$ and all the prices in $R(\hat r_1)$ are the same.
    Inside a single component, in $(\agentTree, \hat \convGraph, \hat D, \hat s)$, the supply $s_i^{\hat r}$ that is used for the demand $\hat D_i^{\hat r}(\hat p_i^{\hat r})$ is being consumed by a single node; in the full graph this might need to be distributed to different nodes $(i, r)$ for $r \in R(\hat r)$.
    This again can be routed via paths in $R(\hat r)$, since this is a strongly connected component.
\end{proof}

\subsection{Isolating Degenerate Subgraphs} \label{ssec:app:isolating_degenerate}

\begin{remark}[Minimality in \cref{def:model:degenerate}]\label{rem:degenerate_minimality}
The minimality clause in \cref{def:model:degenerate} is essential: an absorbing subgraph can have more aggregate supply than demand yet still not be self-sufficient.
Consider $\mathcal G$ with three nodes where nodes $1$ and $2$ are children of node $3$ (for simplicity we do not differentiate between agents and resources), with demands $D_1(0) = D_2(0) = D_3(0) = 2$ and supplies $s_1 = 10$, $s_2 = 1$, $s_3 = 0$.
The total supply ($11$) exceeds the total demand ($6$), yet $\mathcal G$ is not self-sufficient: node $2$ has a supply deficit that node $1$'s excess cannot cover, since resources cannot flow from $1$ to $2$.
Hence, $\{1\}$ is degenerate but $\{1,2,3\}$ is not.
\end{remark}

\begin{lemma}\label{lem: prices are decreasing}
    Fix a instance $(\agentTree, \convGraph, D, s)$. Consider a solution $(f, p, c)$ for the instance. If $$\sum_{j \in \subTree(i) } p_j^r D_j^r(p_j^r) > 0 \quad\text{ or } \quad s_i^r > 0$$ for all $(i, r)$, then $p^r_i \leq p^r_{\parent(i)}$ for all $(i, r)$.
\end{lemma}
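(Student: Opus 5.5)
The plan is to argue by contradiction, one agent at a time. I compare the prices at $i$ with those at $\parent(i)$ and aggregate flow conservation over a carefully chosen set of resources at $i$. The main preliminary is the ``payment identity'' for the subtree. Write $X_i^r = D_i^r(p_i^r) + \sum_{j\in\ch(i)} f_j^r$ and $S_i^r = \sum_{j\in\subTree(i)} p_j^r D_j^r(p_j^r)$. A bottom-up induction on the tree, using Money Conservation \eqref{eq:model:money} at $i$ and at each child (exactly as in \cref{lem: isolate flow lemma}), gives $p_i^r X_i^r = S_i^r = p_{\parent(i)}^r f_i^r$. If $p_i^r = 0$ the claim $p_i^r \le p_{\parent(i)}^r$ is trivial, so I only need to handle resources with $p_i^r>0$. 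For these, Flow Conservation \eqref{eq:model:flow} holds with equality by complementarity:
\[
\sum_{r'\in\outEdges(r)} c_i^{(r,r')} - \sum_{r'\in\inEdges(r)} c_i^{(r',r)} = s_i^r + f_i^r - X_i^r .
\]

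Now suppose some resource violates the claim at $i$. I define the ratio $\rho^r = p_i^r / p_{\parent(i)}^r$, with $\rho^r=+\infty$ when $p_{\parent(i)}^r=0<p_i^r$. Let $R^*$ be the set of resources attaining the maximum ratio $\rho^*>1$; all of them have $p_i^r>0$. The first key step is that conversions at $i$ cannot leave $R^*$. Take $r\in R^*$ with $c_i^{(r,r')}>0$. Conversion Complementarity at $i$ gives $p_i^{r'} = p_i^r$, and the no-arbitrage constraint at the parent gives $p_{\parent(i)}^{r'} \le p_{\parent(i)}^r$. Hence $\rho^{r'} \ge \rho^r$, so $r'\in R^*$ by maximality. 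Consequently, summing the displayed equality over $r\in R^*$, the conversions internal to $R^*$ cancel. The left-hand side then equals $-\sum$ (conversions from outside $R^*$ into $R^*$), which is $\le 0$.

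The second key step is that the right-hand side, summed over $R^*$, is strictly positive. For $r\in R^*$, if $S_i^r>0$ then $f_i^r = S_i^r/p_{\parent(i)}^r$ and $X_i^r = S_i^r/p_i^r$. Since $p_i^r > p_{\parent(i)}^r$ we get $f_i^r > X_i^r$; this case also forces $p_{\parent(i)}^r>0$, because $S_i^r>0$ needs $f_i^r$ to be finite. If $S_i^r=0$, then $X_i^r = 0$ because $p_i^r>0$, and the hypothesis of the lemma gives $s_i^r>0$. In both cases $s_i^r + f_i^r - X_i^r > 0$, so the sum over $R^*$ is strictly positive, contradicting the previous paragraph. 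Therefore no resource has $p_i^r > p_{\parent(i)}^r$.

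I expect the main obstacle to be the choice of $R^*$, because conversions at $i$ can shuttle the ``excess'' inflow from one resource to another. Reasoning about a single resource fails, since its surplus can legitimately be converted away. The max-ratio set is what makes the conversion graph at $i$ point inward on $R^*$, and this step needs both the equal-price condition at $i$ and the monotonicity of parent prices along $\convGraph$. A secondary care point is handling zero parent prices (the infinite ratio) and ensuring the payment identity is established for all $i$ before it is used.
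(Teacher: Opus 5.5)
Your proof is correct, but it takes a different route from the paper's. The paper starts from one violating resource $r$ and follows positive conversions $r \to r_1 \to r_2 \to \cdots$ at agent $i$. The strict inequality $p_i^{r_k} > p_{\parent(i)}^{r_k}$ propagates along this walk by exactly your computation: equal prices at $i$, and weakly ordered prices at $\parent(i)$. The walk ends at a resource $r^*$ with no outgoing conversions. At that single resource, Flow Conservation (tight since $p_i^{r^*}>0$) and Money Conservation combine into $(p_{\parent(i)}^{r^*} - p_i^{r^*}) f_i^{r^*} = p_i^{r^*}\bigl(s_i^{r^*} + \sum_{r' \in \inEdges(r^*)} c_i^{(r', r^*)}\bigr)$. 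This forces $f_i^{r^*} = s_i^{r^*} = 0$, hence a zero subtree payment for $(i,r^*)$, contradicting the hypothesis.

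So, contrary to your closing remark, a single resource does suffice, provided it is chosen at the end of a conversion chain. However, termination of that chain tacitly relies on $\convGraph$ being a DAG (\cref{lem:dag}), since a cycle of positive conversions would let the walk run forever. Your cut-style argument sums the tight flow-conservation equations over a set closed under outgoing positive conversions, so internal conversions cancel and only inflow survives. It needs no acyclicity and no walk, which makes it the more robust of the two.

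You can also simplify it: the max-ratio set is unnecessary. The whole violating set $V=\{r : p_i^r > p_{\parent(i)}^r\}$ is already closed under outgoing positive conversions: if $r\in V$ and $c_i^{(r,r')}>0$, then $p_i^{r'} = p_i^r > p_{\parent(i)}^r \ge p_{\parent(i)}^{r'}$. Your per-resource positivity check applies verbatim to every $r\in V$, so the ratios and the $+\infty$ convention can be dropped.
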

\begin{proof}
We assume that $p^r_i > p^r_{\parent(i)}$.  Consider $r_1 \in \outEdges(r)$ such that $c_i^{(r, r_1)} > 0$. By Conversion Complementarity, $p^r_i = p^{r_1}_i$. Since, $(r,r_1) \in \convGraph$,  $p_{\parent(i)}^{r_1} \leq p_{\parent(i)}^r$. Hence, $p_{\parent(i)}^{r_1} < p^{r_1}_i$. If there exists a resource $r_2 \in \outEdges(r_1)$ such that $c_i^{(r_1, r_2)} > 0$, then we can repeat the argument to achieve $p_{\parent(i)}^{r_2} < p^{r_2}_i$. We continue this process until we get to a resource $r^*$ such that $c_i^{(r^*, r')} = 0$ for all $r' \in \outEdges(r^*)$. We then have that $p_{\parent(i)}^{r^*} < p^{r^*}_i$. Since prices are non-negative, $p^{r^*}_i > p^{r^*}_{\pi(i)} \geq 0$. By Flow Conservation on $(i, r^*)$, we observe that 
    $$D_i^r(p_i^{r^*}) + \sum_{j \in \ch(i)} f_{j}^{r^*} = s_i^{r^*} + f_{i}^{r^*} + \sum_{r' \in \inEdges(r)} c_i^{(r', {r^*})} - \sum_{r' \in \outEdges(r^*)} c_i^{({r^*}, r')} =  s_i^{r^*} + f_{i}^{r^*} + \sum_{r' \in \inEdges(r)} c_i^{(r', {r^*})} $$ where we use the fact that $\sum_{r' \in \outEdges(r^*)} c_i^{({r^*}, r')} = 0$. Furthermore, Money Conservation tells us that 

    $$p_{\parent(i)}^{r^*}f_{i}^{r^*} = p_i^{r^*}\left(D_i^{r^*}(p_i^{r^*}) +\sum_{j \in \ch(i)} f_{j}^{r^*}\right) = p_i^{r^*}\left(s_i^{r^*} + f_{i}^{r^*} + \sum_{r' \in \inEdges(r)} c_i^{(r', {r^*})} \right) $$
Then, 
    $$(p_{\parent(i)}^{r^*} - p_i^{r^*})f_{i}^{r^*} =  p_i^{r^*}\left(s_i^{r^*}+ \sum_{r' \in \inEdges(r)} c_i^{(r', {r^*})} \right)$$ By assumption, $p_{\parent(i)}^{r^*} - p_i^{r^*} < 0$ but the r.h.s. is non-negative. Hence, $f_{i}^{r^*} = 0$ and $s_i^{r^*} = 0$. However, by \cref{lem: isolate flow lemma}, we deduce that $\sum_{j \in \subTree(i) } p_j^{r^*} D_j^{r^*}(p_j^{r^*}) = p_{\pi(i)}^{r^*}\; f^{r^*}_{i} = 0$, a contradiction.

\end{proof} 

This result immediately implies that,  in a feasible solution, if a pair $(i, r)$ has a small price then any other pair reachable from that node must also have as small a price.

\begin{restatable}{lemma}{LemmaPropagationOfZeroPrices} \label{lem:propagation_of_zero_prices}
    Fix an instance $(\agentTree, \convGraph, D, s)$ and a feasible solution $(f, p, c)$. Assume $$\sum_{j \in \subTree(i) } p_j^r D_j^r(p_j^r) > 0 \quad\text{ or } \quad s_i^r > 0$$ for all $(i, r)$. Fix $\epsilon \geq 0$.
    For all $p_i^r \leq  \epsilon$, $p_j^{r'} \leq \epsilon$ it holds that for any $(j, r')$ that is reachable from $(i, r)$ in the graph $\agentTree \times \convGraph$.
\end{restatable}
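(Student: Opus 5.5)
The plan is to induct along paths in the product graph $\agentTree \times \convGraph$. Any $(j, r')$ reachable from $(i, r)$ is joined to it by a finite directed path. Each edge of that path has one of two types: a \emph{tree edge} $(k, q) \to (k', q)$ with $k = \parent(k')$, or a \emph{conversion edge} $(k, q) \to (k, q')$ with $(q, q') \in \convGraph$. It therefore suffices to show that along a single edge the price weakly decreases, i.e.\ $p_{k'}^{q} \le p_k^q$ and $p_k^{q'} \le p_k^q$. Chaining these inequalities along the path gives $p_j^{r'} \le p_i^r \le \epsilon$. The claim then follows by induction on the path length, with the trivial base case of length $0$.

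A conversion edge is handled directly by Conversion Complementarity \eqref{eq:model:conversion}. Whenever $(q, q') \in \convGraph$, the complementarity relation contains the inequality $p_k^q \ge p_k^{q'}$, and this holds regardless of the conversion amount $c_k^{(q,q')}$. So prices weakly decrease across conversion edges with no further hypothesis.

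A tree edge is handled by \cref{lem: prices are decreasing}. Its hypothesis, that $\sum_{j \in \subTree(i)} p_j^r D_j^r(p_j^r) > 0$ or $s_i^r > 0$ for all $(i, r)$, is exactly the standing assumption of the present lemma. The lemma then yields $p_{k'}^q \le p_{\parent(k')}^q = p_k^q$ for every $(k', q)$. This tree-edge step is the only one that needs the nondegeneracy-type assumption. It is the step I would double-check: the assumption is stated for all pairs, so it applies at every intermediate node of the path, and the conclusion of \cref{lem: prices are decreasing} is global rather than local. Once both edge types are shown to be price-nonincreasing, the induction closes and the lemma follows for every $\epsilon \ge 0$, including $\epsilon = 0$, which gives propagation of zero prices.
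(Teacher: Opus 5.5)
Your proposal is correct and is essentially the paper's proof: conversion edges are handled by Conversion Complementarity, and tree edges by \cref{lem: prices are decreasing}, whose hypothesis is exactly this lemma's standing assumption. The paper only spells out the one-step neighbors and leaves the chaining along a path implicit, which your induction on path length makes explicit.
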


\begin{proof}
Let $p_i^r \leq  \epsilon$. By Conversion Complementary, $p_i^{r'} \leq p_i^r \leq \epsilon$ for all $r' \in \outEdges(r)$. By \cref{lem: prices are decreasing}, there exists a feasible solution in which $p^r_j \leq p^r_{i} \leq \epsilon$ for all $j \in \ch(i)$.
\end{proof}

\begin{proposition}\label{prop:smaller_price_or_zero}
    Fix an instance $(\agentTree, \convGraph, D, s)$ and a feasible solution $(f, p, c)$.
    If for any $(i, r)$ and $(i', r')$ that is reachable by $(i, j)$ in $\agentTree \times \convGraph$.
    Then either $p^{r'}_{i'} \le p^r_i$ or setting $p^{r'}_{i'}$ to $0$ produces a feasible solution where the demand, incoming/outgoing flows/conversions of $(i', r')$ stay the same.
\end{proposition}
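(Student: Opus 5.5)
We argue by induction along a path in $\agentTree \times \convGraph$ from $(i,r)$ to $(i',r')$. It suffices to handle a single edge: either a conversion edge $(i,r)\to(i,r')$ with $r' \in \outEdges(r)$, or a tree edge $(i,r)\to(i',r)$ with $i = \parent(i')$. For a conversion edge, Conversion Complementarity \eqref{eq:model:conversion} gives $p_i^{r'} \le p_i^r$ directly. So the only interesting case is the tree edge. The main obstacle is that the dichotomy does not compose naively along a path: once we zero out a price, later comparisons use the modified solution. We handle this by processing the path from $(i,r)$ outward and always working in the current (possibly modified) feasible solution. Zeroing a downstream price never raises any price, so the inequality $p \le p_i^r$ established for earlier nodes is preserved. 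Then either every node on the path has price at most $p_i^r$, or some node gets zeroed, and $0 \le p_i^r$ trivially.

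\textbf{Tree edge.} Fix $i = \parent(i')$ and suppose $p_{i'}^{r} > p_i^{r}$. We mimic the proof of \cref{lem: prices are decreasing}. Starting from $(i',r)$, follow outgoing conversion edges $r \to r_1 \to \cdots$ that carry positive conversion $c_{i'}^{(r_k, r_{k+1})} > 0$. By Conversion Complementarity each such step preserves the price at $i'$, while the parent's price weakly drops, since $(r_k,r_{k+1}) \in \convGraph$. Hence $p_{i'}^{r_k} > p_i^{r_k}$ all along the chain, and since $\convGraph$ is a DAG (\cref{lem:dag}) the chain ends at some $r^*$ with no outgoing conversions. Combining Flow Conservation (tight, since $p_{i'}^{r^*} > 0$) and Money Conservation at $(i',r^*)$ gives
\[
(p_i^{r^*} - p_{i'}^{r^*})\, f_{i'}^{r^*} = p_{i'}^{r^*}\Big(s_{i'}^{r^*} + \sum_{r'' \in \inEdges(r^*)} c_{i'}^{(r'',r^*)}\Big).
\]
The left side is $\le 0$ and the right side is $\ge 0$, so $f_{i'}^{r^*}=0$, $s_{i'}^{r^*}=0$ and there is no inflow by conversion into $r^*$. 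Money Conservation then gives $p_{i'}^{r^*}\big(D_{i'}^{r^*}(p_{i'}^{r^*}) + \sum_{j\in\ch(i')} f_j^{r^*}\big)=0$. Since the price is positive, $(i',r^*)$ demands nothing and sends nothing to its children, so $(i',r^*)$ is completely isolated.

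\textbf{Repairing the price.} To conclude about $(i',r)$ itself rather than $r^*$, run the same computation directly at $(i',r)$, allowing outgoing conversions. Money Conservation plus tight Flow Conservation give
\[
(p_i^r - p_{i'}^r) f_{i'}^r = p_{i'}^r\Big(s_{i'}^r + \textstyle\sum_{\inEdges} c - \sum_{\outEdges} c\Big).
\]
Propagating isolation back up the conversion chain, from $r^*$ toward $r$, shows that each conversion into $r^*$ vanishes. Inductively the outgoing conversions from $(i',r)$ carry nothing, so the right side is $\ge 0$. Hence $f_{i'}^r=0$, and then the demand and child flows at $(i',r)$ are zero while all incoming quantities are zero. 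Now set $p_{i'}^{r}=0$. Flow Conservation becomes an inequality allowed at price $0$. Money Conservation at $(i',r)$ reads $0=p_i^r\cdot 0$. For each child $j$ it reads $p_j^r(\cdots) = 0\cdot f_j^r$, which holds because $f_j^r = 0$. Conversion Complementarity $p_{i'}^{\tilde r}\ge 0$ for $(\tilde r, r)$ edges still holds, and outgoing edges have zero conversion with $0 \ge p_{i'}^{r''}$ needing $p_{i'}^{r''}=0$. That last requirement is the delicate point: it forces zeroing the whole isolated chain simultaneously. Since every node on it is isolated, the same verification applies to each, and the demand and incoming/outgoing flows and conversions of $(i',r')$ are unchanged (all zero).

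I expect this last bookkeeping step — showing that isolation propagates along the conversion chain and that simultaneous zeroing keeps complementarity for neighboring nodes — to be the main obstacle.
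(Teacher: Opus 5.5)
Your tree-edge isolation argument is sound, and more self-contained than the paper's appeal to \cref{lem:propagation_of_zero_prices}, once you phrase it as a contradiction. The terminal $r^*$ of the positive-conversion chain receives no conversions, which contradicts $c_{i'}^{(r_{k-1},r^*)}>0$. So the chain is trivial, and $(i',r)$ itself has no supply, inflow, conversions, demand or child flows. The gap is in the repair step. Setting $p_{i'}^r=0$ requires $p_{i'}^{r''}=0$ for every $r''\in\outEdges(r)$. These out-neighbours are not on any ``isolated chain'' (by the above, that chain is empty): they only receive zero conversion from $r$, and can have positive price and positive demand.

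Concretely, take a root $i$ with one leaf child $i'$, a single conversion edge $r\to r''$, $s_i^{r''}=1$, $D_{i'}^{r''}(p)=1/p$, and all other supplies and demands zero. Then $p_i^r=p_i^{r''}=p_{i'}^{r''}=1$, $p_{i'}^r=2$, $f_{i'}^{r''}=1$, with all other flows and conversions $0$, is feasible, and the instance is non-degenerate. Here $p_{i'}^r>p_i^r$, but zeroing $p_{i'}^r$ violates $p_{i'}^r\ge p_{i'}^{r''}=1$, and $(i',r'')$ cannot be zeroed because it consumes a unit. So the dichotomy as written is false whenever $p_i^r>0$, and no bookkeeping can close your last step in that generality. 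Separately, your path-chaining zeroes intermediate nodes, which is not the stated dichotomy for the endpoint $(i',r')$.

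What is true, and all that \cref{prop:no_degenerate_has_positive_prices,prop:degenerate_has_zero_prices} use, is the case $p_i^r=0$. There every resource reachable from $r$ has price $0$ at $i$. Your tree-edge argument then shows that every positively priced $(i',\tilde r)$ with $\tilde r$ reachable from $r$ is isolated. Zeroing all of them at once is consistent: the set is closed under outgoing conversion edges, incoming conversions into it vanish, and the children's Money Conservation holds because their flows from $i'$ are $0$. You then descend the tree.

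The paper gets the same effect by processing the resources at $i'$ in reverse topological order of $\convGraph$, sinks first. When $(i',r')$ is handled, its out-neighbours already have price at most $p_i^r$. This kills its outgoing conversions by complementarity and, when $p_i^r=0$, makes zeroing $(i',r')$ legitimate. Your outward order along a path runs the wrong way, since it zeroes a node before its conversion out-neighbours are known to be $0$.

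Finally, ``demand unchanged'' after zeroing needs $D_{i'}^{r}(0^+)=0$. This follows from $D_{i'}^r(p)=0$ at some $p>0$ only via \cref{assumption:demands}(b); both you and the paper use this tacitly.
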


\begin{proof}
    We notice that the proposition trivially holds for $i' = i$ by the Conversion rule.
    We now prove it for any agent $i' \in \ch(i)$.
    We start with a sink resource $r'$.
    Assume towards a contradiction that $p^{r'}_{i'} > p^r_i$.
    By \cref{lem:propagation_of_zero_prices} this implies that
    \begin{equation*}
        \sum_{j \in \subTree(i') } p_j^{r'} D_j^r(p_j^{r'}) = 0
        \quad \text{ or } \quad
        s_{i'}^{r'} = 0
    \end{equation*}
    which also implies that $D_{i'}^r(p_{i'}^{r'}) = 0$ since $p^{r'}_{i'} > 0$.
    In addition, by the Price Conservation axiom, \cref{lem: isolate flow lemma}, we also get that any $r'$ outgoing flow from $i'$ must be $0$, by the fact that $\sum_{j \in \subTree(i') } p_j^{r'} D_j^r(p_j^{r'}) = 0$ (the incoming money from $i'$ children is $0$) and that $p^{r'}_{i'} > 0$.
    This proves that $(i' ,r')$ is not using any resources: the demand, outgoing flow, and outgoing conversions (since $r'$ is a sink) are $0$.
    Therefore, by Flow Conversion and $p^{r'}_{i'} > 0$ everything incoming must also be $0$.
    Therefore, if we were to set $p^{r'}_{i'} = 0$ instead, nothing would change: the demand remains the same, there are no conversion rules to be broken, and the incoming flow is still $0$.

    Now fix a node $(i', r')$ and assume the proposition for all $r'' \in \outEdges(r')$ and that we have set these prices to $0$ if they were more than $p^r_i$ (in other words, they are not at most $p^r_i$). 
    Now consider that $p_{i'}^{r'} > p^r_i$.
    We use the same argument as before: by \cref{lem:propagation_of_zero_prices} there is no demand at $(i', r')$, there are no outgoing conversions since the prices of all $r'' \in \outEdges(r')$, and there is no outgoing flow.
    This means by Flow Conservation that the incoming flow and incoming conversions are $0$, making it inconsequential if we set $p_{i'}^{r'}$ to $0$.
\end{proof}

\TheoremDegeneratePrices*

We break the theorem into the following propositions.

\begin{proposition} \label{prop:no_degenerate_has_positive_prices}
    If no degenerate subgraph exists, then all the prices are positive.
\end{proposition}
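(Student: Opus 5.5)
We argue by contrapositive: from a feasible solution $(f,p,c)$ with some zero price, we extract a degenerate subgraph. Let $Z=\{(i,r): p_i^r=0\}$ and suppose $Z\neq\emptyset$. The plan has three steps: make $Z$ absorbing, show that $Z$ has at least as much supply as demand, and then pass to a minimal subset.

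\textbf{Step 1: $Z$ may be taken absorbing.} For the edges $(i,r)\to(i,r')$ with $r'\in\outEdges(r)$, Conversion Complementarity gives $p_i^{r'}\le p_i^r=0$, so $(i,r')\in Z$. For the tree edges $(i,r)\to(j,r)$ with $j\in\ch(i)$, we apply \cref{prop:smaller_price_or_zero} with the reference price $0$. Either $p_j^r\le 0$ already, or we may reset $p_j^r$ to $0$ while keeping feasibility and leaving the demand, flows and conversions at $(j,r)$ unchanged. Iterating top-down along the tree, we can replace the solution by one in which the zero-price set is closed under reachability in $\agentTree\times\convGraph$. We rename this set $Z$; it is nonempty and absorbing.

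\textbf{Step 2: supply covers demand on $Z$.} Since $Z$ is absorbing, resources can enter $Z$ from outside (flows from a parent outside $Z$, or conversions from a resource outside $Z$). However, no resource leaves $Z$. We then use Money Conservation \eqref{eq:model:money} on any $(j,r)\in Z$ whose parent $(\parent(j),r)\notin Z$. Here $p_{\parent(j)}^r>0$ while the left-hand side is $p_j^r(\cdots)=0$, so $f_j^r=0$.

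For incoming conversions $c_i^{(r',r)}$ with $(i,r')\notin Z$ and $(i,r)\in Z$, complementarity \eqref{eq:model:conversion} requires $p_i^{r'}=p_i^r=0$. This contradicts $(i,r')\notin Z$, so these conversions vanish.

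Summing the Flow Conservation inequality \eqref{eq:model:flow} over all $(i,r)\in Z$, the internal flows and conversions telescope and the boundary inflows vanish. This gives $\sum_{Z}D_i^r(0)\le\sum_Z s_i^r$, reading $D_i^r(p_i^r)$ at $p_i^r=0$ as the limit value (finite here, since the inequality bounds it).

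\textbf{Step 3: extract a degenerate subgraph.} $Z$ is an absorbing set satisfying the supply--demand inequality. Among its absorbing subsets that also satisfy the inequality, we pick one that is minimal under inclusion; this exists because the ground set is finite. By \cref{def:model:degenerate}, that subset is degenerate, which gives the contradiction.

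\textbf{Main obstacle.} The delicate point is Step 1, namely propagating zero prices down tree edges. \cref{lem: prices are decreasing} only applies under its positivity hypothesis, so we rely on \cref{prop:smaller_price_or_zero} to reset offending prices to $0$ without breaking feasibility. We must check that these resets, done in top-down order, never invalidate earlier ones. A second delicate point is the reading of $D_i^r(0)$ in Step 2, which requires the limit convention and the observation that the summed inequality forces finiteness.
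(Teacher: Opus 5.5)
Your proposal is correct and follows essentially the same route as the paper. Both take the zero-price nodes together with everything reachable from them, use \cref{prop:smaller_price_or_zero} to zero out that whole set, eliminate boundary inflows via Money Conservation and Conversion Complementarity, and sum Flow Conservation to get $\sum D_i^r(0)\le\sum s_i^r$ on an absorbing set; your Step~3 (which should pick a minimal \emph{nonempty} absorbing subset, since the empty set satisfies the inequality trivially) and your reading of $D_i^r(0)$ as a limit only make explicit what the paper's one-line conclusion leaves implicit.
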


\begin{proof}
    Towards a contradiction, assume that for some node there is a feasible solution whose price is $0$.
    Let $\mathcal G$ be the nodes $(i, r)$ whose prices are $0$, along with every node that is reachable by these nodes.
    By \cref{prop:smaller_price_or_zero}, we can assume that there is a feasible solution where the prices are $0$ for all nodes in $\mathcal G$.
    Now notice that for any $(i, r) \in \mathcal G$ and $(\pi(i), r) \notin \mathcal G$ it has to hold that $f_i^r = 0$: by Money Conservation, either $f_i^r = 0$ or $p_{\pi(i)}^r = 0$, but the last is impossible by definition of $\mathcal G$.
    Also notice that for every $(i, r) \in \mathcal G$ and $(i, r') \notin \mathcal G$ with $(r', r) \in \convGraph$, it has to hold that $c_i^{(r', r)} = 0$; otherwise, by Conversion Complementarity, it has to hold that $p_i^{r'} = 0$, which contradicts the definition of $\mathcal G$.
    This means that there is no incoming flow or conversion into $\mathcal G$.
    Since $\mathcal G$ is absorbing, there is no outgoing flow of conversion out of $\mathcal G$.
    This implies that if we sum up Flow Conservation for all $(i, r) \in \mathcal G$ we get
    \begin{equation*}
        \sum_{(i, r) \in \mathcal G} D_i^r(0)
        \le
        \sum_{(i, r) \in \mathcal G} s_i^r
    \end{equation*}
    which directly implies our assumption that there is no degenerate subgraph.
\end{proof}

\begin{proposition} \label{prop:degenerate_has_zero_prices}
    Fix a degenerate subgraph $\mathcal G$ and a feasible solution $(f, p, c)$.
    Then, setting the prices of $p_i^r$ to $0$ for $(i, r) \in \mathcal G$ remains a feasible solution.
\end{proposition}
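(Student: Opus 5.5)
The plan is to build the new solution $(f', p', c')$ explicitly and check each axiom node by node. For $(i,r) \in \mathcal G$ set $p_i^{\prime r} = 0$. Set every flow $f_i^{\prime r}$ with $(i,r)\in\mathcal G$, $(\parent(i),r)\notin\mathcal G$ to $0$, and every conversion $c_i^{\prime(r',r)}$ with $(i,r)\in\mathcal G$, $(i,r')\notin\mathcal G$ to $0$. Since $\mathcal G$ is absorbing, no edges leave $\mathcal G$. So after this surgery the restriction to $\mathcal G$ is a closed subsystem.

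\textbf{Inside $\mathcal G$.} Here I must exhibit flows and conversions making Flow Conservation hold as an inequality at every node, with all demands evaluated at price $0$, i.e. $D_i^r(0)$. Money Conservation is then trivial, since both sides vanish. Conversion Complementarity is trivial inside $\mathcal G$, since all prices are equal.

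The existence of such internal flows is a feasibility question on the network $\mathcal G$: supplies $s_i^r$ sit at nodes, demands $D_i^r(0)$ must be met, and goods are routed along the edges of $\agentTree\times\convGraph$ restricted to $\mathcal G$. By a Gale/Hoffman-type (max-flow min-cut) criterion, a feasible routing exists iff every absorbing (downward-closed) subset $\mathcal S\subseteq\mathcal G$ satisfies
\[
\sum_{(i,r)\in\mathcal S} D_i^r(0)\le \sum_{(i,r)\in\mathcal S \cup \mathrm{anc}(\mathcal S)\cap\mathcal G} s_i^r,
\]
where the right-hand side counts the supply that can reach $\mathcal S$. The main obstacle is that minimality in \cref{def:model:degenerate} only says no strict absorbing subset $\mathcal S$ has $\sum_{\mathcal S} D(0)\le\sum_{\mathcal S}s$. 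This is the reverse direction, so I need to use it cleverly. First I would check whether $D_i^r(0)$ could be infinite; it cannot inside $\mathcal G$, since the total is bounded by total supply.

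\textbf{The cut argument.} Take a violating cut $\mathcal S$. Its complement-upward set $\mathcal U=\mathcal G\setminus\mathcal S'$, where $\mathcal S'$ is the set of nodes that can reach $\mathcal S$, must then have strictly more supply than demand on $\mathcal G\setminus \mathcal S'$. I would choose a minimal violating structure and show that some strict absorbing subset of $\mathcal G$ has supply $\ge$ demand, contradicting minimality. If this direct route fails, I would instead use the minimal choice to argue there is a positive surplus that can be routed, by induction on $|\mathcal G|$ via the min-cut algorithm of \cref{thm:computation_of_degenerate}.

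\textbf{Outside $\mathcal G$.} For $(i,r)\notin\mathcal G$, prices and all other variables are unchanged. Removing inflow into $\mathcal G$ only reduces the outgoing terms of boundary nodes $(\parent(i),r)$ and $(i,r')$, so the Flow Conservation inequality still holds. Complementarity is the subtle part: a boundary node may now have slack while its price is positive. To handle this, I would apply \cref{prop:smaller_price_or_zero} and Money Conservation at the boundary to argue the original boundary flows into $\mathcal G$ were already $0$, or carried only zero-money amounts. This also needs to cover conversions into $\mathcal G$ (using $p_i^{r'}\ge 0 = p_i^r$), so the complementarity slack is unchanged. Money Conservation at boundary children is $0\cdot(\cdot)=p_{\parent(i)}^r\cdot 0$.
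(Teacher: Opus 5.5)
\textbf{The gap is at the boundary of $\mathcal G$, which is where the proposition's content lies.} Zeroing a flow $f_i^r$ from a positive-priced parent $(\parent(i),r)\notin\mathcal G$ into $(i,r)\in\mathcal G$ is harmless only if $f_i^r$ was already $0$. The same holds for a conversion $c_i^{(r',r)}$ with $p_i^{r'}>0$. Otherwise the sender is left with slack at a positive price. Moreover, Money Conservation at $(\parent(i),r)$ fails, since its left-hand side contains $p_{\parent(i)}^r f_i^r$; you only checked Money Conservation at the child.

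Your suggested justification cannot show that these inflows vanish. A configuration with $p_{\parent(i)}^r\ge p_i^r>0$ and $f_i^r>0$ is consistent with every local axiom. It is also consistent with \cref{prop:smaller_price_or_zero}, which only compares prices along reachable pairs.

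What rules it out is a global supply-demand count over $\mathcal G$ combined with minimality:
\begin{itemize}
\item Let $\mathcal G_0$ be the zero-priced nodes of $\mathcal G$ together with everything reachable from them; via \cref{prop:smaller_price_or_zero} they can all be given price $0$. Let $\mathcal G_+=\mathcal G\setminus\mathcal G_0$.
\item Nothing flows or converts from $\mathcal G_+$ into $\mathcal G_0$. For flows this follows from Money Conservation, and for conversions from Conversion Complementarity, because one price is positive and the other zero. Nothing leaves $\mathcal G$, since it is absorbing.
\item Summing the tight Flow Conservation equalities over $\mathcal G_+$ gives $\sum_{\mathcal G_+}D_i^r(p_i^r)=\sum_{\mathcal G_+}s_i^r+F_{\text{in}}+C_{\text{in}}$, where $F_{\text{in}},C_{\text{in}}\ge 0$ are the inflows from outside $\mathcal G$.
\item Hence $\sum_{\mathcal G_+}D_i^r(0)\ge\sum_{\mathcal G_+}s_i^r$. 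Combined with degeneracy of $\mathcal G$, the absorbing set $\mathcal G_0$ has supply at least demand. Minimality then forces $\mathcal G_0\in\{\emptyset,\mathcal G\}$.
\item If $\mathcal G_0=\mathcal G$, inflows from positive-priced senders vanish by the same two local axioms.
\item If $\mathcal G_0=\emptyset$, the chain $\sum_{\mathcal G}s\ge\sum_{\mathcal G}D(0)\ge\sum_{\mathcal G}D(p)=\sum_{\mathcal G}s+F_{\text{in}}+C_{\text{in}}$ forces $F_{\text{in}}+C_{\text{in}}=0$. It also forces $D_i^r(p_i^r)=D_i^r(0)$ for every $(i,r)\in\mathcal G$.
\end{itemize}

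This last consequence makes your max-flow construction inside $\mathcal G$ unnecessary. The original flows and conversions already satisfy Flow Conservation at price $0$, so you can simply set the prices to $0$ and keep everything else, as the paper does.

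The internal step you flagged as the main obstacle is not actually in the reverse direction. The Gale/Hoffman condition should range over predecessor-closed sets $Y\subseteq\mathcal G$ (equivalently, ancestor closures of arbitrary sets), not over absorbing ones. For such $Y$, the set $\mathcal G\setminus Y$ is absorbing, so minimality gives $\sum_{Y}(s_i^r-D_i^r(0))=\sum_{\mathcal G}(s_i^r-D_i^r(0))-\sum_{\mathcal G\setminus Y}(s_i^r-D_i^r(0))\ge 0$ in one line. That part of your plan works. By itself, however, it says nothing about what the original solution was doing at the boundary, which is the missing step above.
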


\begin{proof}
    Let $\mathcal G_0$ be the subset of $\mathcal G$ whose prices are $0$, along with any nodes reachable by them.
    By \cref{prop:smaller_price_or_zero}, we can assume that all the prices of $\mathcal G_0$ are $0$, without changing anything.
    Now let $\mathcal G_+ = \mathcal G \setminus \mathcal G_0$.
    Because $\mathcal G_+$ has positive prices, there is no flow or conversions between $\mathcal G_+$ and $\mathcal G_0$.
    Also, because $\mathcal G$ is absorbing, there is no outgoing flow or conversion from $\mathcal G_+$ in general.
    This lets us use Flow Conservation on all $(i, r) \in \mathcal G_+$ with equality; summing them all up, we get
    \begin{equation*}
        \sum_{(i, r) \in \mathcal G_+} D_i^r(p_i^r)
        =
        \sum_{(i, r) \in \mathcal G_+} s_i^r
        + F_{\text{in}}
        + C_{\text{in}}
    \end{equation*}
    where $F_{\text{in}}$ and $C_{\text{in}}$ is the total incoming flow and conversion into $\mathcal G_+$ from outside $\mathcal G$.
    Because $F_{\text{in}} + C_{\text{in}} \ge 0$ and by non-increasingess of $D_i^r(\cdot)$ we have that
    \begin{equation*}
        \sum_{(i, r) \in \mathcal G_+} D_i^r(0)
        \ge
        \sum_{(i, r) \in \mathcal G_+} s_i^r
    \end{equation*}
    This is a contradiction: by $\sum_{(i, r) \in \mathcal G} D_i^r(0) \le \sum_{(i, r) \in \mathcal G} s_i^r$ we get that
    \begin{equation*}
        \sum_{(i, r) \in \mathcal G_0} D_i^r(0)
        \le
        \sum_{(i, r) \in \mathcal G_0} s_i^r
    \end{equation*}
    which implies that $\mathcal G_0$ cannot be a strict non-empty subset of $\mathcal G$ because then $\mathcal G$ would not be degenerate---recall \cref{def:model:degenerate} requires that a subgraph has to be minimal that satisfies the above property.
    This proves that either
    (a) $\mathcal G_+ = \emptyset$, i.e., we can set the prices of all nodes in $\mathcal G$ to $0$ without changing anything, or
    (b) $\mathcal G_+ = \mathcal G$ which gives that
    \begin{equation*}
        \sum_{(i, r) \in \mathcal G_+} s_i^r
        \ge
        \sum_{(i, r) \in \mathcal G_+} D_i^r(0)
        \ge
        \sum_{(i, r) \in \mathcal G_+} D_i^r(p_i^r)
        \ge
        \sum_{(i, r) \in \mathcal G_+} s_i^r
        + F_{\text{in}}
        + C_{\text{in}}
    \end{equation*}
    which proves that $F_{\text{in}} + C_{\text{in}} = 0$.
    This proves there is no incoming flow or conversion into $\mathcal G$.
    It also proves that each agent gets her demand at price $0$, which means we can change all the prices to $0$ without affecting anything (all the flows and conversions in $\mathcal G$ remain feasible if we change \textit{all} the prices in $\mathcal G$ to $0$).
\end{proof}

\begin{proof}[Proof of \cref{thm:degenerate_and_prices}]
    Follows from \cref{prop:no_degenerate_has_positive_prices,prop:degenerate_has_zero_prices}.
\end{proof}

\begin{lemma}\label{lem: flow in a subgraph}
   Fix an instance $(\agentTree, \convGraph, D, s)$ and an absorbing subgraph $\mathcal{G} \subset \agentTree \times \convGraph$. Let $\mathcal{E}_{\text{in}}(\mathcal{G})$ be the set of $(i, r) \in \mathcal{G}$ such that $(\pi(i), r) \in \mathcal{G}^c$ and $\mathcal{F}_{\text{in}}(\mathcal{G})$ be the set of $(i, r, r')$ such that $(r, r') \in \convGraph$, $(i, r) \in\mathcal{G}^c$ and $(i, r') \in\mathcal{G}$. 
   For a solution $(f, p, c)$ to be feasible, it must be that 
   \begin{equation}\label{eq: flow sum over subtree}
        \sum_{(i,r) \in \mathcal{G}}D_i^r(p_i^r) - \sum_{(i,r) \in \mathcal{G}}s_i^r \leq \sum_{(i, r) \in \mathcal{E}_{\text{in}}(\mathcal{G})}f_{i}^r + \sum_{(i, r', r) \in \mathcal{F}_{\text{in}}(\mathcal{G}) } c_i^{(r', r)} 
    \end{equation}
    Furthermore, the inequality is equality unless $p_i^r = 0$ for some $(i, r) \in \mathcal{G}$.
\end{lemma}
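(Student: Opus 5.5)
The plan is to sum the Flow Conservation inequality \eqref{eq:model:flow} over all pairs $(i,r)\in\mathcal G$ and show that every internal transfer cancels. What survives is exactly the flows and conversions crossing into $\mathcal G$ from $\mathcal G^c$. Since $\mathcal G$ is absorbing, nothing crosses out of $\mathcal G$.

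First I will write, for each $(i,r)\in\mathcal G$,
\[
D_i^r(p_i^r) - s_i^r \le f_i^r + \sum_{r'\in\inEdges(r)} c_i^{(r',r)} - \sum_{j\in\ch(i)} f_j^r - \sum_{r'\in\outEdges(r)} c_i^{(r,r')} ,
\]
with the convention $f_i^r=0$ at the root, and sum these inequalities over $\mathcal G$. Next I classify each flow variable $f_j^r$, which corresponds to the product-graph edge $(\parent(j),r)\to(j,r)$, and each conversion $c_i^{(r,r')}$, which corresponds to the edge $(i,r)\to(i,r')$.

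\begin{itemize}
\item If both endpoints lie in $\mathcal G$, the variable appears once with a plus sign (at the head) and once with a minus sign (at the tail), so it cancels.
\item If the tail is in $\mathcal G$ and the head is in $\mathcal G^c$, this would be an outgoing edge of $\mathcal G$, which is impossible because $\mathcal G$ is absorbing. So every negative term belonging to $\mathcal G$ is cancelled.
\item If the tail is in $\mathcal G^c$ and the head is in $\mathcal G$, the variable appears only with a plus sign. By definition these are exactly $f_i^r$ for $(i,r)\in\mathcal E_{\text{in}}(\mathcal G)$ and $c_i^{(r',r)}$ for $(i,r',r)\in\mathcal F_{\text{in}}(\mathcal G)$.
\end{itemize}

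Terms with both endpoints outside $\mathcal G$ do not appear at all. This yields \eqref{eq: flow sum over subtree}.

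For the equality claim, the complementarity in \eqref{eq:model:flow} says that each summand inequality holds with equality whenever $p_i^r>0$. So if every price in $\mathcal G$ is positive, each of the summed inequalities is an equality, and the same cancellation gives equality in \eqref{eq: flow sum over subtree}. Contrapositively, strict inequality forces some $(i,r)\in\mathcal G$ to have $p_i^r=0$.

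The only delicate point is the bookkeeping. I need to verify that the definition of absorbing (no edges of $\agentTree\times\convGraph$ from $\mathcal G$ to $\mathcal G^c$) covers both kinds of outgoing transfer: flows to children, and conversions $r\to r'$ with $r'\in\outEdges(r)$. I also need to check that the root convention $f_{\text{root}}^r=0$ does not create any extra term. Both are direct from the definitions, so I do not expect a real obstacle.
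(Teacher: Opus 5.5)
Your proposal is correct and follows essentially the same route as the paper. You sum Flow Conservation over $\mathcal G$, let transfers along edges internal to $\mathcal G$ cancel, use absorbingness to rule out outgoing flows and conversions, and identify the surviving terms as exactly those indexed by $\mathcal{E}_{\text{in}}(\mathcal G)$ and $\mathcal{F}_{\text{in}}(\mathcal G)$; complementarity then gives equality when all prices in $\mathcal G$ are positive. Your explicit edge-by-edge bookkeeping, including the point that absorbingness is what eliminates the outgoing terms (which the paper leaves implicit), is a slightly cleaner presentation of the same argument.
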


\begin{proof}
    We sum the Flow Conservation equations over $(i, r) \in \mathcal{G}$:
    $$\sum_{(i,r) \in \mathcal{G}}D_i^r(p_i^r) + \sum_{(i,r) \in \mathcal{G}}\left(\sum_{j \in \ch(i)} f_{j}^r - f_{i}^r\right) + \sum_{(i,r) \in \mathcal{G}}\left(\sum_{r' \in \outEdges(r)} c_i^{(r, r')} - \sum_{r' \in \inEdges(r)} c_i^{(r', r)}\right) \leq \sum_{(i,r) \in \mathcal{G}}s_i^r.$$

    By complementarity, all the inequalities are equal if $p_i^r > 0$ for all $(i, r) \in \mathcal{G}$. Now, consider the sum
    $\sum_{(i,r) \in \mathcal{G}}\left(\sum_{j \in \ch(i)} f_{j}^r - f_{i}^r\right)$. The only terms that do not cancel out are those in $\mathcal{E}_{\text{in}}(\mathcal{G})$ and $\mathcal{E}_{\text{out}}(\mathcal{G})$ and these terms appear with negative signs, and positive signs respectively. Similarly, we consider the sum $\left(\sum_{r' \in \outEdges(r)} c_i^{(r, r')} - \sum_{r' \in \inEdges(r)} c_i^{(r', r)}\right)$ and made an analogous observation.

    Hence, we notice that the terms that do not cancel out exactly correspond to the (directed) edges between $\mathcal{G}$ and $\mathcal{G}^c$. Thus, rearranging the terms, we achieve \cref{eq: flow sum over subtree}.
\end{proof}

\subsection{Identifying Degenerate Subgraphs}\label{sec: identify degen graph}
\begin{proof}[Proof of \cref{thm:computation_of_degenerate}]
    We can reduce the problem of deciding non-degeneracy in a min cut problem. Specifically, consider a graph $\mathcal{H}$ consisting of $\agentTree \times \convGraph$ and a source and sink node. The capacity of each edge in $\agentTree \times \convGraph$ is $\infty$. Add an edge from the source to node $(i, r)$ with capacity $s_i^r$ and add an edge from node $(i, r)$ to the sink with capacity $D_i^r(0)$.

    Consider a cut in this graph. Let $\mathcal{G}$ be the set of vertices with the source in this cut. If there is an edge from $\mathcal{G}$ to $\mathcal{G}^c$ in $\agentTree \times \convGraph$, this cut has value $\infty$. Hence, in any min cut finite cut $\mathcal{G}$ is downwards closed. Furthermore, the value of this cut is given by 
    $$\sum_{(i, r) \in \mathcal{G}} D_i^r(0) + \sum_{(i, r) \not\in \mathcal{G}}s_i^r = \sum_{(i, r) \in \mathcal{G}} D_i^r(0) - \sum_{(i, r) \in \mathcal{G}}s_i^r +  \sum_{(i, r) \in \agentTree \times \convGraph}s_i^r.$$ Note that there is a cut with value $\sum_{(i, r) \in \agentTree \times \convGraph}s_i^r$ (the cut with the source isolate). Hence, if the value of the min cut is $\sum_{(i, r) \in \agentTree \times \convGraph}s_i^r$,

    $$\sum_{(i, r) \in \agentTree \times \convGraph}s_i^r \leq \sum_{(i, r) \in \mathcal{G}} D_i^r(0) - \sum_{(i, r) \in \mathcal{G}}s_i^r +  \sum_{(i, r) \in \agentTree \times \convGraph}s_i^r$$ then $\sum_{(i, r) \in \mathcal{G}} D_i^r(0) - \sum_{(i, r) \in \mathcal{G}}s_i^r \geq 0$ for all downwards closed graphs. 
    
    Assume, instead that there exists a cut with smaller value. Then, for the corresponding $\mathcal{G}$, 
    $$\sum_{(i, r) \in \agentTree \times \convGraph}s_i^r > \sum_{(i, r) \in \mathcal{G}} D_i^r(0) - \sum_{(i, r) \in \mathcal{G}}s_i^r +  \sum_{(i, r) \in \agentTree \times \convGraph}s_i^r.$$ Thus, $\mathcal{G}$ is degenerate. Hence, we can decide whether the graph has a strictly degenerate subgraph and identify such a graph.

    Assume the min cut value is $\sum_{(i, r) \in \agentTree \times \convGraph}s_i^r$.
    To decide whether there exists a degenerate subgraph $\mathcal{G}$, we recompute the flow value after updating changing the capacity of the source to $(i',r')$ edge from $s_i^r$ to $s_i^r + 1$ (one at a time). Let $\tilde{s}$ be the update capacities. Assume there exist $(i, r)$ such that the min cut value is still $\sum_{(i, r) \in \agentTree \times \convGraph}s_i^r$. Consider the graph $\mathcal{G}$ corresponding to the min cut. It must be that $(i', r') \in \mathcal{G}$, otherwise that value of the min cut is 
    $$\sum_{(i, r) \in \mathcal{G}} D_i^r(0) + \sum_{(i, r) \not\in \mathcal{G}}\tilde{s}_i^r = \sum_{(i, r) \in \mathcal{G}} D_i^r(0) + \sum_{(i, r) \not\in \mathcal{G}}s_i^r + 1 \geq 1 + \sum_{(i, r) \not\in \agentTree\times \convGraph}s_i^r.$$
    We then see that 
    $$\sum_{(i, r) \in \mathcal{G}} D_i^r(0) + \sum_{(i, r) \not\in \mathcal{G}}\tilde{s}_i^r =\sum_{(i, r) \in \agentTree\times \convGraph}s_i^r$$ and $\mathcal{G} \neq \emptyset$ which implies that $\mathcal{G}$ is degenerate.
    
    We now claim that if the min cut value increases for all $(i',r')$ changed, the graph is non-degenerate. Indeed, fix $\mathcal{G}$ and select $(i, r) \in \mathcal{G}$. Then,  
    $$\sum_{(i, r) \in \mathcal{G}} D_i^r(0) + \sum_{(i, r) \not\in \mathcal{G}}s_i^r \geq \sum_{(i, r) \in \agentTree\times \convGraph}s_i^r + \epsilon.$$ Thus, $\mathcal{G}$ is non-degenerate.

    Thus, we have demonstrated a procedure to deduce where the instance is degenerate or non-degenerate and find such a subgraph. To find a minimally degenerate subgraph, we take our degenerate subgraph, remove a node with no incoming edges and check if the resulting graph is also degenerate. If so, we repeat this process. If not, we do the same for all other nodes of the graph with no incoming edges. If the graph is non-degenerate after removing any of these nodes we can conclude that no subgraph of this graph is degenerate (since no subgraph can contain all nodes with no incoming edges and be downwards closed without just being the whole graph).

    Since min cuts can be found efficiently, we conclude that this algorithm can be carried out efficiently.
\end{proof}

\subsection{Simplifying the Flow Variables}
\label{ssec:app:simplifying_flows}

\begin{lemma}\label{lem: isolate flow lemma}
    Fix an instance $(\agentTree, \convGraph, D, s)$. Consider a solution $(f, p, c)$ for the instance. The solution $(f, p, c)$ satisfies Money Conservation if and only if, for all non-root agents $i \in [n]$ and resources $r \in [m]$,
    $$p_{\pi(i)}^r\; f^r_{i} = \sum_{j \in \subTree(i)} p_{j}^r\;D_j^r(p_{j}^r)$$
    where $\subTree(i) \subseteq [n]$ are the agents who are reachable by $i$ in $\agentTree$.
\end{lemma}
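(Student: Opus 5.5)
We prove both directions by strong induction on the height of the subtree $\subTree(i)$, i.e., processing agents from the leaves upward. The only ingredient is the Money Conservation identity \cref{eq:model:money}, rewritten with the flows to the children of $i$ replaced by the payments they make to $i$.

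\textbf{Forward direction.} Assume Money Conservation holds for all non-root $(i,r)$. For a leaf $i$ we have $\ch(i)=\emptyset$ and $\subTree(i)=\{i\}$, so \cref{eq:model:money} reads $p_i^r D_i^r(p_i^r) = p_{\parent(i)}^r f_i^r$, which is exactly the claim.

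For a general non-root $i$, suppose the identity holds for every child $j \in \ch(i)$. Since $\parent(j)=i$, this gives $p_i^r f_j^r = \sum_{k\in\subTree(j)} p_k^r D_k^r(p_k^r)$. Expanding the left side of \cref{eq:model:money},
\[
p_i^r\Big(D_i^r(p_i^r)+\sum_{j\in\ch(i)} f_j^r\Big) = p_i^r D_i^r(p_i^r) + \sum_{j\in\ch(i)} p_i^r f_j^r .
\]
Substituting the inductive hypothesis into each child term turns the right side into $p_i^rD_i^r(p_i^r)+\sum_{j\in\ch(i)}\sum_{k\in\subTree(j)}p_k^rD_k^r(p_k^r)$. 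Because $\subTree(i)$ is the disjoint union of $\{i\}$ and the sets $\subTree(j)$ for $j\in\ch(i)$, this equals $\sum_{k\in\subTree(i)}p_k^rD_k^r(p_k^r)$. By \cref{eq:model:money} the left side equals $p_{\parent(i)}^r f_i^r$, which proves the claim for $i$.

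\textbf{Reverse direction.} Assume the subtree identity holds for all non-root $i$. For each non-root $i$, write $\sum_{k\in\subTree(i)} = p_i^rD_i^r(p_i^r)+\sum_{j\in\ch(i)}\sum_{k\in\subTree(j)}$. Replace each inner child sum by $p_i^r f_j^r$ using the hypothesis at $j$ (which is non-root). The result is $p_i^r\big(D_i^r(p_i^r)+\sum_{j\in\ch(i)} f_j^r\big)$. By the hypothesis at $i$ this also equals $p_{\parent(i)}^r f_i^r$, which is exactly \cref{eq:model:money}. No induction is needed here, only the hypothesis at $i$ and at its children.

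\textbf{Main obstacle.} There is no real difficulty. The only care points are the bookkeeping of the disjoint decomposition of $\subTree(i)$, and the observation that children of a node are never the root, so the hypothesis applies to them. Note that the argument does not divide by any price, so it is valid even when some prices are zero.
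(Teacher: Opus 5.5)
Your proof is correct and matches the paper's argument: a leaf-to-root induction for the forward direction, combining Money Conservation with the disjoint decomposition of $\subTree(i)$ into $\{i\}$ and the children's subtrees. For the reverse direction the paper only says it ``follows by an identical argument''; your direct, non-inductive derivation from the hypothesis at $i$ and at its (necessarily non-root) children is exactly what that argument amounts to.
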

\begin{proof}
    We show the forward direction and the reverse direction follows by an identical argument. We show this result inductively. Consider a leaf $i \in [n]$. Then, $\ch(i') = \emptyset$. Then, Money Conservation of $i'$ implies that  
    $$p_{\pi(i)}^r\;f_{i}^r = p_{i}^r\; D_i^r(p_{i}^r) + p_{i}^r\sum_{j \in \ch(i)} f_{j}^r = p_{i}^r\; D_{i}^r(p_{i}^r)$$
    Hence, the base case holds. Fix non-root $i \in [n]$ and assume the result holds for all children of $i$. The result then follows by applying the induction hypothesis to the Money Conservation equation for the pair $(i, r)$:
    \begin{align*}
        p_{\pi(i)}^r\;f_{i}^r &= p_{i}^r\; D_i^r(p_{i}^r) + \sum_{j \in \ch(i)} p_{i}^r\;f_{j}^r\\
        &= p_{i}^r\; D_i^r(p_{i}^r) + \sum_{j \in \ch(i)} \sum_{k \in \subTree(j) } p_{k}^r\;D_k^r(p_{k}^r)= \sum_{k \in \subTree(i)} p_{k}^r\;D_k^r(p_{k}^r)
    \end{align*}
    where we use the observation that $\bigcup_{j \in \ch(i)} \subTree(j) \cup \{i\} = \subTree(i)$ and the terms in the union are disjoint since $\agentTree$ is a tree. Thus, by induction, the result holds for all non-leaf agents.
\end{proof}

\begin{proof}[Proof of \cref{lem: problem reformulation}]
    Consider a feasible solution $(f, p, c)$. By \cref{lem: isolate flow lemma}, for any non-root $i \in [n]$, 
    $f^r_{i} = \frac{1}{p_{\parent(i)}^r}\;\sum_{j \in \subTree(i)  } p_{j}^r\;D_j^r(p_{j}^r).$ Furthermore, for any $i \in [n]$ and $r \in [m]$,
    $$p_i^r\sum_{i' \in \ch(i)}\; f^r_{i'} = \sum_{i' \in \ch(i)}\sum_{j \in \subTree(i') } p_{j}^r\;D_j^r(p_{j}^r) = \sum_{j \in \subTree(i)\setminus\{i\} } p_{j}^r\;D_j^r(p_{j}^r) $$
    
    We now replace the flow variables in the Flow Conservation equation for $(i, r)$ where $i$ is not the root to achieve
    $$ D_i^r(p_i^r) + \frac{1}{p_i^r}\sum_{j \in \subTree(i)\setminus\{i\} } p_{j}^r\;D_j^r(p_{j}^r) - \frac{1}{p_{\parent(i)}^r}\;\sum_{j \in \subTree(i) } p_{j}^r\;D_j^r(p_{j}^r) = s_i^r + \sum_{r' \in \inEdges(r)} c_i^{(r', r)} - \sum_{r' \in \outEdges(r)} c_i^{(r, r')}$$
    Rearranging terms, we get \eqref{eq: payment eq} for non-root nodes.
    A similar calculation for the root nodes yields the full \eqref{eq: payment eq}.

    We now prove the final part of the lemma.
    Assume $(p, c)$ satisfies \cref{eq: payment eq}.
    By \cref{lem: isolate flow lemma}, if there exists $(f, p, c)$ that is feasible, it must be that, for any non-root $i \in [n]$, 
    $f^r_{i} = \frac{1}{p_{\parent(i)}^r}\;\sum_{j \in \subTree(i) } p_{j}^r\;D_j^r(p_{j}^r)$. consider this solution. By \cref{lem: isolate flow lemma}, Money Conservation holds. By manipulating \cref{eq: payment eq} as we did in the previous case, we can see that Flow Conservation holds. Hence, $f$ is the unique set of flow values such that $(f, p, c)$ is a feasible solution.
\end{proof}

\section{Omitted Proof of Section \ref{sec:solution}}

\begin{proof}[Proof of \cref{thm:harmonic:main}]
    We first show that, for the solution of program \eqref{eq: optimization problem}, we can construct conversions satisfying \cref{eq:harmonic:flow,eq:harmonic:conversion}.
    Define
    \begin{equation*}
        \beta^r_i = \sum_{j \in \subTree(i) } B_j^r
        \qquad\textrm{ and }\qquad
        \sigma^r_i = \frac{\ind{i \neq \text{root}}}{p_{\parent(i)}^r}\sum_{j \in \subTree(i) } B_j^r+ s_i^r
    \end{equation*}
    
    We consider the KKT conditions for the program.
    Let $c_i^{(r, r')}$ be the dual variable for the constraint $p_i^r \leq p_i^{r'}$ for $(r, r') \in \convGraph$.
    The Lagrangian $L(p_i, c_i)$ is then given by:
    $$L(p_i, c_i) = \sum_{r \in [m]}\beta^r_i\log(p_i^r) - \sum_{r \in [m]}p_i^r\sigma_i^r +  \sum_{(r, r') \in \convGraph} c_i^{(r, r')}(p_i^r - p_i^{r'}).$$

    We now see that for every $r \in [m]$
    \begin{align*}
        \frac{\partial L(p_i, c_i)}{\partial p_i^r}
        =
        \frac{\beta_i^r}{p_i^r} - \sigma_i^r + \sum_{r'\in \outEdges(r)}c_i^{(r, r')} - \sum_{r'\in \inEdges(r)}c_i^{(r', r)}
        .
    \end{align*}
    Setting the above to $0$ gives \eqref{eq:harmonic:flow}, which is exactly what the stationarity of the optimal solution, $\frac{\partial L(p_i, c_i)}{\partial p_i^r} = 0$, implies.
    Hence, the Price Conservation equation holds.
    On the other hand, primal and dual feasibility imply that $c_i^{(r, r')} \geq 0$ and $ p_i^r \leq p_i^{r'}$ for all $(r, r') \in \convGraph$ and complementary slackness implies that one of these inequalities is tight.
    Hence, Conversion Complementarity holds, proving that the optimal primal and dual solutions of program \eqref{eq: optimization problem} satisfy \cref{eq:harmonic:flow,eq:harmonic:conversion}.

    We now observe that the KKT conditions are sufficient for optimality, because our objective is differentiable and concave, and the feasibility region is a polytope that has an interior (therefore satisfying Slater's condition).
    Since the KKT conditions are equivalent to \cref{eq:harmonic:flow,eq:harmonic:conversion}, any solution to these axioms induces primal and dual variables satisfying primal and dual feasibility, stationarity, and complementary slackness.
    This makes them optimal for \cref{eq: optimization problem}.

    We now prove the uniqueness of the optimal solution.
    Let $f^r(p_i^r)$ be the optimization objective with respect to $p_i^r$.
    Assume there are two optimal solutions $\hat p_i$ and $\tilde p_i$ that differ in some coordinate $r$.
    We notice that if $\beta_i^r > 0$ then this cannot happen: $f^r(p_i^r)$ is strongly concave with respect to $p_i^r$ and $\sum_i f^r(p_i^r)$ is concave, making any convex combination of $\hat p_i$ and $\tilde p_i$ have a strictly higher objective.
    Therefore, $\hat p_i$ and $\tilde p_i$ must differ only in coordinates where $\beta_i^r = 0$.
    Let $\hat p_i \wedge \tilde p_i$ be the pairwise minimum of the two vectors.
    Notice that this solution is also feasible.
    The difference in objective value between $\hat p_i \wedge \tilde p_i$ and $\hat p_i$ must be non-negative by optimality of $\hat p_i$:
    \begin{equation*}
        -\sum_{r: \beta_i^r = 0} \qty(\hat p_i^r - \min\{\hat p_i^r, \tilde p_i^r\}) \sigma_i^r
        \ge
        0
    \end{equation*}

    The similar argument for $\tilde p_i^r$, adding with the above gives
    \begin{equation*}
        -\sum_{r: \beta_i^r = 0} \qty(\max\{\hat p_i^r, \tilde p_i^r\} - \min\{\hat p_i^r, \tilde p_i^r\}) \sigma_i^r
        \ge
        0
    \end{equation*}

    The above proves that it must be $\sigma_i^r = 0$ for all resources $r$ where $\hat p_i$ and $\tilde p_i$ differ.
\end{proof}

\section{Omitted Proofs of Section \ref{sec:capped}}
\label{app: proofs for key lemmas}

We first restate and prove \cref{thm: main correctness thm}.

\CappedMainTheorem*

\begin{proof}[Proof of \cref{thm: main correctness thm}]
    By \cref{lem:capped:invariant} (see also that lemma's proof) the virtual bids are decreasing in $t$.
    This means that $0 \le \tilde B_i^r[t] \leq \tilde B_i^r[t-1]$, so the $\tilde{B}_i^r[t]$ form a bounded monotone sequence.
    Thus $\lim_{t \rightarrow \infty}\tilde {B}_i^r[t] = \bar{B}_i^r$ exists.
    In addition, by \cref{lem: key structural lemma; price continuity}, $p_i^r( \tilde B[t])$ converges to $p_i^r(\bar B[t])$. 
    
    We now claim that at the limit, \cref{eq:invariant} holds in equality: for all $(i, r)$,
    \begin{equation*}
        \bar B_i^r
        =
        p_i^r(\bar B) D_i^r\qty(p_i^r(\bar B))
    \end{equation*}

    Notice that by the update rule \cref{eq:update_rule}, we have that
    \begin{equation*}
        \tilde B_i^r[t]
        =
        p_i^r(\tilde B_i^r[t-1]) D_i^r\qty(p_i^r(\tilde B_i^r[t-1]))
    \end{equation*}
    If the function $D_i^r(\cdot)$ is continuous for positive prices, then, by taking the limit of the above equality as $t \to \infty$, we get exactly what we want (also using the fact that the algorithm produces positive prices \cref{sec:app:positive_prices_BDA}).
    We now prove continuity by the two basic properties of the demand functions:
    if $p \mapsto D(p)$ is weakly decreasing and $p \mapsto p\,D(p)$ is weakly increasing, then $D(\cdot)$ is continuous.
    Fix a $p > 0$ and $\e > 0$; then
    \begin{equation*}
        D(p) \ge D(p + \e)
        \quad\text{ and }\quad
        p\,D(p) \ge (p + \e) \, D(p + \e)
    \end{equation*}
    which proves that
    \begin{equation*}
        \frac{p}{p + \e} D(p)
        \le
        D(p + \e)
        \le
        D(p)
    \end{equation*}
    which proves that $\lim_{\e \to 0^+} D(p + \e) = D(p)$.
    On the other hand, for $p > p - \e > 0$ we can similarly prove that
    \begin{equation*}
        D(p)
        \le
        D(p - \e)
        \le
        \frac{p}{p - \e} D(p)
    \end{equation*}
    which proves that $\lim_{\e \to 0^+} D(p - \e) = D(p)$.
    Combining both limits proves the continuity for positive prices.
\end{proof}

Next we include the omitted proofs of the lemma of that section.

\begin{lemma}\label{lem: topkis application to supply}
    Fix $B$ and $p_{\pi(i)}$. Let $\mathcal{H} = \{x \in \R^m: C^r \geq x^r\}$ for some constants  $C^r$. Let $$F_i(p)
            =
            \sum_{r \in [m]} \qty(
                \log(p^{r})
                \sum_{j \in \subTree(i) }  B_j^{r}
                -
                p^{r} \qty(
                    s_{i}^{r}
                    +
                    \frac{1}{p_{\pi(i)}^{r}} \sum_{j \in \subTree(i) }  B_j^{r}
                ) ).$$ 
            Then, $\argmax_{p \in \mathcal{K} \cap \mathcal{H}} F_i(p) \preceq \argmax_{p \in \mathcal{K}} F_i(p)$.
\end{lemma}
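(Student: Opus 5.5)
Let $q = \argmax_{p \in \mathcal{K} \cap \mathcal{H}} F_i(p)$ and $p^\star = \argmax_{p \in \mathcal{K}} F_i(p)$. We want $q \preceq p^\star$. The plan, following the hint given in the proof of \cref{lem: key structural lemma}, is to read the cap constraints $x^r \le C^r$ as extra supply, and then use monotonicity in the supplies via \cref{thm:topkis}.

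\textbf{Step 1: KKT conditions of the capped program.} Write the KKT conditions of the capped program. Let $\lambda^r \ge 0$ be the multiplier of $x^r \le C^r$, with $\lambda^r (C^r - q^r) = 0$, and let $c^{(r,r')}$ be the multipliers of the cone $\mathcal{K}$. Stationarity reads
\begin{equation*}
\frac{\beta^r}{q^r} - \sigma^r - \lambda^r + \sum_{r'\in\outEdges(r)} c^{(r,r')} - \sum_{r'\in\inEdges(r)} c^{(r',r)} = 0,
\end{equation*}
with $\beta^r = \sum_{j\in\subTree(i)} B_j^r$ and $\sigma^r = s_i^r + \beta^r/p_{\pi(i)}^r$.

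These are exactly the KKT conditions (as in the proof of \cref{thm:harmonic:main}) of the uncapped program \eqref{eq: optimization problem} with supply $s_i^r + \lambda^r$ in place of $s_i^r$. That program is concave with a Slater point, so its KKT conditions are sufficient. Hence $q$ is the uncapped optimizer for the modified supply vector $s + \lambda$, where $\lambda \succeq 0$.

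\textbf{Step 2: monotonicity in the supply.} Apply \cref{thm:topkis} with the supplies as parameter. Take $\theta = -s_i$, which is a parameter that enters $F_i$ linearly through $-p^r s_i^r$. Then
\begin{equation*}
\frac{\partial^2 F_i}{\partial p^r\,\partial\theta^{r'}} = \ind{r=r'} \ge 0 .
\end{equation*}
The cross partials $\partial^2 F_i/\partial p^r\partial p^{r'}$ vanish because $F_i$ is separable, and $\mathcal{K}$ is a lattice, as already shown. So the argmax is weakly decreasing in $s_i$. Since $s + \lambda \succeq s$, we get $q = p(s+\lambda) \preceq p(s) = p^\star$.

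\textbf{Main obstacle: uniqueness.} \cref{thm:topkis} as stated needs a unique argmax, and $F_i$ is not twice differentiable at $p^r = 0$. For the differentiability issue, restrict to the positive orthant (or a compact box $[\epsilon, M]^m$), which remains a lattice.

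For uniqueness there are two cases.
\begin{itemize}
\item Coordinates with $\beta^r > 0$ are unique by strong concavity, as in \cref{thm:harmonic:main}.
\item Coordinates with $\beta^r = 0$ and positive supply ($s^r>0$, or $s^r+\lambda^r>0$) are pinned by the linear term.
\end{itemize}
Under the standing assumption $B \in \mathcal{B}$, every coordinate has $\beta^r > 0$ or $s_i^r > 0$, so both programs have unique solutions.

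If one prefers to avoid this, use the set version of Topkis: the argmax set is increasing in the strong set order. For unique maximizers this reduces to $q \preceq p^\star$, because $q \wedge p^\star$ and $q \vee p^\star$ must also be optimal in the respective problems.
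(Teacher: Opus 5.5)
Your proposal is correct and follows essentially the same route as the paper. The paper also writes the KKT conditions of the capped program and reads the cap multipliers $\lambda \succeq 0$ as extra supply, so that the capped optimizer equals $\argmax_{p\in\mathcal{K}} F_i(p;-\lambda)$; it then applies \cref{thm:topkis} with the supplies as the parameter. Your remarks on uniqueness (via $B\in\mathcal{B}$) and the strong-set-order fallback, which needs uniqueness only for the uncapped optimizer, make explicit details the paper leaves unstated.
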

\begin{proof}
    Consider $$F_i(p; z)
            =
            \sum_{r \in [m]} \qty(
                \log(p^{r})
                \sum_{j \in \subTree(i) } B_j^{r}
                -
                p^{r} \qty(
                    s_{i}^{r} - z^r
                    +
                    \frac{1}{p_{\pi(i)}^{r}} \sum_{j \in \subTree(i) } B_j^{r}
                )
            ).$$ 
    
    We first apply \cref{thm:topkis} to deduce that $\argmax_{p \in \mathcal{K}} F_i(p; z)$ is weakly increasing in $z$. As before, $\mathcal{K}$ is a lattice. Furthermore, all the cross derivatives of $p^r$ and $p^{r'}$ are zero. Finally, $\frac{\partial F_i(p; z)}{\partial p^r z^{r'}} = \ind{r = r'} \geq 0$.

    We consider the KKT conditions applied to  problem $\argmax_{p \in \mathcal{K} \cap \mathcal{H}} F_i(p)$. Fix $\lambda$, the optimal dual variables for the constraint in 
    $\mathcal{H}$.

    The Lagrangian is given by 
    $$L(p, \theta, \lambda) = \sum_{r \in [m]} \qty(
                \log(p^{r})
                \sum_{j \in \subTree(i) } \hat B_j^{r}
                -
                p^{r} \qty(
                    s_{i}^{r}
                    +
                    \frac{1}{p_{\pi(i)}^{r}} \sum_{j \in \subTree(i) } \hat B_j^{r}
                ) ) + \sum_{(r, r') \in \convGraph}\theta^{(r, r')}(p^{r'} - p^r) + \lambda^r(C^r - p^r).$$ Stationarity implies that 
    $$\frac{\partial L(p, \theta, \lambda)}{\partial p^r} = \sum_{j \in \subTree(i) } \hat B_j^{r}\left(\frac{1}{p^r} - \frac{\ind{i \text{ is not root}}}{p_{\parent(i)}^r}\right) - (s_i^r + \lambda^r) - \sum_{(r', r) \in \convGraph}\theta^{(r, r')} + \sum_{(r, r') \in \convGraph}\theta^{(r, r')}.$$ Notice that these are exactly the stationarity conditions for $\argmax_{p \in \mathcal{K}} F_i(p; -\lambda)$. Furthermore, the optimal primal and dual values for $\argmax_{p \in \mathcal{K} \cap \mathcal{H}} F_i(p)$ satisfy feasibility and complimentary slackness for $\argmax_{p \in \mathcal{K}} F_i(p; -\lambda)$ (as these are a subset of the conditions for $\argmax_{p \in \mathcal{K} \cap \mathcal{H}} F_i(p)$). Hence, we deduce that $\argmax_{p \in \mathcal{K} \cap \mathcal{H}} F_i(p) = \argmax_{p \in \mathcal{K}} F_i(p; -\lambda)$ for some $\lambda \geq 0$. Finally, as we have shown, 
    $$\argmax_{p \in \mathcal{K} \cap \mathcal{H}} F_i(p) = \argmax_{p \in \mathcal{K}} F_i(p; -\lambda) \preceq \argmax_{p \in \mathcal{K}} F_i(p; 0) = \argmax_{p \in \mathcal{K}} F_i(p).$$
\end{proof}

\subsection{Proof that Algorithm \ref{algo:BDA} produces Positive Prices} \label{sec:app:positive_prices_BDA}

Recall that $\mathcal{B}$ is the set of budgets $\hat B$ such that $\sum_{j \in \subTree(i) } \hat{B}_j^r > 0$ for all $(i, r)$ such that $s_i^r = 0$. In this section, we assume that  $\tilde B[t],\bar B \in \mathcal{B}$ for all $t$.  In \cref{subsec: justify uniqueness}, we argue that this is without loss of generality.

\begin{lemma}\label{lem:prices are positive}
    Consider a non-degenerate instance of the capped harmonic problem. In the Budget Descent Algorithm (\cref{algo:BDA}), for all $(i, r) \in [n] \times [m]$, $p_i^r(\tilde B[t]) > 0$ at all $t \in\N$ and $p_i^r(\bar{B}) > 0$ where $\bar{B} = \lim_{t\rightarrow \infty}\tilde B[t]$.
\end{lemma}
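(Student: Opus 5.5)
The plan is to argue directly from the per-agent convex programs \eqref{eq: optimization problem}, top-down along the tree, and not from the feasibility axioms. Fix a budget vector $\hat B \in \mathcal{B}$; this covers both $\tilde B[t]$ for every $t$ and the limit $\bar B$, by the standing assumption. I first show that every maximizer of $F_i(\cdot\,;\hat B)$ over $\mathcal{K}$ has strictly positive coordinates. The induction hypothesis is that $p_{\parent(i)}(\hat B)$ is positive and finite, so $F_i$ is well defined.

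For a single agent, write $\beta_i^r = \sum_{j\in\subTree(i)}\hat B_j^r$ and $\sigma_i^r$ as in the proof of \cref{thm:harmonic:main}. Suppose some coordinate $p_i^{r}$ of an optimum is $0$. Let $Z=\{r' : p_i^{r'}=0\}$. Since $p\in\mathcal{K}$ forces $p^{r'}\le p^r$ along edges of $\convGraph$, the set $Z$ is closed under $\outEdges$, so it is an absorbing set in $\convGraph$.

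\textbf{Case $\beta_i^{r'}>0$ for some $r'\in Z$.} The objective is $-\infty$ at the optimum. This is impossible, because the all-ones vector scaled suitably is feasible with finite value. So every $r'\in Z$ has $\beta_i^{r'}=0$. Membership in $\mathcal{B}$ then forces $s_i^{r'}>0$, so $\sigma_i^{r'}\ge s_i^{r'}>0$.

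\textbf{Case all $\beta_i^{r'}=0$ on $Z$.} Here the objective is locally linear in these coordinates, with strictly negative coefficients $-\sigma_i^{r'}$. Pushing them to zero is therefore genuinely optimal, and the contradiction cannot come from the single program. It must come from non-degeneracy. The key step is to show that if this happens, the set $\mathcal G=\{(j,r') : j\in\subTree(i),\ r'\in Z\}$, closed under reachability in $\agentTree\times\convGraph$, has zero demand and positive supply. That makes it (or a minimal subset of it) degenerate, contradicting the non-degenerate hypothesis.

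For this I use that $\hat B_j^{r'}=0$ for all $j\in\subTree(i)$, $r'\in Z$. In the BDA, $\tilde B[0]$ is the maximal payment, and $\tilde B_j^{r'}[t]=0$ forces $D_j^{r'}(p)=0$ for the relevant prices, hence for all large prices. I then need $D_j^{r'}(0)=0$ for these pairs. This is where I anticipate the main obstacle: a zero virtual budget at later iterations does not by itself give zero demand at price $0$.

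My first attempt is to use \cref{lem: prices are decreasing}/\cref{lem:propagation_of_zero_prices} to show the offending set is absorbing with no inflow. I would then argue via \cref{lem: flow in a subgraph} that its supply must be at least its demand, which makes it degenerate. If that route fails, I fall back on the paper's convention from \cref{remark:positive_prices}: pairs with identically zero payment are treated as absent or merged. The remaining argument then only needs $\mathcal{B}$.

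Given positivity for each $t$, I treat the limit by the same argument applied to $\bar B\in\mathcal{B}$ directly. This avoids limits of prices entirely. Finally, I combine the cases by induction down the tree, using the positivity of the parent's prices to keep $\sigma_i^r$ finite.
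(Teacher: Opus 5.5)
\textbf{Finite $t$.} Your per-program argument works here and is more direct than the paper's. The step you flag as the main obstacle is not actually one. By \cref{assumption:demands}(b), if $E_j^{r'}(p)=p\,D_j^{r'}(p)=0$ at some $p>0$, then $E_j^{r'}$ vanishes on $(0,p]$. By (a), $D_j^{r'}$ vanishes on $[p,\infty)$. Hence $D_j^{r'}\equiv 0$ and $D_j^{r'}(0)=0$. You wrote ``hence for all large prices'', which uses only (a); it is (b) that covers the small prices.

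Combine this with an induction on $t$:
\begin{itemize}
\item At $t=0$, $\lim_{p\to\infty}pD_j^{r'}(p)=0$ forces $pD_j^{r'}\equiv 0$.
\item At $t\ge 1$, positivity at $t-1$ makes $\tilde B_j^{r'}[t]=E_j^{r'}(p_j^{r'}(\tilde B[t-1]))$ an evaluation at a positive price, so the observation above applies.
\end{itemize}
Then $\subTree(i)\times Z$ is absorbing, since $Z$ is closed under $\outEdges$ and children of $\subTree(i)$ stay inside. It has zero demand at price $0$, so it contains a degenerate subgraph. This settles $p(\tilde B[t])>0$ without any of your fallbacks.

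\textbf{The limit.} This is the genuine gap. ``The same argument applied to $\bar B$ directly'' fails because $\bar B_j^{r'}=0$ does not imply $D_j^{r'}\equiv 0$. Indeed, $\bar B_j^{r'}=\lim_t E_j^{r'}(p_j^{r'}(\tilde B[t-1]))$ equals $0$ whenever these positive prices tend to $0$ and $E_j^{r'}(0^+)=0$. That is exactly the capped harmonic case $E(p)=\min\{B,dp\}$.

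The harmonic problem at budgets $\bar B$ is perfectly consistent with zero prices on a zero-budget absorbing set that carries supply. So no contradiction can be extracted from the single program at $\bar B$. Your ``first attempt'' via \cref{lem: flow in a subgraph} has the same defect: at one budget vector it only sees the harmonic demands $\bar B_j^{r'}/p$, which vanish on this set. You also cannot appeal to $\bar B=E(p(\bar B))$, because \cref{thm: main correctness thm} derives it assuming positive limiting prices.

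What is missing is a lower bound on prices that is \emph{uniform in $t$}. This needs an argument across iterations, since the real demands $D$ enter only through the invariant of \cref{lem:capped:invariant} along the iterates. The paper's route is:
\begin{itemize}
\item Take the stabilized set $\mathcal G_\epsilon$ of pairs whose prices eventually fall below every small threshold, and show it is absorbing with no conversion inflow.
\item Bound the parent inflow using parent prices $\ge\epsilon$, via \cref{lem: flow in a subgraph,lem: isolate flow lemma}, giving $\sum_{\mathcal G_\epsilon}\tilde B_i^r[t]/p_i^r(\tilde B[t])\le\sum_{\mathcal G_\epsilon}s_i^r+\epsilon^{-1}\sum_{\mathcal G_\epsilon}\tilde B_i^r[t]$.
\item Deduce $\tilde B[t]\to 0$ on $\mathcal G_\epsilon$.
\item Use $D_i^r(p_i^r(\tilde B[t]))\le\tilde B_i^r[t]/p_i^r(\tilde B[t])$ and let $t\to\infty$ to get $\sum_{\mathcal G_\epsilon}D_i^r(0)\le\sum_{\mathcal G_\epsilon}s_i^r$, contradicting non-degeneracy.
\end{itemize}
This shows every coordinate of $p(\tilde B[t])$ is at least $\epsilon$ for all $t$. \Cref{lem: key structural lemma; price continuity} then transfers the bound to $p(\bar B)$.
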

\begin{proof}   
    We note that since $p_i^r(\cdot)$ is continuous and non-decreasing as a function of $\hat{B}_{i'}^{r'}$ for all $(i', r')$, $p_i^r(\tilde B[t])$ is a non-increasing sequence wrt. $t$. Let $$\mathcal{G}_{\epsilon} = \{(i, r) \in [n]\times [m]: p_i^r(\tilde B[t]) < \epsilon \;\; \text{for some }t \in \N\}$$ be the set of prices that eventual become less than some $\epsilon > 0$. We note that $\mathcal{G}_{\epsilon'} \subset \mathcal{G}_{\epsilon}$ for $\epsilon' < \epsilon$.
    Hence, the sets $\mathcal{G}_{\epsilon}$ form a collection of nesting sets. Hence, this set cannot change for infinitely many $\epsilon > 0$. Hence, there exists $\epsilon > 0$ such that $\mathcal{G}_{\epsilon'} = \mathcal{G}_{\epsilon}$ for all $0 < \epsilon' \leq \epsilon$. We will now show that for this $\epsilon$, $\mathcal{G}_{\epsilon} = \emptyset$. 
    
    Note that there must exists $t$ such that the set of $(i, r)$ such that $p_i^r(\tilde B[t]) < \epsilon$ is $\mathcal{G}_{\epsilon}$. By the same argument as \cref{lem:propagation_of_zero_prices}, the set of $(i, r)$ such that $p_i^r(\tilde B[t]) < \epsilon$ can be taken to be absorbing, i.e., every $(i', r')$ reachable by $(i, r)$ satisfies $p_i^r(\tilde B[t]) < \epsilon$. Hence, we assume $\mathcal{G}_{\epsilon}$ is absorbing. Furthermore, the conversions into this graph must be zero (otherwise, the prices of the node the conversion is coming from must also be small). We apply \cref{lem: flow in a subgraph} to conclude that
    $$\sum_{(i,r) \in \mathcal{G}_{\epsilon}}\frac{\tilde B_i^r[t]}{p_i^r(\tilde B[t])} - \sum_{(i,r) \in \mathcal{G}_{\epsilon}}s_i^r \leq \sum_{(i, r) \in \mathcal{E}}f_{i}^r = \sum_{(i, r) \in \mathcal{E}_{\text{in}}(\mathcal{G}_\epsilon) : i \neq\text{root}}f_{i}^r $$ where we recall that $\mathcal{E}_{\text{in}}(\mathcal{G}_\epsilon)$ is the set of $(i, r) \in \mathcal{G}_{\epsilon}$ such that $(\pi(i), r) \not\in \mathcal{G}_{\epsilon}$. We note that the last inequality follows from the fact that $f_{i}^r = 0$ for $i$ the root.

    By the definition of $\mathcal{E}_{\text{in}}(\mathcal{G}_\epsilon)$, we have that $(i, r) \in \mathcal{E}_{\text{in}}(\mathcal{G}_\epsilon)$ implies that  $p_{\pi(i)}^r(\tilde B[t]) \geq \epsilon > 0$. Thus, \cref{lem: isolate flow lemma} now tells us that 
    \begin{align*}
        \sum_{(i,r) \in \mathcal{G}_{\epsilon}}\frac{\tilde B_i^r[t]}{p_i^r(\tilde B[t])} - \sum_{(i,r) \in \mathcal{G}_{\epsilon}}s_i^r &\leq \sum_{(i, r) \in \mathcal{E}_{\text{in}}(\mathcal{G}_\epsilon) : i \neq \text{root}} \frac{1}{p_{\pi(i)}^r} \sum_{j \in \subTree(i)}  \tilde B_j^r[t]\\
        &\leq \frac{1}{\epsilon}\sum_{(i, r) \in \mathcal{E}_{\text{in}}(\mathcal{G}_\epsilon) : i \neq \text{root}} \sum_{j \in \subTree(i)}  \tilde B_j^r[t]\\
        &\leq  \frac{1}{\epsilon} \sum_{(i, r) \in \mathcal{G}_{\epsilon}}  \tilde B_j^r[t] 
    \end{align*} where the last inequality follows from noticing that for all $(j, r) \in \mathcal{G}$, there is a unique $i$ such that $j \in \subTree(i)$ and $(i, r) \in \mathcal{E}_{\text{in}}(\mathcal{G}_\epsilon)$. Note that this bound will hold for all $t' > t$. 
    
    We now claim that $\tilde B_i^r[t] = o(1)$ for all $(i, r) 
    \in \mathcal{G}_\epsilon$. Indeed, if this were not the case, $\tilde B_i^r[t] > c$ for all $t$ for some constant $c$. We, however, know that $p_i^r(\tilde B[t]) \rightarrow 0$ because $(i, r) \in \mathcal{G}_{\epsilon'}$ for all $\epsilon' < \epsilon$. Thus $\frac{\tilde B_i^r[t]}{p_i^r(\tilde B[t])} \geq \frac{c}{p_i^r(\tilde B[t])} \rightarrow \infty$. This cannot hold since by the inequality above, 
    $$\frac{\tilde B_i^r[t]}{p_i^r(\tilde B[t])} \leq \sum_{(i,r) \in \mathcal{G}_{\epsilon}}\frac{\tilde B_i^r[t]}{p_i^r(\tilde B[t])} \leq   \frac{1}{\epsilon} \sum_{(i, r) \in \mathcal{G}_{\epsilon}}  \tilde B_j^r[t] + \sum_{(i,r) \in \mathcal{G}_{\epsilon}}s_i^r \leq \frac{1}{\epsilon} \sum_{(i, r) \in \mathcal{G}_{\epsilon}}  \tilde B_j^r[0] + \sum_{(i,r) \in \mathcal{G}_{\epsilon}}s_i^r < \infty,$$ proves the claim.

    We now note that by \cref{lem:capped:invariant}, $\tilde B_i^r[t] \geq p_i^r(\tilde B[t])\cdot D_i^r(p_i^r(\tilde B[t]))$. Hence, 
    $$\sum_{(i,r) \in \mathcal{G}_{\epsilon}}D_i^r(p_i^r(\tilde B[t])) - \sum_{(i,r) \in \mathcal{G}_{\epsilon}}s_i^r \leq \sum_{(i,r) \in \mathcal{G}_{\epsilon}}\frac{\tilde B_i^r[t]}{p_i^r(\tilde B[t])} - \sum_{(i,r) \in \mathcal{G}_{\epsilon}}s_i^r \leq \frac{o(1)}{\epsilon}.$$ Taking the limit on both sides of the inequality, we have that 
    $$\sum_{(i,r) \in \mathcal{G}_{\epsilon}}D_i^r(0) \leq \sum_{(i,r) \in \mathcal{G}_{\epsilon}}s_i^r.$$ Since, the instance is non-degenerate, it must be that $\mathcal{G}_{\epsilon} = \emptyset$.
\end{proof}

\subsection{Justification of Uniqueness Assumption}\label{subsec: justify uniqueness}

We now justify our assumption that $\tilde B[t],\bar B \in \mathcal{B}$ in \cref{subsec: proof of structural result} for all $t$. Given a virtual budget vector $\hat B$, we will say that $(i, r)$ is a \emph{pass-through node} with respect to $\hat B$ if $\hat B^r_j = 0$ for all $j \in \subTree(i)$ and $s_i^r = 0$.

\begin{proposition}
    In the general demand setting of the Budget Descent Algorithm, $(i, r)$ is a pass-through node wrt. $\tilde B[t]$ for some $t$ (or $\bar B$) if and only if $(i, r)$ is a pass-through node with respect to $\tilde B[0]$.
\end{proposition}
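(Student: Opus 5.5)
The plan is to reduce the claim to a statement about which individual budget entries vanish. Since $s_i^r$ does not depend on $t$, $(i,r)$ is pass-through with respect to $\hat B$ exactly when $\hat B_j^r = 0$ for every $j \in \subTree(i)$. It therefore suffices to show that for every pair $(j,r)$ we have $\tilde B_j^r[t] = 0 \iff \tilde B_j^r[0] = 0$ for every $t$, and likewise $\bar B_j^r = 0 \iff \tilde B_j^r[0] = 0$. We then quantify over $j \in \subTree(i)$ to finish.

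For the easy direction, suppose $\tilde B_j^r[0] = 0$. By the proof of \cref{lem:capped:invariant}, the virtual budgets are non-increasing in $t$ and non-negative, so $0 \le \tilde B_j^r[t] \le \tilde B_j^r[0] = 0$. Passing to the limit gives $\bar B_j^r = 0$ as well. Alternatively, $\tilde B_j^r[0] = \lim_{p\to\infty} E_j^r(p) = 0$ together with monotonicity of $E_j^r$ forces $E_j^r \equiv 0$, and hence every update equals $0$.

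For the converse, suppose $\tilde B_j^r[0] > 0$. We want $\tilde B_j^r[t] = E_j^r(p_j^r(\tilde B[t-1])) > 0$ and $\bar B_j^r = E_j^r(p_j^r(\bar B)) > 0$; the latter is the fixed-point identity from the proof of \cref{thm: main correctness thm}. \cref{lem:prices are positive} gives $p_j^r > 0$ along the iterates and at the limit, so it suffices to show $E_j^r(p) > 0$ for every $p > 0$. This is the main obstacle, since \cref{assumption:demands} only gives that $E_j^r$ is non-decreasing with a positive supremum. A demand such as $D(p) = \ind{p \ge 1}\cdot \min\{1/p, 1\}$ fails continuity, but the proof of \cref{thm: main correctness thm} shows $D$ is continuous on $(0,\infty)$. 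Continuity alone still allows $D(p) = \max\{0, 1 - 1/p\}$, which is decreasing only if we check. Indeed, such a $D$ would have to be zero on $(0,a]$ and positive beyond, which contradicts weak decrease: if $D(p_0) > 0$ for some $p_0$, then $D(p) \ge D(p_0) > 0$ for all $p \le p_0$. Meanwhile, $p\,D(p) \ge \frac{p}{p_0} p_0 D(p_0) > 0$ holds for $p \ge p_0$ by the scaling inequality in that proof. So $\tilde B_j^r[0] > 0$ implies some $p_0$ with $D_j^r(p_0) > 0$, which gives $D_j^r > 0$ on all of $(0,\infty)$ and hence $E_j^r > 0$ on $(0,\infty)$.

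Positivity of the prices must be handled carefully to avoid circularity, because \cref{lem:prices are positive} presupposes the iterates lie in $\mathcal B$. I would run a joint induction on $t$. At $t=0$ the pass-through set is determined by $\tilde B[0]$, so $\tilde B[0] \in \mathcal B$ by the paper's convention, and the harmonic solution restricted to non-pass-through nodes has positive prices. That gives $\tilde B[1]$ the same zero pattern, and so on. For $\bar B$ we use the limit statement of \cref{lem:prices are positive}, again restricted to non-pass-through nodes, whose prices are the only consequential ones by \cref{thm:harmonic:main}.
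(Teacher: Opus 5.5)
Your proof is correct and is the paper's argument in contrapositive form. The paper shows that $\tilde B_j^r[t]=0$ (or $\bar B_j^r=0$) makes $D_j^r$ vanish at a positive price, hence at all larger prices by weak decrease, so $\tilde B_j^r[0]=\lim_{p\to\infty}p\,D_j^r(p)=0$; it uses the same ingredients you use: positivity of prices along the iterates and at $\bar B$, the update rule or fixed-point identity, and the monotone decrease of the virtual budgets for the easy direction.

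There is one slip: the bound for $p\ge p_0$ should read $p\,D(p)\ge p_0D(p_0)$, which follows directly from monotonicity of $p\,D(p)$. That case is also unnecessary, since $\lim_{p\to\infty}pD(p)>0$ lets you take $p_0$ arbitrarily large. Your point that the positivity lemma tacitly presupposes the iterates lie in $\mathcal B$ is fair; the paper also glosses over it.
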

\begin{proof}
   We first consider the reverse case. If $(i, r)$ is a pass-through node wrt. $\tilde B[0]$, then $\tilde B_j^r[t] \leq \tilde B_j^r[0] = 0$ for all $j \in \subTree(i)$ and $s_i^r = 0$. Hence, $(i, r)$ is a pass-through node wrt. $\tilde B[t]$ (and $\bar B$). 

   We now assume $(i, r)$ is a pass-through node wrt. $\tilde B[t]$ for some $t$. By \cref{lem:prices are positive}, we have that $p_j^r(\tilde B[t]) > 0$ for all $j \in \subTree(i)$. Furthermore, $$p_j^r(\tilde B[t])\cdot  D_j^r(p_j^r(\tilde B[t])) = \tilde B_j^r[t+1]  \leq \tilde B_j^r[t] = 0.$$ Hence, $D_j^r(p) = 0$ for $p = p_j^r(\tilde B[t])$.  Since $D_j^r$ is non-increasing, $D_j^r(x) = 0$ for $x > p$. Hence, $\tilde B_j^r[0] = \lim_{p \rightarrow \infty} p D_j^r(p) =  0$.   Thus, $(i, r)$ is a pass-through node wrt. $\tilde B[0]$. The same argument works for $\bar B$ instead of $\tilde B[t]$. 
   
\end{proof}

By this result, we can define $\mathcal{Q} \subset [n] \times [m]$, the set of all pass-through nodes at each iteration of the Budget Descent Algorithm. Consider $(i, r) \in \mathcal{Q}$ and iteration $t$. Let $p \in \R^{n \times m}$ be a set of optimal prices for harmonic instance with virtual budget $\tilde B[t]$.

We now consider the problem in \cref{eq: optimization problem} which we use to compute the prices $p$:
\begin{align*}
    \max_{p_i^1, \cdots, p_i^m \geq 0}&  \sum_{r' \in [m]}\log(p_i^{r'})\sum_{j \in \subTree(i) }  \tilde B_j^{r'}[t]- \sum_{{r'} \in [m]}p_i^{r'}\left(\frac{\ind{i \neq \text{root}}}{p_{\parent(i)}^{r'}}\sum_{j \in \subTree(i) }  \tilde B_j^{r'}[t] + s_i^{r'}\right)\\
    & \text{s.t.} \quad p_i^{r''}\geq p_i^{r'}\;\; \forall (r'', r') \in \convGraph
\end{align*}

We observe that if $(i, r)$ is a pass-through node, the $(i, r)$ term of the sum in the objective function is 0. Hence, the objective is unaffected by the value of $p_i^r$. We now consider an alternative problem where we remove the dependence on the $p_i^r$ variable entirely by considering a new conversion graph $\convGraph'$ where we remove $r$ but add an edge between each $r' \in \inEdges(r)$ and $r'' \in \outEdges(r)$. We now consider the problem:
\begin{align*}
    \max_{(p^{r'}_i)_{r' \neq r} \succeq 0}&  \sum_{r' \in [m]: r' \neq r}\log(p_i^{r'})\sum_{j \in \subTree(i) }  \tilde B_j^{r'}[t]- \sum_{r' \in [m]: r' \neq r}p_i^{r'}\left(\frac{\ind{i \neq \text{root}}}{p_{\parent(i)}^{r'}}\sum_{j \in \subTree(i) }  \tilde B_j^{r'}[t] + s_i^{r'}\right)\\
    & \text{s.t.} \quad p_i^{r''}\geq p_i^{r'}\;\; \forall (r'', r') \in \convGraph'
\end{align*}

We claim that if $p_i$ is an optimal price vector in the original problem, then $(p^{r'}_i)_{r' \neq r}$ is optimal in the new problem. This holds because the objective functions for the problems are the same and any value feasible $p_i$ is the original problem implies that $(p^{r'}_i)_{r' \neq r}$ is feasible in the new problem (and vice versa). On the other hand, if $(p^{r'}_i)_{r' \neq r}$ is optimal in the new problem, $p_i$ is an optimal price vector in the original for any $p_i^r$ where $$\max_{r' \in \outEdges(r)}p_i^{r'}  \leq p_i^r \leq \min_{r' \in \inEdges(r)}p_i^{r'} .$$ We note that $\max_{r' \in \outEdges(r)} p_i^{r'} \leq \min_{r' \in \inEdges(r)}p_i^{r'}$ by the construction of $\convGraph'$.

We note that $p_i^r$ does not appear in any of the optimization problem for any of its children because for any $k$, a child of $i$,  $0 \leq \sum_{j \in \subTree(k) } \tilde B_j^{r}[t] \leq \sum_{j \in \subTree(i) }  \tilde B_j^{r}[t]= 0$. Hence, to find the optimal prices for all the agents, it will suffice to eliminate the path-through node $(i, r)$ as we have and set the price $p_i^r$ after computing all other prices. We can repeat this process for all $(i,r) \in \mathcal{Q}$. 

Hence, we can compute the prices for all non-pass-through nodes (uniquely). Thus, all the arguments in \cref{subsec: proof of structural result} will go through without issue for the optimization problems induced by the non-pass-through nodes. Then, at the end of the process, we can compute an optimal price for the pass-through nodes at any iteration. Furthermore, since the set of pass-through nodes is fixed from the initialization of the algorithm, we can assume WLOG that the instance has no pass-through nodes.

\section{Proof of Convergence Rate of the Budget Descent Algorithm}\label{app: convergence rate}
In this section, we prove \cref{thm: main convergence result}:
\ConvergenceResult*

This result will follow from two result. The first result bounds the distance between the limiting budgets in the perturbed problem $\bar{B}[\delta] = \lim_{t\rightarrow \infty} \tilde{B}[t; \delta]$ and the limiting budgets in the original problem $\bar{B} = \bar{B}[0]$:

We first assume the correctness of \cref{lem: bounding limiting values,lem: bounding limiting value and iterate}.

\begin{proof}
    We apply \cref{lem: bounding limiting values,lem: bounding limiting value and iterate} to see that for any $\delta$ and $t$,
    \begin{align*}
        \sum_{i,r}\left|D_i^r(p_i^r(\bar{B})) - D_i^r(p_i^r(\tilde{B}[t; \delta]))\right| &\leq \sum_{i,r}\left|D_i^r(p_i^r(\bar{B})) - D_i^r(p_i^r(\bar{B}[\delta]))\right|\\
        &\quad\quad\quad+ \sum_{i,r}\left|D_i^r(p_i^r(\bar{B}[\delta])) - D_i^r(p_i^r(\tilde{B}[t; \delta]))\right|\\
        &\leq \left(\max_{i,r}\frac{\tilde{B}_i^r[0; \delta]}{\bar{B}_i^r[\delta]}-1\right)\left(1 - \frac{\delta}{1 + \delta}\right)^t\left(\sum_{i,r} s_i^r\right) + \delta\sum_{i,r}\frac{\overline{E_i^r}}{p_i^r(\bar{B})} 
    \end{align*}

    For any $t \geq \frac{\log(\delta)}{\log\left(1 - \frac{\delta}{1 + \delta}\right)}$,

    \begin{align*}
        \sum_{i,r}\left|D_i^r(p_i^r(\bar{B})) - D_i^r(p_i^r(\tilde{B}[t; \delta]))\right| &\leq \delta\left[\left(\max_{i,r}\frac{\tilde{B}_i^r[0; \delta]}{\bar{B}_i^r[\delta]}-1\right)\left(\sum_{i,r} s_i^r\right) + \sum_{i,r}\frac{\overline{E_i^r}}{p_i^r(\bar{B})} \right]
    \end{align*}

    It then suffices to take $\delta = \epsilon$. We note that $\frac{1+\epsilon}{\epsilon}\log\left(\frac{1}{\epsilon}\right) \geq \frac{\log(\epsilon)}{\log\left(1 - \frac{\epsilon}{1 + \epsilon}\right)}$ for small $\epsilon$.    
\end{proof}

We now prove \cref{lem: bounding limiting values,lem: bounding limiting value and iterate}.

\subsection{Bounding Distance Between Limiting Budgets} Towards proving \cref{lem: bounding limiting values}, we argue that the virtual budget in iteration $t$ of the algorithm are monotone with respect to $\delta$:
\begin{lemma}
    For all $\delta \leq \delta'$, $\tilde{B}[t; \delta] \preceq \tilde{B}[t; \delta']$. Hence, $\bar{B}[\delta] \preceq \bar{B}[\delta']$.
\end{lemma}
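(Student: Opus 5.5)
We prove the first claim by induction on $t$, using only that the update map of \cref{algo:BDA} is monotone in both the current budgets and the perturbation parameter. For the base case, the initial budgets of the perturbed instance are
\[
\tilde B_i^r[0;\delta] = \lim_{p\to\infty} p\,D_i^r(p;\delta) = \overline{E_i^r} + \delta\,\overline{E_i^r} = (1+\delta)\,\overline{E_i^r},
\]
which is weakly increasing in $\delta$ because $\overline{E_i^r}\ge 0$. Hence $\tilde B[0;\delta]\preceq \tilde B[0;\delta']$.

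For the inductive step, assume $\tilde B[t;\delta]\preceq \tilde B[t;\delta']$. The pricing map $p(\cdot)$ is the harmonic price map for the same tree, conversion graph and supplies; the perturbation changes only the demands, never the map $\hat B\mapsto p(\hat B)$. By \cref{lem: key structural lemma} we therefore get $p(\tilde B[t;\delta])\preceq p(\tilde B[t;\delta'])$. The update is
\[
\tilde B_i^r[t+1;\delta] = E_i^r\bigl(p_i^r(\tilde B[t;\delta]);\delta\bigr) = E_i^r\bigl(p_i^r(\tilde B[t;\delta])\bigr) + \delta\,\overline{E_i^r}.
\]
This expression is weakly increasing in the price, since $E_i^r$ is weakly increasing by \cref{assumption:demands}(b). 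It is also weakly increasing in $\delta$, because $\overline{E_i^r}\ge0$. Chaining the two monotonicities gives
\[
\tilde B_i^r[t+1;\delta] \le E_i^r\bigl(p_i^r(\tilde B[t;\delta'])\bigr)+\delta\,\overline{E_i^r} \le E_i^r\bigl(p_i^r(\tilde B[t;\delta'])\bigr)+\delta'\,\overline{E_i^r} = \tilde B_i^r[t+1;\delta'],
\]
which closes the induction.

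The second claim follows by passing to the limit in $t$. For each fixed $\delta$, the sequence $\tilde B[t;\delta]$ is monotone and bounded, so its limit $\bar B[\delta]$ exists; this is the argument of \cref{thm: main correctness thm} applied to the perturbed demands, which satisfy \cref{assumption:demands}. Componentwise weak inequalities are preserved under limits, so $\bar B[\delta]\preceq\bar B[\delta']$.

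The only step needing care is the application of \cref{lem: key structural lemma}. As proven, that lemma requires budgets in $\mathcal B$, where the harmonic price map is single-valued. The pass-through nodes are determined by $\tilde B[0;\cdot]$, and for $\delta>0$ these are exactly the pairs with $\overline{E_j^r}=0$ throughout the subtree and zero supply, the same set as for $\delta=0$. So the reduction of \cref{subsec: justify uniqueness} applies uniformly, and we may assume all iterates lie in $\mathcal B$. I expect this bookkeeping to be the main obstacle; the monotonicity argument itself is routine.
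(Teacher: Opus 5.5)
Your proof is correct and is essentially the paper's argument. It uses the same induction on $t$ with base case $\tilde B_i^r[0;\delta]=(1+\delta)\overline{E_i^r}$ and the same two-step chain in the inductive step: first monotonicity of the harmonic price map via \cref{lem: key structural lemma}, then monotonicity of $E_i^r(\cdot;\delta)$ in the price and in $\delta$, with the limit statement following because weak componentwise inequalities are preserved under limits. Your remark that the pass-through set, and hence the reduction to budgets in $\mathcal B$, does not depend on $\delta$ is correct bookkeeping that the paper leaves implicit.
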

\begin{proof}
We show this inductively. The base case is true since $\tilde{B}_i^r[0; \delta] = \max_{p} E_i^r(p; \delta) = (1+\delta)\; \overline{E_i^r}$. Assume $\tilde{B}[t; \delta] \preceq \tilde{B}[t; \delta']$.  Then,
\begin{align*}
    \tilde{B}_i^r[t+1; \delta] &= E_i^r(p_i^r(\tilde{B}[t; \delta]); \delta)\\
    &\leq E_i^r(p_i^r(\tilde{B}[t; \delta']); \delta)\\
    &\leq E_i^r(p_i^r(\tilde{B}[t; \delta']); \delta') = \tilde{B}_i^r[t+1; \delta']
\end{align*} where we apply the monotonicity of $E_i^r$ and $p_i^r$.
\end{proof}

We now prove the result.
\BoundingLimitings*
\begin{proof}
    We note that $\bar{B}[\delta]$ satisfies $E_i^r(\bar{B}[\delta]; \delta) =  \bar{B}_i^r[\delta]$.
    We see that \begin{align*}
        0 \leq D_i^r(p_i^r(\bar{B})) - D_i^r(p_i^r(\bar{B}[\delta])) &= \frac{\bar{B}_i^r}{p_i^r(\bar{B})} - \frac{\bar{B}_i^r[\delta] - \delta\overline{E_i^r}}{p_i^r(\bar{B}[\delta])}\\
        &= \frac{\bar{B}_i^r}{p_i^r(\bar{B})} - \frac{\bar{B}_i^r[\delta]}{p_i^r(\bar{B}[\delta])} + \frac{\delta\overline{E_i^r}}{p_i^r(\bar{B}[\delta])}\\
        \sum_{i,r}\left|D_i^r(p_i^r(\bar{B})) - D_i^r(p_i^r(\bar{B}[\delta]))
        \right|&= \sum_{i,r}\frac{\bar{B}_i^r}{p_i^r(\bar{B})} - \sum_{i,r}\frac{\bar{B}_i^r[\delta]}{p_i^r(\bar{B}[\delta])} + \sum_{i,r}\frac{\delta\overline{E_i^r}}{p_i^r(\bar{B}[\delta])}\\
        &=\sum_{i,r}\frac{\delta\overline{E_i^r}}{p_i^r(\bar{B}[\delta])}
    \end{align*} where we use that fact that $\sum_{i,r}\frac{B_i^r}{p_i^r(B)} = \sum_{i,r}s_i^r$ for all $B$.
     The result then follows from $\bar{B}_i^r = \bar{B}_i^r[0] \leq \bar{B}_i^r[\delta]$ for $\delta > 0$. 
\end{proof}

\subsection{Bounding Distance Between Limiting Budget and \texorpdfstring{$t$}{t}-th iterate budgets}

To prove \cref{lem: bounding limiting value and iterate}, we will track our algorithms progress by bounded the maximum ratio between  $\tilde{B}_i^r[t; \delta]$ and $\bar{B}_i^r[\delta]$. Recall that $\mu_t(\delta)$ is defined to be the smallest value such that $$\tilde{B}[t; \delta] \preceq \mu_t(\delta)\cdot \bar{B}[\delta].$$ We will now show that $\mu_t(\delta)$ converges to 1 exponentially fast. First, we need the following proposition:

\begin{proposition}
    For any demand functions satisfying \cref{assumption:demands} and $\alpha \geq 1$, $$E_i^r(\alpha p; \delta) \leq \left(\alpha - (\alpha - 1)\cdot \frac{\delta}{1 + \delta}\right) E_i^r(p; \delta).$$
    
\end{proposition}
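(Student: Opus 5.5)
The plan is to split the perturbed payment $E_i^r(p;\delta) = p\,D_i^r(p) + \delta\,\overline{E_i^r}$ into its original part $E_i^r(p) = p\,D_i^r(p)$ and the constant part $\delta\,\overline{E_i^r}$. Each part gets its own elementary bound under scaling by $\alpha \ge 1$, and then the two bounds are combined. Write $\lambda = \frac{\delta}{1+\delta}$, so the target factor is $\alpha(1-\lambda) + \lambda$.

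First, since $D_i^r$ is weakly decreasing (\cref{assumption:demands}(a)), we have $D_i^r(\alpha p) \le D_i^r(p)$. Hence
\[
E_i^r(\alpha p) = \alpha p\, D_i^r(\alpha p) \le \alpha\, E_i^r(p).
\]
Second, since $p\,D_i^r(p)$ is weakly increasing with limit $\overline{E_i^r}$ (\cref{assumption:demands}(b),(c)), we have $E_i^r(\alpha p) \le \overline{E_i^r}$. Combining these,
\[
E_i^r(\alpha p;\delta) = E_i^r(\alpha p) + \delta\overline{E_i^r} \le \min\{\alpha E_i^r(p),\, \overline{E_i^r}\} + \delta \overline{E_i^r}.
\]

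Next I set $x = E_i^r(p) \in [0, \overline{E_i^r}]$, so that $E_i^r(p;\delta) = x + \delta\overline{E_i^r}$. It then suffices to show
\[
\min\{\alpha x, \overline{E_i^r}\} + \delta\overline{E_i^r} \le \bigl(\alpha(1-\lambda)+\lambda\bigr)\bigl(x + \delta\overline{E_i^r}\bigr).
\]
I will use only the branch $\alpha x$ of the minimum. The inequality then reads $\alpha x + \delta \overline{E_i^r} \le \bigl(\alpha - (\alpha-1)\lambda\bigr)\bigl(x+\delta\overline{E_i^r}\bigr)$. Expanding and moving the $x$ terms to one side, this is equivalent to
\[
(\alpha-1)\lambda x \le (\alpha-1)(1-\lambda)\,\delta\,\overline{E_i^r}.
\]
Since $(1-\lambda)\delta = \frac{\delta}{1+\delta} = \lambda$, this reduces to $(\alpha-1)\lambda x \le (\alpha-1)\lambda\overline{E_i^r}$. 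That holds because $x \le \overline{E_i^r}$ and $\alpha \ge 1$.

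So in fact the min is unnecessary. The bound $E_i^r(\alpha p)\le \alpha E_i^r(p)$ together with $E_i^r(p)\le\overline{E_i^r}$ suffices: the perturbation term $\delta\overline{E_i^r}$ does not scale with $\alpha$, and it makes up at least a $\lambda$-fraction of $E_i^r(p;\delta)$. Equivalently, the plan is to write $E_i^r(\alpha p;\delta) \le \alpha E_i^r(p;\delta) - (\alpha-1)\delta\overline{E_i^r}$ and then use $\delta\overline{E_i^r} \ge \lambda\, E_i^r(p;\delta)$, which is the same as $x \le \overline{E_i^r}$.

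The only real point needing care is identifying that the constant term is at least a $\frac{\delta}{1+\delta}$ fraction of the total payment. This uses boundedness and monotonicity of $p\,D_i^r(p)$ to get $E_i^r(p) \le \overline{E_i^r}$. Everything else is algebra.
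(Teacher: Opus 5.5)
Your proof is correct and is essentially the paper's argument. Both use $E_i^r(\alpha p) \le \alpha E_i^r(p)$ (from $D_i^r$ weakly decreasing), rewrite the bound as $\alpha E_i^r(p;\delta) - (\alpha-1)\delta\overline{E_i^r}$, and conclude via $\delta\overline{E_i^r} \ge \frac{\delta}{1+\delta}E_i^r(p;\delta)$, which follows from $E_i^r(p) \le \overline{E_i^r}$; because you clear denominators rather than dividing by $E_i^r(p;\delta)$, your version also covers the trivial case $\overline{E_i^r} = 0$.
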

\begin{proof}
    We note that $E_i^r(\alpha p) = \alpha p \cdot D_i^r(\alpha p) \leq \alpha p \cdot D_i^r(p) = \alpha E_i^r(p)$ where we use the fact that $\alpha \geq 1$ and $D_i^r$ is decreasing. We now see that
    \begin{align*}
        E_i^r(\alpha p; \delta) &= E_i^r(\alpha p) + \delta \overline{E_i^r}\\
        &\leq \alpha E_i^r( p) + \alpha\delta \overline{E_i^r} - (\alpha -  1)\delta \overline{E_i^r}\\
        &\leq \alpha  E_i^r(p; \delta)  - (\alpha -  1)\delta \overline{E_i^r}\\
        &\leq \left(\alpha    - (\alpha -  1)\frac{\delta \overline{E_i^r}}{E_i^r( p; \delta)}\right)E_i^r( p; \delta)\\
        &\leq \left(\alpha    - (\alpha -  1)\cdot \frac{\delta}{1+\delta}\right)E_i^r( p; \delta)\\
    \end{align*}
\end{proof}

We now show the exponential convergence rate of $\mu_t(\delta)$:
\begin{lemma}
     For all $t$ and $\delta > 0$, $$\mu_t(\delta) - 1 \leq \left(\max_{i,r}\frac{\tilde{B}_i^r[0; \delta]}{\bar{B}_i^r[\delta]} - 1\right)\left(1 - \frac{\delta}{1 + \delta}\right)^t.$$
 \end{lemma}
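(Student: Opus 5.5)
We argue by induction on $t$ that $\mu_t(\delta) - 1 \le (\mu_0(\delta) - 1)\left(1 - \frac{\delta}{1+\delta}\right)^t$. Since $\mu_0(\delta) = \max_{i,r} \tilde B_i^r[0;\delta]/\bar B_i^r[\delta]$ by definition of $\mu_0$, this is exactly the claim. The base case $t=0$ is therefore an identity. The inductive step reduces to the one-step contraction
\[
\mu_{t+1}(\delta) - 1 \le \left(1 - \tfrac{\delta}{1+\delta}\right)\left(\mu_t(\delta) - 1\right).
\]

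For the inductive step, set $\alpha = \mu_t(\delta) \ge 1$. Then $\tilde B[t;\delta] \preceq \alpha \bar B[\delta]$ by the definition of $\mu_t$. We then chain three facts.

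\begin{enumerate}
\item Monotonicity (\cref{lem: key structural lemma}) gives $p(\tilde B[t;\delta]) \preceq p(\alpha \bar B[\delta])$.
\item Positive homogeneity (\cref{prop: positively homogeneous}) gives $p(\alpha\bar B[\delta]) = \alpha\, p(\bar B[\delta])$.
\item $E_i^r(\cdot;\delta)$ is weakly increasing by \cref{assumption:demands}, which the perturbed demand still satisfies.
\end{enumerate}

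Combining these,
\[
\tilde B_i^r[t+1;\delta] = E_i^r\bigl(p_i^r(\tilde B[t;\delta]);\delta\bigr) \le E_i^r\bigl(\alpha\, p_i^r(\bar B[\delta]);\delta\bigr).
\]
Applying the preceding proposition with this $\alpha \ge 1$ bounds the right-hand side by
\[
\left(\alpha - (\alpha-1)\tfrac{\delta}{1+\delta}\right) E_i^r\bigl(p_i^r(\bar B[\delta]);\delta\bigr).
\]

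Next we use the fixed-point property of the limit of the perturbed run. By \cref{thm: main correctness thm}, applied to the perturbed instance together with continuity of $D$, we have $E_i^r(p_i^r(\bar B[\delta]);\delta) = \bar B_i^r[\delta]$. Substituting,
\[
\tilde B[t+1;\delta] \preceq \left(\alpha - (\alpha-1)\tfrac{\delta}{1+\delta}\right)\bar B[\delta].
\]
By minimality of $\mu_{t+1}$, this gives $\mu_{t+1}(\delta) \le 1 + (\alpha-1)\left(1-\frac{\delta}{1+\delta}\right)$, which is the one-step contraction.

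One technical point needs care: the definition of $\mu_t$ requires $\bar B_i^r[\delta] > 0$ for every pair. In the perturbed instance, $E_i^r(p;\delta) \ge \delta\,\overline{E_i^r}$, so $\bar B_i^r[\delta] \ge \delta \overline{E_i^r} > 0$ whenever $\overline{E_i^r} > 0$. Pairs with $\overline{E_i^r} = 0$ have zero budget at every iteration; these are the pass-through-type nodes of \cref{subsec: justify uniqueness} and can be dropped from the maximum.

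The main obstacle is the step that uses the fixed-point identity $E_i^r(p_i^r(\bar B[\delta]);\delta) = \bar B_i^r[\delta]$ together with positivity of prices at $\bar B[\delta]$ (\cref{lem:prices are positive}). Both must be checked for the perturbed instance, which means confirming that it inherits non-degeneracy from the original. This holds because adding $\delta\overline{E_i^r}/p$ only increases demand, in particular $D_i^r(0)$, so no new degenerate subgraph can appear. The remaining steps are direct substitutions.
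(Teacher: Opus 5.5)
Your proposal is correct and follows the paper's proof essentially step for step. Bound $\tilde B[t+1;\delta]$ using monotonicity of $p$, positive homogeneity, monotonicity of $E_i^r(\cdot;\delta)$, the contraction proposition and the fixed-point property of $\bar B[\delta]$; read off $\mu_{t+1}(\delta)-1 \le \bigl(1-\tfrac{\delta}{1+\delta}\bigr)(\mu_t(\delta)-1)$ from minimality of $\mu_{t+1}$; then induct. Your extra remarks are sound details the paper leaves implicit: excluding pairs with $\overline{E_i^r}=0$ avoids a $0/0$ in $\mu_t$, and the perturbation keeps the instance non-degenerate, so prices at $\bar B[\delta]$ are positive.
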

 \begin{proof}
     We see that following
     \begin{align*}
        \tilde{B}[t+1; \delta] &= E_i^r(p_i^r(\tilde{B}[t; \delta]); \delta)\\
         &\leq E_i^r(p_i^r(\mu_t(\delta)\cdot \bar{B}[\delta]); \delta)\\
         &= E_i^r(\mu_t(\delta)\cdot p_i^r(\bar{B}[\delta]); \delta)\\
         &\leq  \left(\mu_t(\delta)    - (\mu_t(\delta) -  1)\frac{\delta}{1 + \delta}\right)E_i^r( p(\bar{B}[\delta]); \delta)\\
         &\leq  \left(\mu_t(\delta)    - (\mu_t(\delta) -  1)\frac{\delta}{1 + \delta}\right)\bar{B}_i^r[\delta]
     \end{align*} where we use the monotonicity of $p_i^r$ and $E_i^r$ and the homogeneity of $p_i^r$ ie. $p_i^r(\alpha B) = \alpha p_i^r( B)$ (\cref{prop: positively homogeneous}).

     Note that by the definition of $\mu_{t+1}(\delta)$, $$\mu_{t+1}(\delta) \leq \mu_t(\delta)    - (\mu_t(\delta) -  1)\frac{\delta}{1 + \delta}.$$ Subtracting 1 from both sides of the inequality and rearranging, we see that
     $\mu_{t+1}(\delta)$, $$(\mu_{t+1}(\delta) - 1) \leq (\mu_t(\delta) -  1)\left(1-\frac{\delta}{1 + \delta}\right).$$
     By induction we can now see that 
     $$(\mu_{t}(\delta) - 1) \leq (\mu_0(\delta) -  1)\left(1-\frac{\delta}{1 + \delta}\right)^t.$$ 
     It is immediate to see that $$\mu_0(\delta)= \max_{i,r}\frac{\tilde{B}_i^r[0; \delta]}{\bar{B}_i^r[\delta]}.$$
 \end{proof}

We now use this exponential convergence rate bound to show  \cref{lem: bounding limiting value and iterate}.
\BoundingIterate*
\begin{proof}
We see that
    \begin{align*}
        D_i^r(p_i^r(\bar{B}_i^r[\delta])) - D_i^r(\tilde{B}_i^r[t;\delta]) &= \frac{\bar{B}_i^r[\delta] - \delta \overline{E_i^r}}{p_i^r(\bar{B}[\delta])} - \frac{\tilde{B}_i^r[t+1; \delta] - \delta\overline{E_i^r}}{p_i^r(\tilde{B}[t;\delta])}\\
        &\leq \frac{\bar{B}_i^r[\delta] - \delta\overline{E_i^r}}{p_i^r(\bar{B}[\delta])} - \frac{\bar{B}_i^r[\delta] - \delta\overline{E_i^r}}{p_i^r(\tilde{B}[t;\delta])}\\
        &= (\bar{B}_i^r[\delta] - \delta\overline{E_i^r})\left(\frac{p_i^r(\tilde{B}[t;\delta]) - p_i^r(\bar{B}[\delta])}{p_i^r(\bar{B}[\delta]) \cdot p_i^r(\tilde{B}[t;\delta])}\right)\\
     \end{align*}
         
We now apply the definition of $\mu_t(\delta)$ and the monotonicity of $p_i^r$:

         \begin{align*}
        0 \leq D_i^r(p_i^r(\bar{B}_i^r[\delta])) - D_i^r(p_i^r(\tilde{B}_i^r[t;\delta])) &\leq (\bar{B}_i^r[\delta] - \delta\overline{E_i^r})\left(\frac{\mu_t(\delta) \cdot p_i^r(\bar{B}[\delta]) - p_i^r(\bar{B}[\delta])}{p_i^r(\bar{B}[\delta]) \cdot p_i^r(\tilde{B}[t;\delta])}\right)\\
        &\leq (\bar{B}_i^r[\delta] - \delta\overline{E_i^r})\left(\frac{\mu_t(\delta) - 1}{p_i^r(\tilde{B}[t;\delta])}\right)\\
        &\leq (\bar{B}_i^r[\delta] - \delta\overline{E_i^r})\left(\frac{\mu_t(\delta) - 1}{p_i^r(\bar{B}[\delta])}\right)\\
        &= D_i^r(p_i^r(\bar{B}_i^r[\delta])) \cdot (\mu_t(\delta) - 1)\\
        &= D_i^r(p_i^r(\bar{B}_i^r[\delta])) \cdot\left(\max_{i,r}\frac{\tilde{B}_i^r[0; \delta]}{\bar{B}_i^r[\delta]}-1\right)\left(1 - \frac{\delta}{1 + \delta}\right)^t.
    \end{align*}
We now sum over $i,r$ and see that 

\begin{align*}
        \sum_{i,r}\left|D_i^r(p_i^r(\bar{B}_i^r[\delta])) - D_i^r(\tilde{B}_i^r[t;\delta])\right| &\leq \left(\max_{i,r}\frac{\tilde{B}_i^r[0; \delta]}{\bar{B}_i^r[\delta]}-1\right)\left(1 - \frac{\delta}{1 + \delta}\right)^t\sum_{i,r}D_i^r(p_i^r(\bar{B}_i^r[\delta]))\\
        &\leq \left(\max_{i,r}\frac{\tilde{B}_i^r[0; \delta]}{\bar{B}_i^r[\delta]}-1\right)\left(1 - \frac{\delta}{1 + \delta}\right)^t\left(\sum_{i,r} s_i^r - \delta\sum_{i,r}\frac{ \overline{E_i^r}}{p_i^r(\bar{B}[\delta])} \right)\\
        &\leq \left(\max_{i,r}\frac{\tilde{B}_i^r[0; \delta]}{\bar{B}_i^r[\delta]}-1\right)\left(1 - \frac{\delta}{1 + \delta}\right)^t\left(\sum_{i,r} s_i^r \right)
    \end{align*}
\end{proof}

\section{Simulations on the Allocation of AI Accelerators}
\label{sec:simulations}

Finally, we demonstrate the convergence of our algorithms on a real dataset obtained from the market used by Google to allocate AI accelerators, such as TPUs and GPUs. The resource types correspond to TPU chips of various generations and in different geographic locations. When running a machine learning (ML) training job, an engineering team specifies the types of TPU chips it needs: a specific generation and location. However, a team also has the option of specifying a ``global'' chip, meaning they are flexible and can run their jobs on any chip of that generation, regardless of its location. This flexibility leads to a very simple, star-shaped conversion graph: for each generation $X$ and geographic location $L$, we have a directed conversion $(X, L) \rightarrow (X, \gl)$, where $\gl$ stands for the global location. While conversions between generations are theoretically possible, they are rare in practice, so we solve a separate market clearing for each chip generation.
We analyze four market runs (one corresponding to each chip generation) in a single snapshot of the market.

\begin{figure}[htbp]
    \centering
    \begin{subfigure}[b]{\figwidth}
        \centering
        \includegraphics[width=\linewidth]{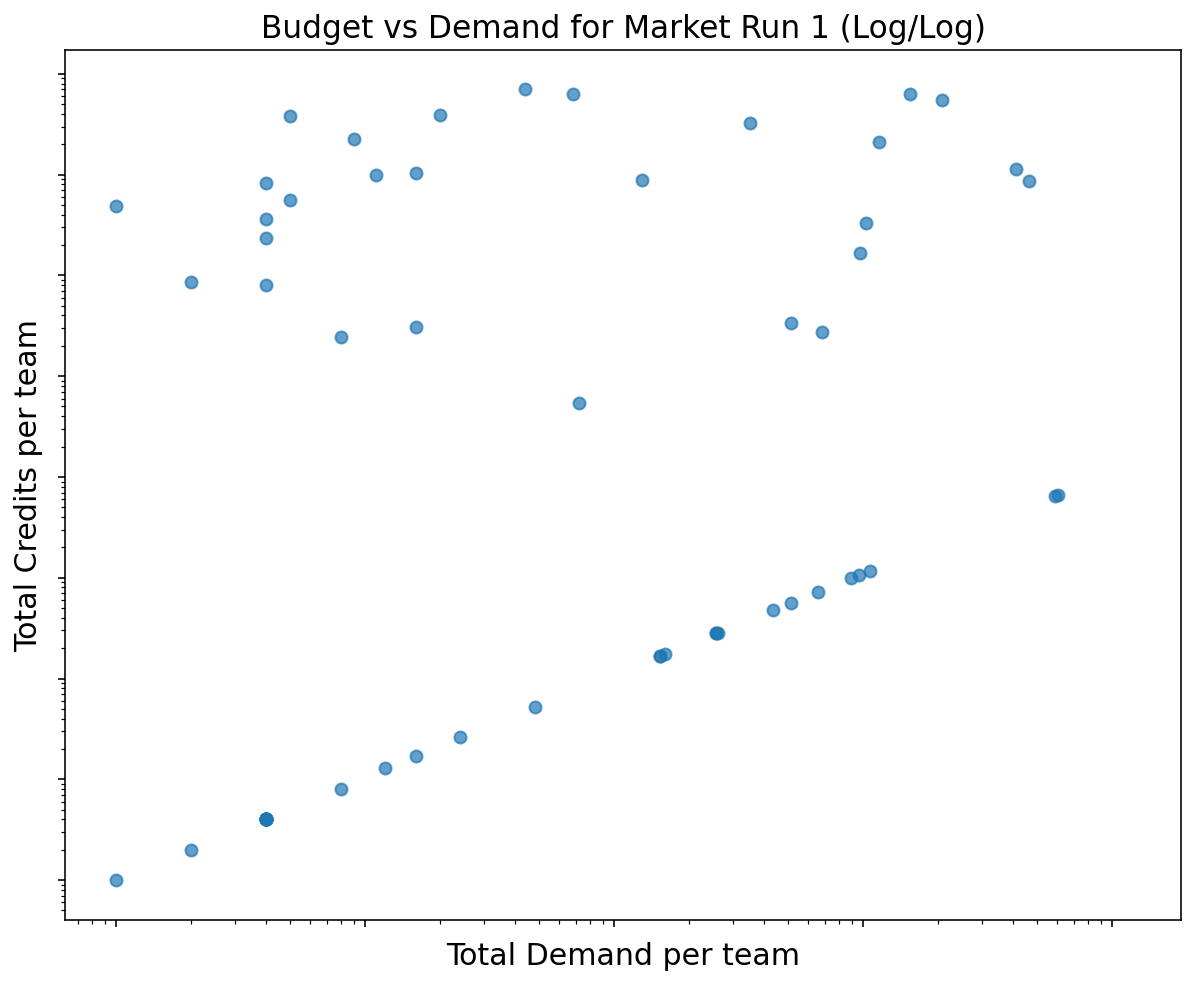}
    \end{subfigure}
    \hfill
    \begin{subfigure}[b]{\figwidth}
        \centering
        \includegraphics[width=\linewidth]{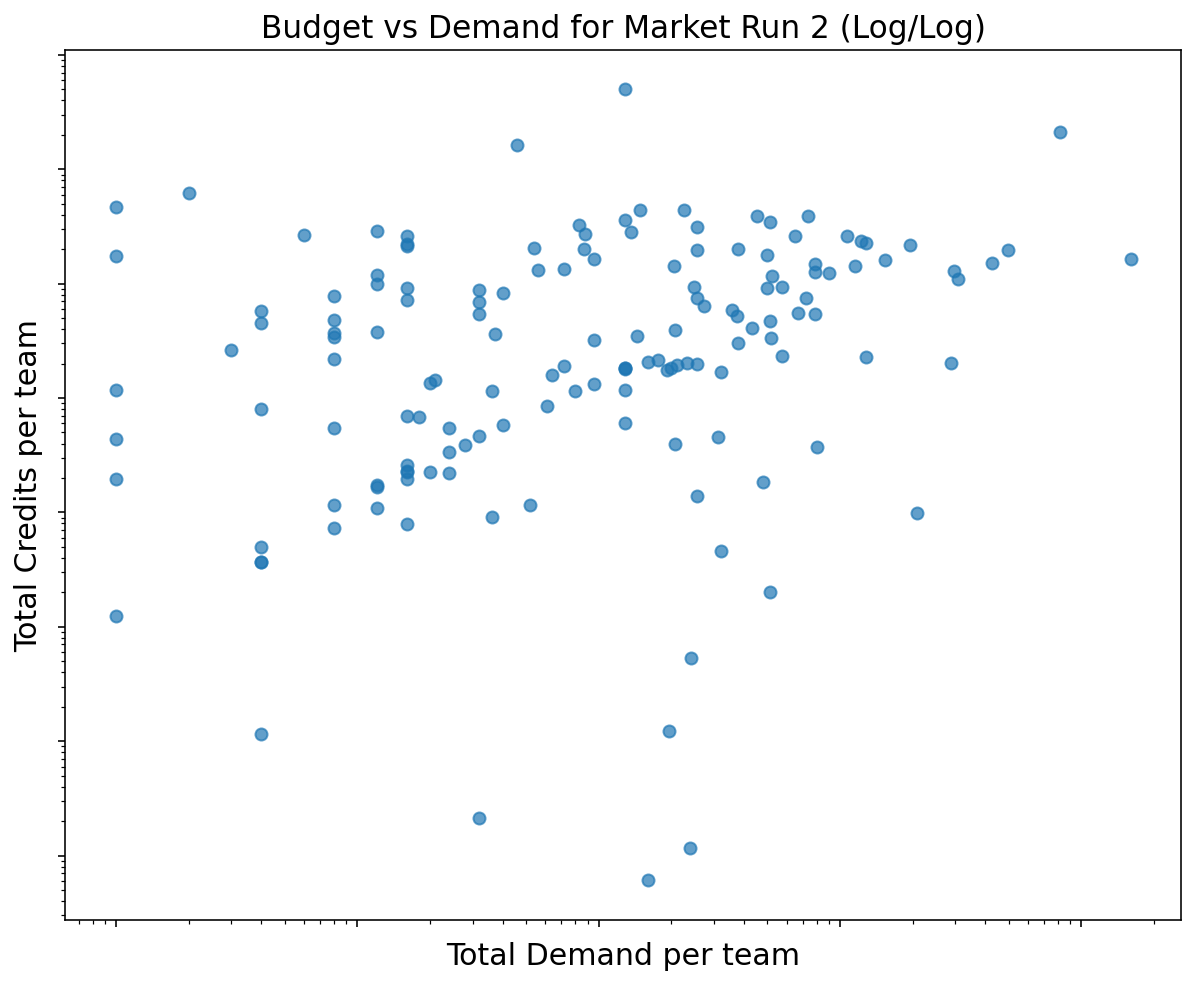}
    \end{subfigure}

    \vspace{1em}

    \begin{subfigure}[b]{\figwidth}
        \centering
        \includegraphics[width=\linewidth]{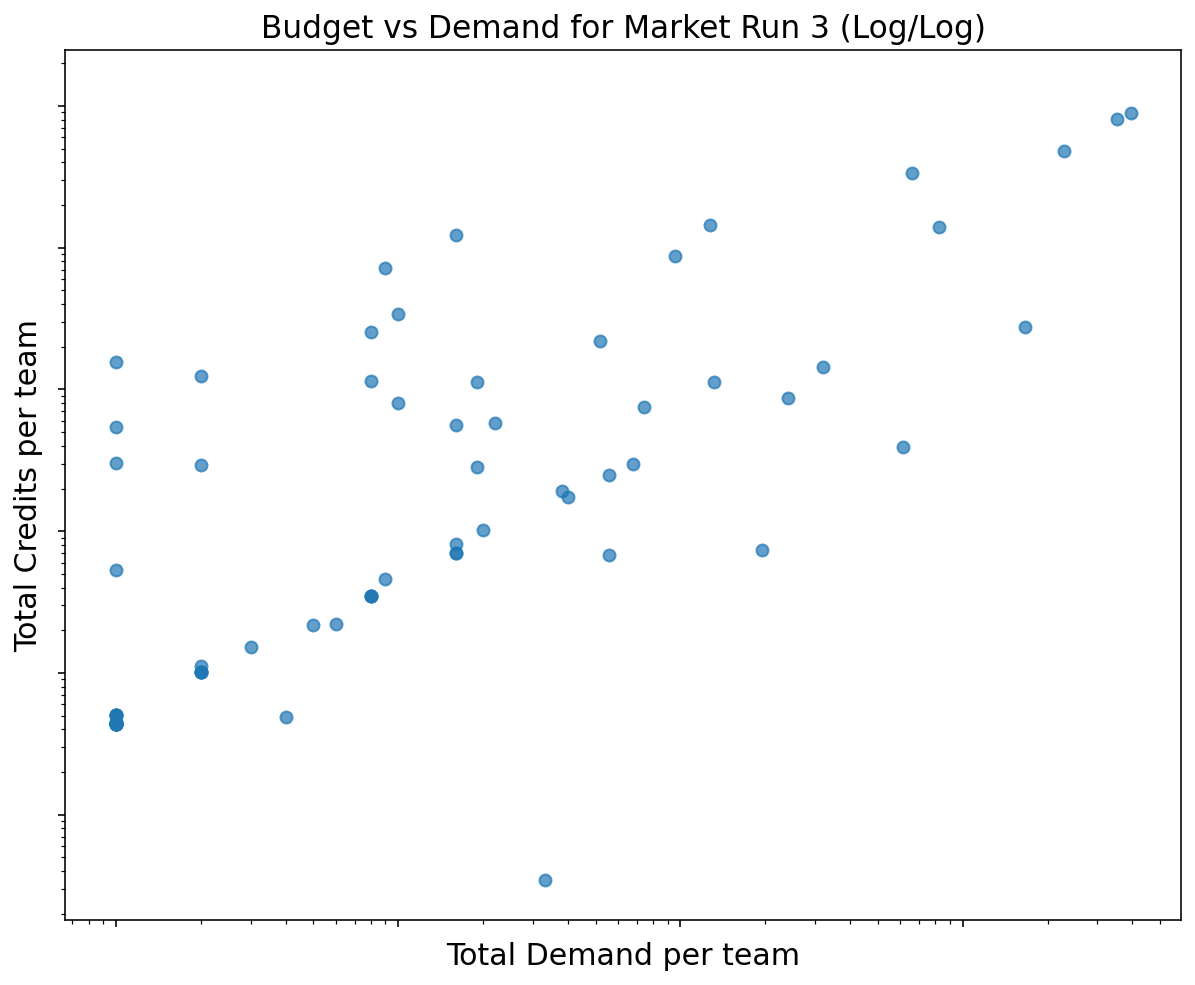}
    \end{subfigure}
    \hfill
    \begin{subfigure}[b]{\figwidth}
        \centering
        \includegraphics[width=\linewidth]{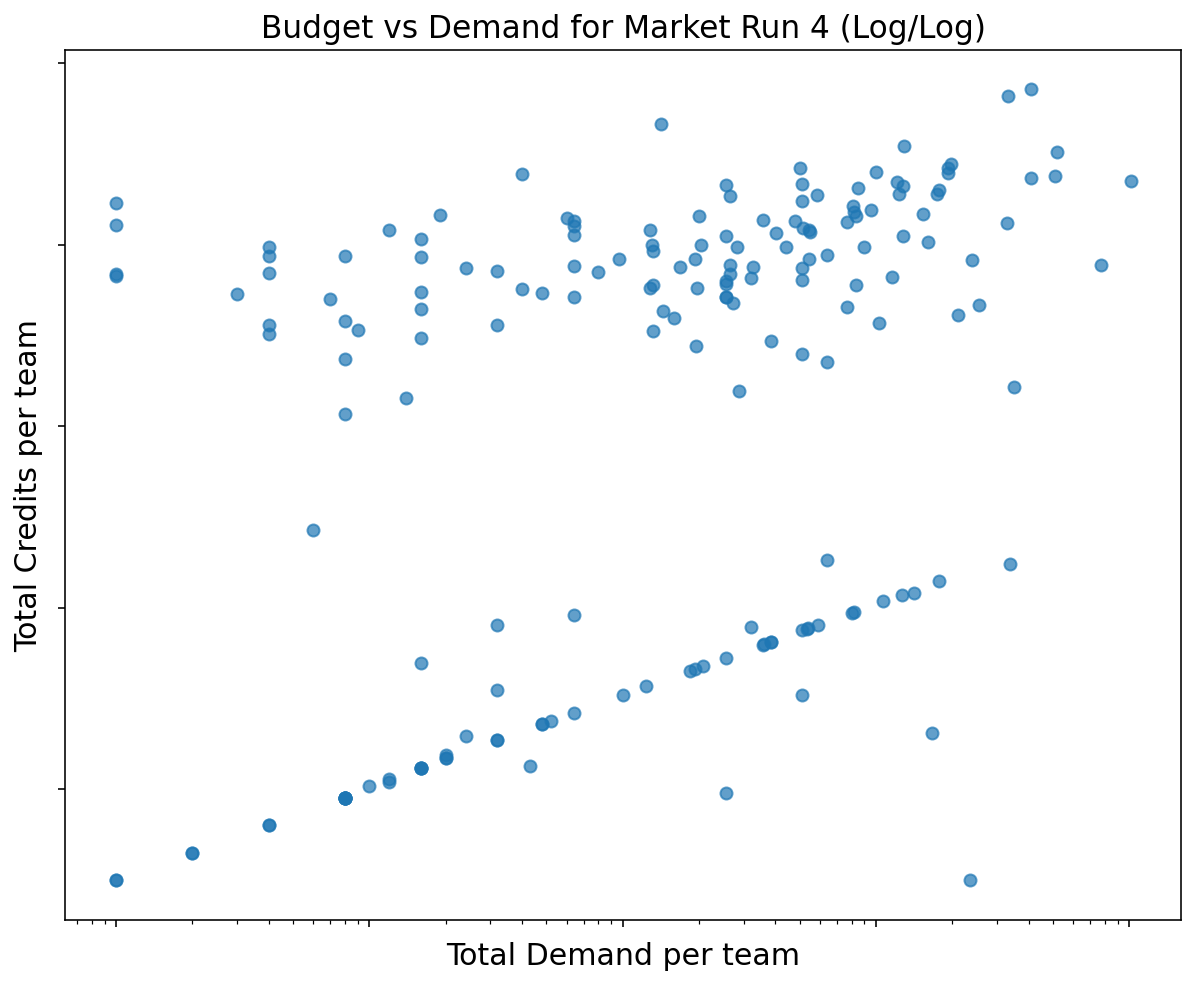}
    \end{subfigure}

    \caption{Credits and Demand Scatterplot for different Market Runs corresponding to the allocation of different resources. Axes are omitted for confidentiality.}
    \Description{}
    \label{fig:bid_demand_plot}
\end{figure}

The agent hierarchy $\agentTree$ is a three-level tree where the root represents the entire company, the first level corresponds to business units, and the leaves correspond to engineering teams within those business units.
All the resource demand originates at the leaves.
Each team at leaf $i$ has a maximum physical demand for chips $d_i^r$ and a total amount of credits $B_i^r$ available, for each chip type $r$, making the demand function $D_i^r(p) = \min\{ B_i^r / p, d_i^r \}$.
Figure \ref{fig:bid_demand_plot} shows the distribution of $(B_i^r, d_i^r)$ for all four markets.
The maximum demand $d_i^r$ reflects the ML jobs that the team is currently running or trying to start.
The amount of credits $B_i^r$ allocated to each team is determined by a top-down planning process that is exogenous to the market.
Similarly, each team has settings that control what fraction of their credits is allocated towards the purchase of each resource type, which is also exogenous.
The supply is distributed across the tree, but the vast majority of it is either pooled company-wide at the root or localized to each business unit at the first level.

In production, the market clears every minute and provisions access to resources between market runs. In this paper, we abstract away the dynamic aspects of the market to focus on computing the static equilibrium for a single snapshot.

\subsection{Convergence Results}

To track the convergence of the Budget Descent Algorithm (\cref{algo:BDA}), we measure the relative discrepancy of the virtual budgets at each iteration. For every agent $i$ and resource $r$, let $\hat B_i^r[t]$ be the active virtual budget at iteration $t$, and $p_i^r[t]$ be the computed prices.
The optimal target budget for that iteration, dictated by the demand caps, is $\min\{ B_i^r, d_i^r \cdot p_i^r[t] \}$.
We compute the error ratio for every $i, r$:
$ \text{Error}_i^r = \hat B_i^r[t] / \min\{ B_i^r, d_i^r \cdot p_i^r[t] \} .$

In a perfect equilibrium, this ratio is exactly $1$ for all agents and resources.
Because our algorithm strictly descends the virtual budgets, they approach the target monotonically from above, meaning the error ratio is always $\ge 1$, as shown in Figure~\ref{fig:budget_descent}.
Each subfigure corresponds to one of the four market runs and plots the maximum, average, and minimum error ratios across the entire hierarchy over $200$ iterations.
As the logarithmic plots demonstrate, the maximum and average errors tightly and smoothly converge to $1$.
This validates that the algorithm efficiently and reliably finds the exact equilibrium for the large-scale capped harmonic market.

We also show a different statistic throughout this iterative process.
Because the solution we calculate is approximate, there will be a discrepancy between the payment an agent is expecting and is actually receiving, or an agent will be asked for more resources than she has available.
For this reason, we demonstrate the following.
First, the \textit{Payment Error}: the total amount of error in the Payment Conservation axiom, normalized by the total amount of budgets in the system.
Second, the \textit{Allocation Error}: the total excess of resources that an agent is asked to provide, normalized by the total amount of resources in the system.
We showcase these two in \cref{fig:budget_descent_pay_alloc}, where both errors decrease over time.
This confirms that as we run more and more iterations of the Budget Descent Algorithm, the quality of the solution improves.

\begin{figure}[htbp]
    \centering
    \begin{subfigure}[b]{\figwidth}
        \centering
        \includegraphics[width=\linewidth]{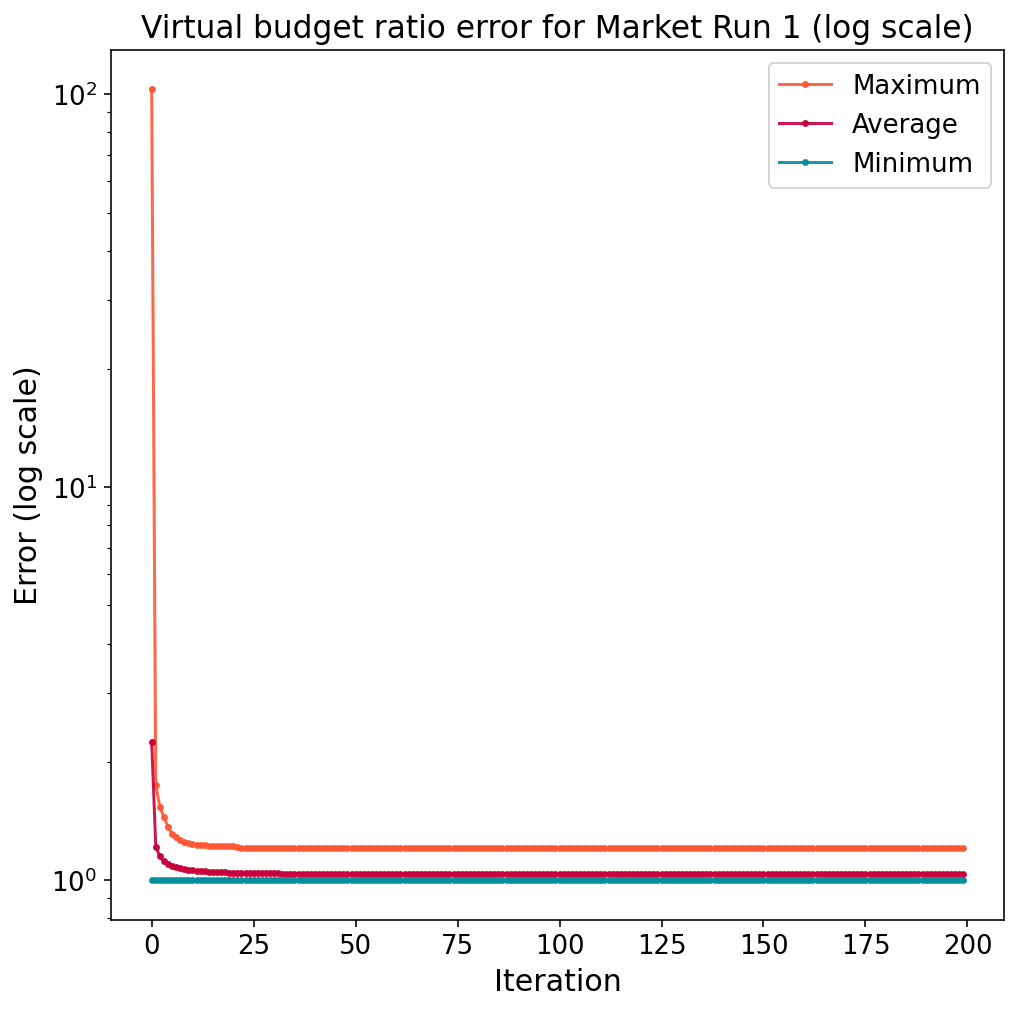}
    \end{subfigure}
    \hfill
    \begin{subfigure}[b]{\figwidth}
        \centering
        \includegraphics[width=\linewidth]{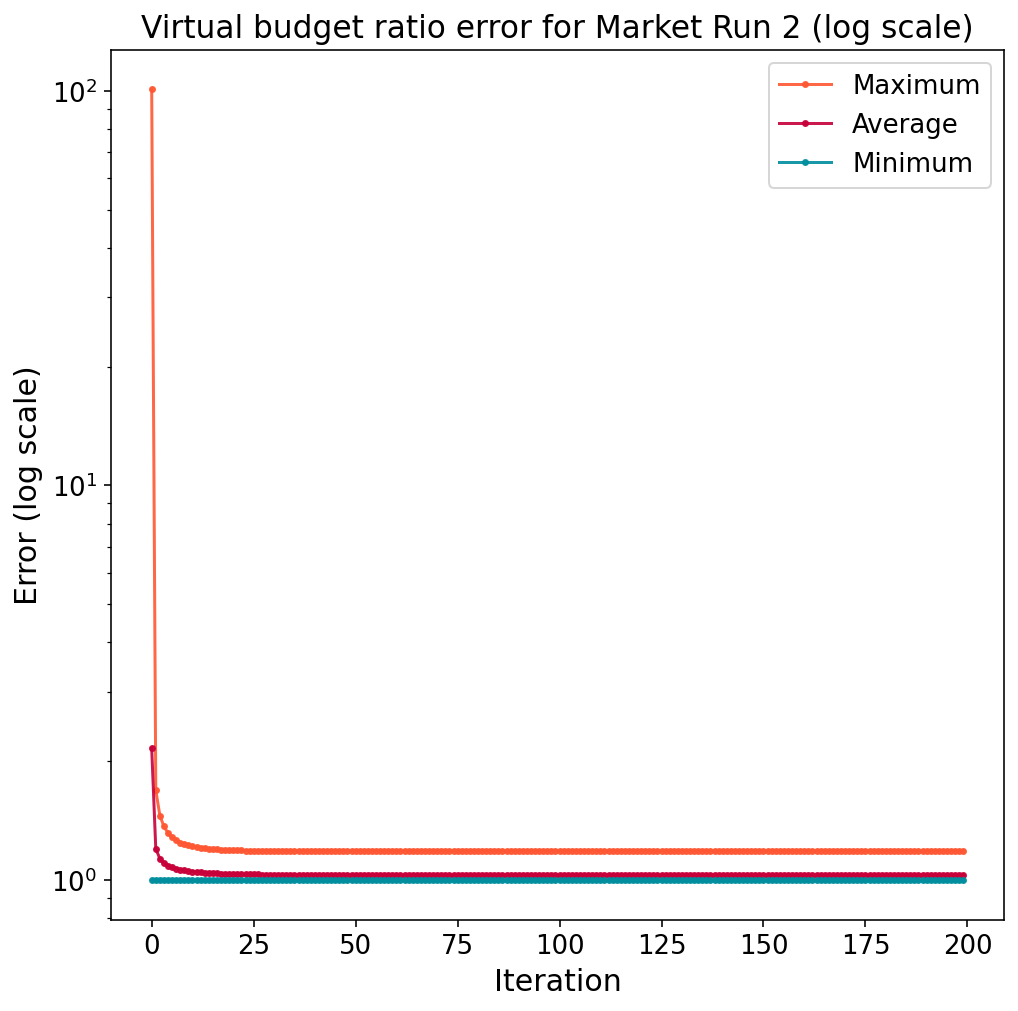}
    \end{subfigure}

    \vspace{1em}

    \begin{subfigure}[b]{\figwidth}
        \centering
        \includegraphics[width=\linewidth]{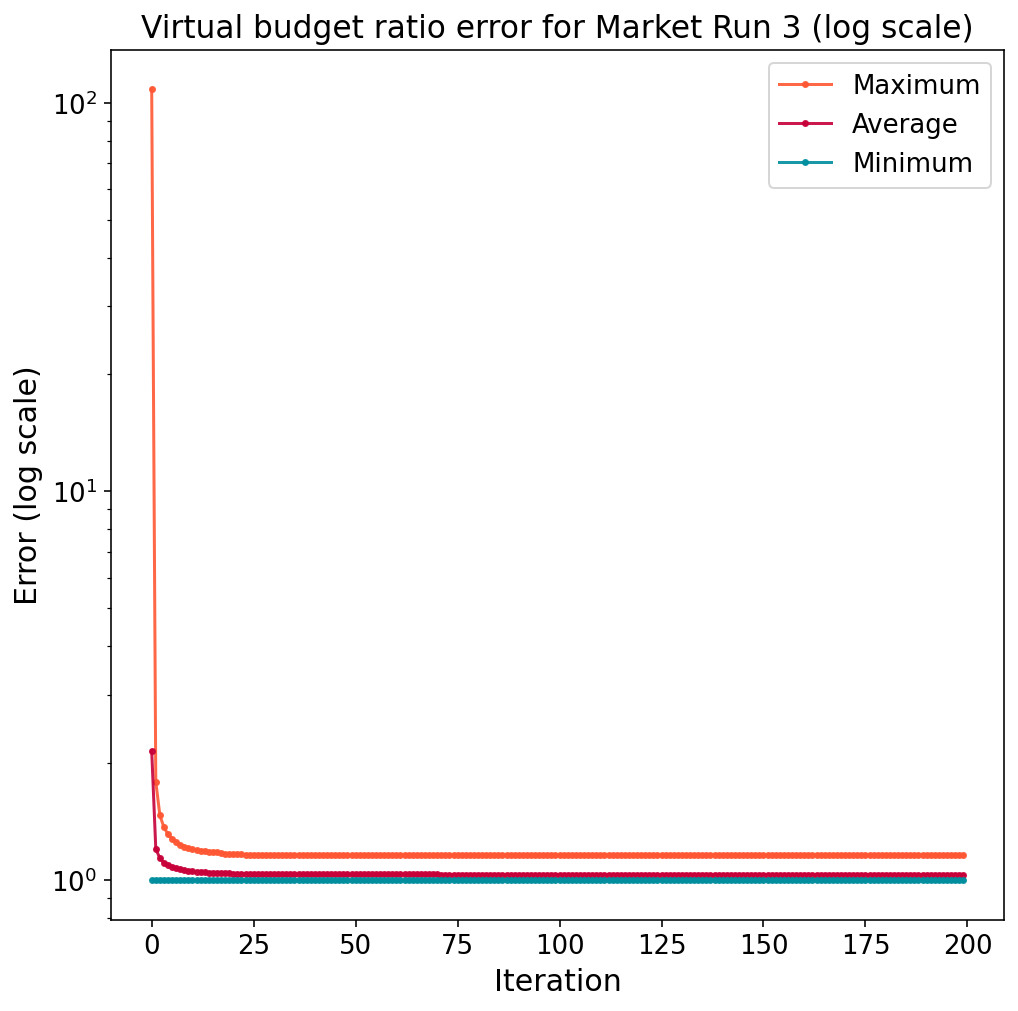}
    \end{subfigure}
    \hfill
    \begin{subfigure}[b]{\figwidth}
        \centering
        \includegraphics[width=\linewidth]{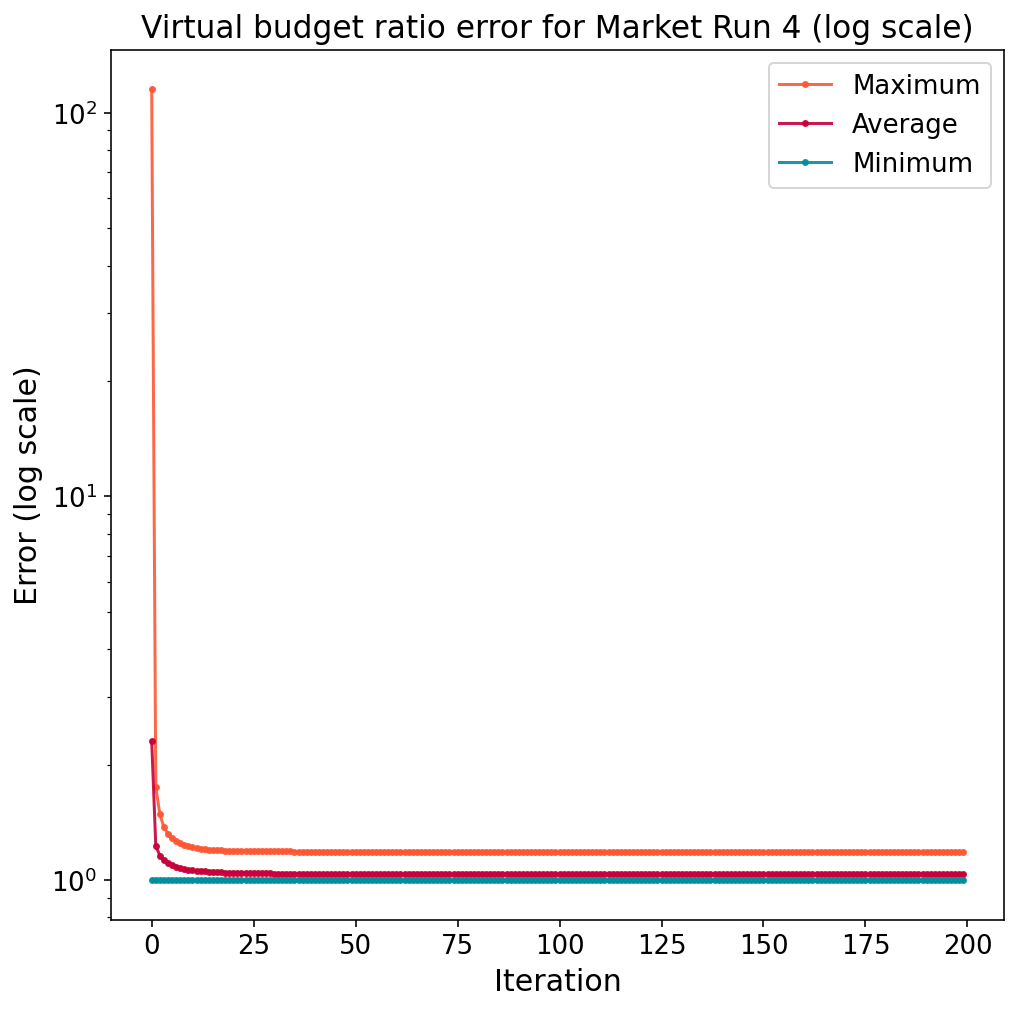}
    \end{subfigure}

    \caption{Error virtual budgets of the budget descent algorithm as a function of iterations}
    \Description{}
    \label{fig:budget_descent}
\end{figure}

\begin{figure}[htbp]
    \centering
    \begin{subfigure}[b]{\figwidth}
        \centering
        \includegraphics[width=\linewidth]{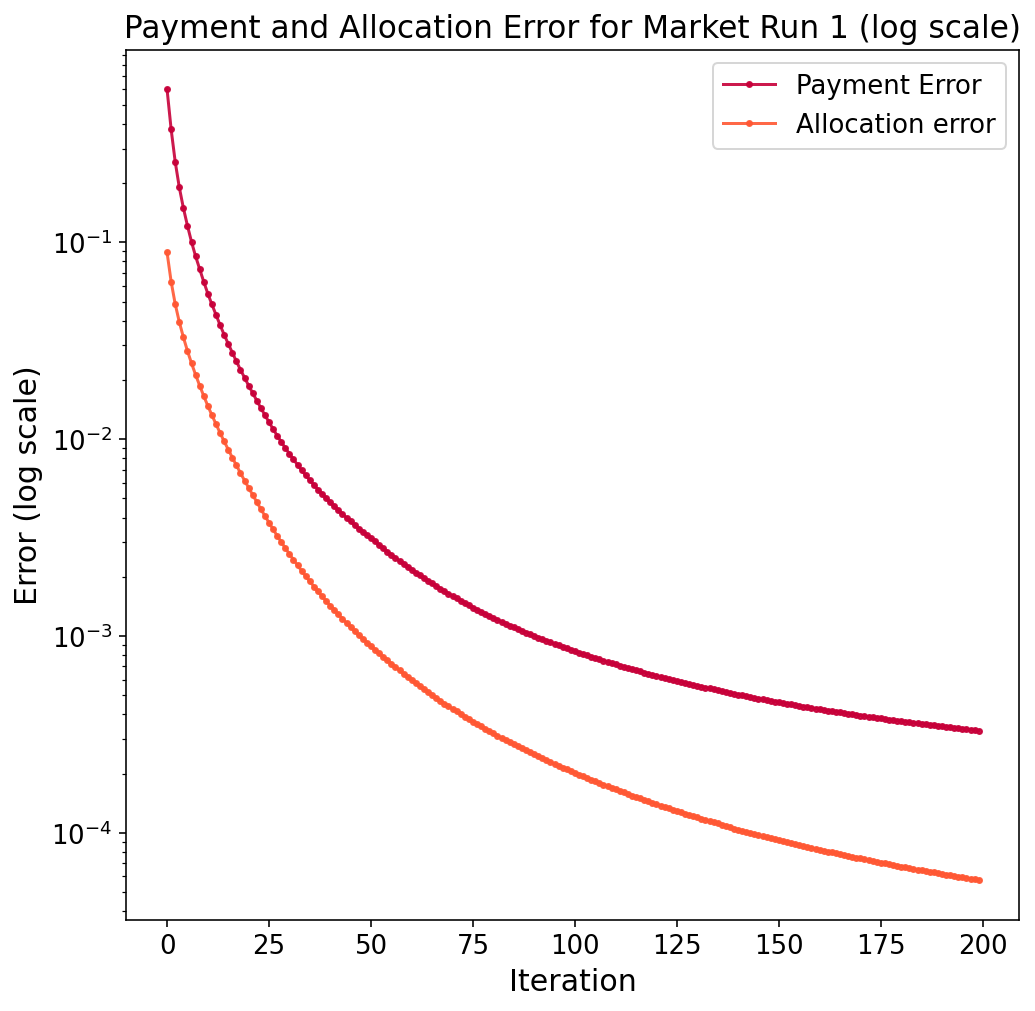}
    \end{subfigure}
    \hfill
    \begin{subfigure}[b]{\figwidth}
        \centering
        \includegraphics[width=\linewidth]{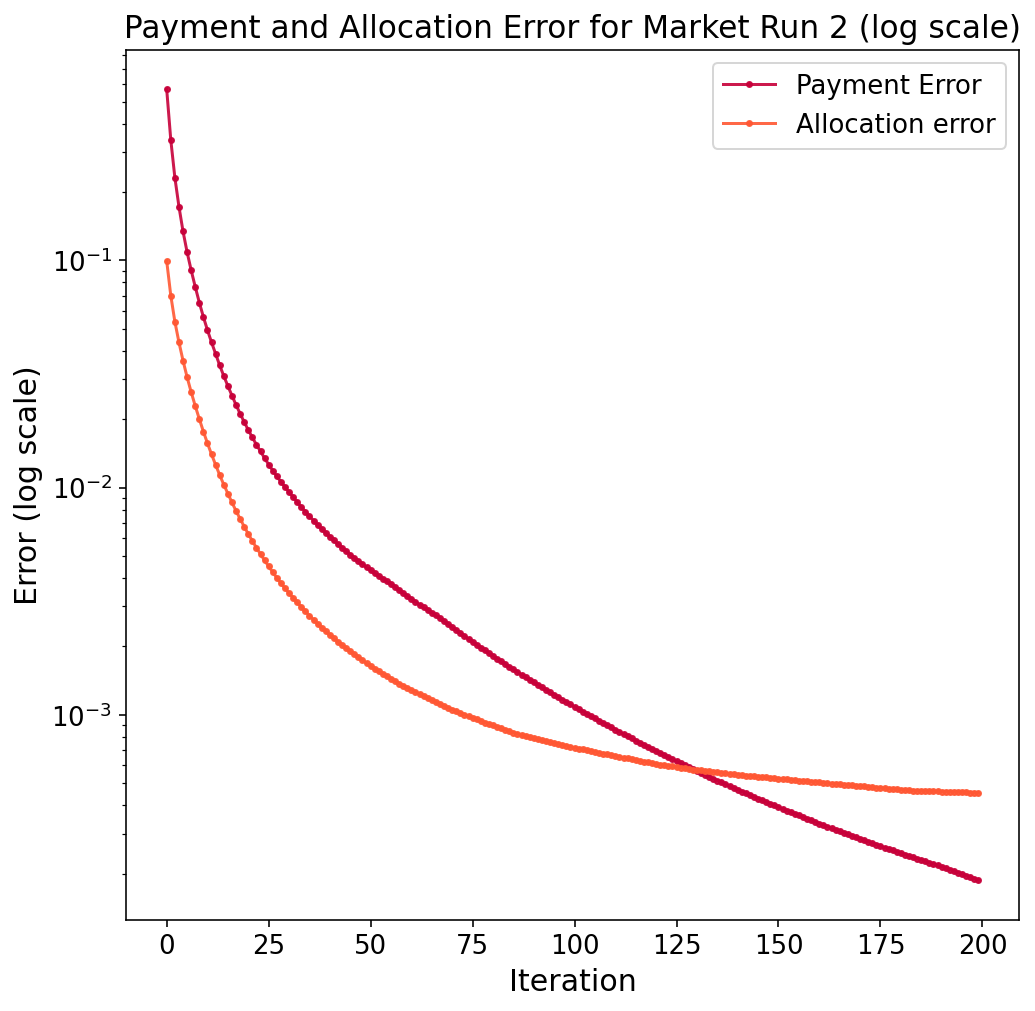}
    \end{subfigure}

    \vspace{1em}

    \begin{subfigure}[b]{\figwidth}
        \centering
        \includegraphics[width=\linewidth]{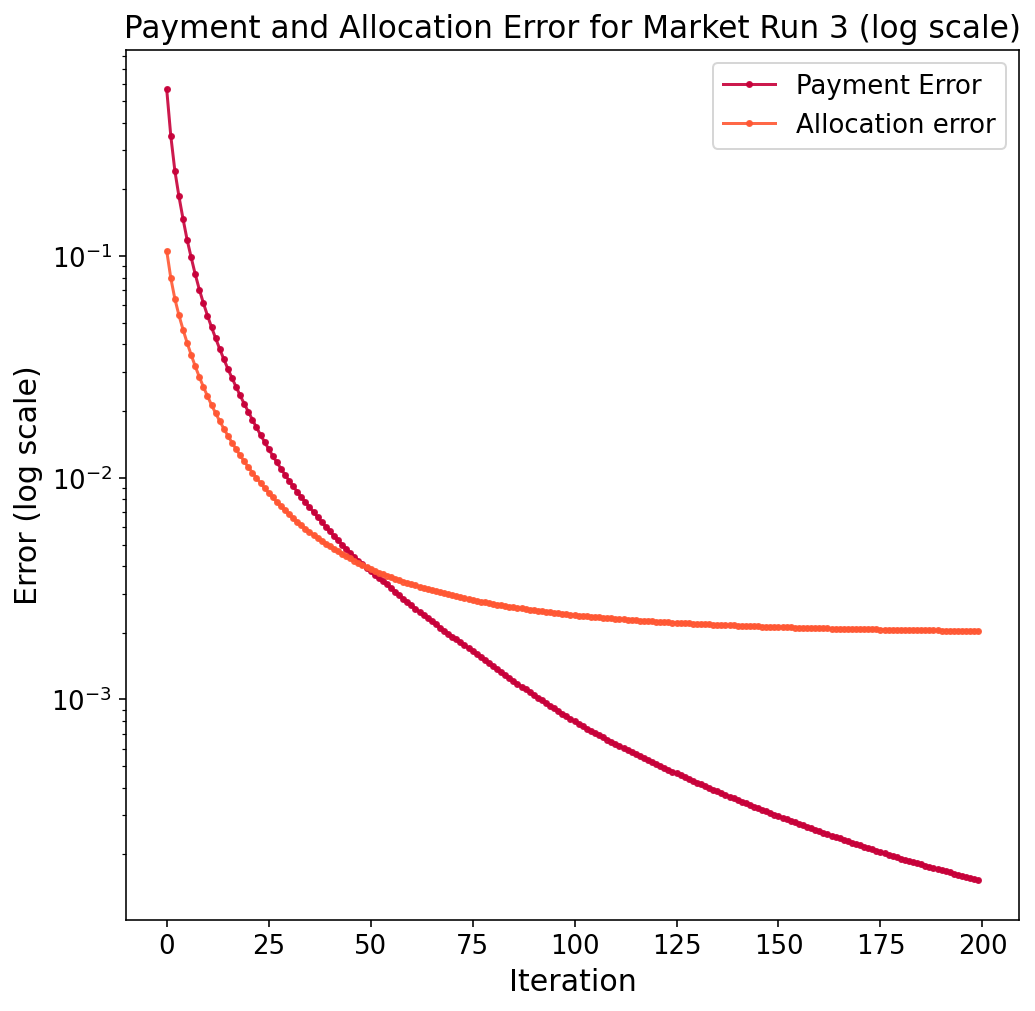}
    \end{subfigure}
    \hfill
    \begin{subfigure}[b]{\figwidth}
        \centering
        \includegraphics[width=\linewidth]{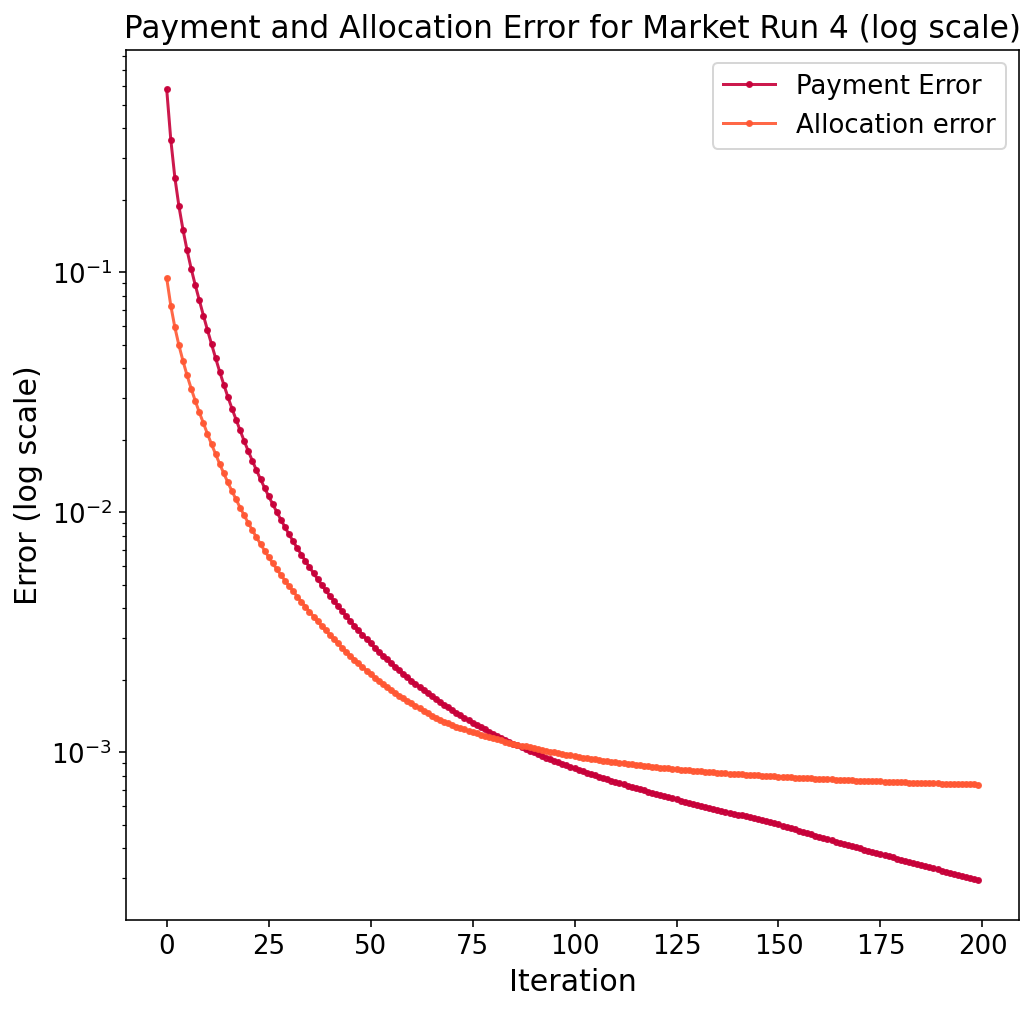}
    \end{subfigure}

    \caption{Relative error in allocations and payments of the Budget Descent Algorithm as a function of iterations}
    \Description{}
    \label{fig:budget_descent_pay_alloc}
\end{figure}

\subsection{Algorithm Implementation and Combinatorial Solver}
\label{ssec:experiments:algo_implementation}

Our implementation of the Budget Descent Algorithm is similar to what is outlined in \cref{algo:BDA}.
The only substantial change, that offers a major computational advantage, is that we do not rely on a generic, heavy-weight convex optimization solver for the inner loop of the algorithm. The conversion graph present in the data for this market is a star-graph, which allows us to solve the uncapped harmonic subproblem exactly, using a highly efficient combinatorial algorithm.
Because all specific locations can independently satisfy the global demand, we can determine the prices for each agent via a greedy pooling approach which we describe below.

\bfparagraph{Notation and Subproblem Formulation}
The resources are $[m]\cup\{0\}$, where $0$ is the global resource to which all the other resources can be converted: the edges of $\convGraph$ are $\qty\big( (r, 0) )_{r\in[m]}$.
Consider a fixed agent $i$ during a single iteration of the Budget Descent algorithm.
To compute the local prices $p_i^r$, we first define the agent's effective virtual budget and effective supply.
Let $\beta_i^r = \sum_{j \in \subTree(i)} \hat{b}_j^r$ denote the effective virtual budget at resource $r$ by agent $i$ and its entire sub-tree.
Let $\sigma_i^r$ denote the total effective supply of resource $r$ available to agent $i$: $\sigma_i^r = s_i^r$ if $i$ is the root and $\sigma_i^r = s_i^r + \frac{\beta_i^r}{p_{\pi(i)}^r}$ otherwise.
Then the subproblem we have, dropping the $i$ subscript for simplicity, is
\begin{equation} \label{eq:experiments:harmonic_optimization}
    \max_{p^0, p^1, \ldots, p^m} \quad
    \sum_{r \in [m] \cup \{0\}} \qty( \beta^r \log p^r - \sigma^r p^r )
    \qquad
    \textrm{such that } \quad
    p^r \ge p^0 \ge 0 \qquad \forall r \in [m]
\end{equation}

\bfparagraph{The Greedy Pooling Algorithm.}
Because all specific locations can independently satisfy the global demand, we can determine the exact optimal prices of \cref{eq:experiments:harmonic_optimization} via a greedy pooling approach:

\begin{enumerate}
    \item \textbf{Calculate Unconstrained Prices:} Compute the price of each specific resource except the global one $r \in [m]$, assuming no restriction on the prices.
    These are simply $\tilde p^r = \beta^r / \sigma^r$.
    
    \item \textbf{Sort:} Relabel the resources so that they are in ascending order: $\tilde p^1 \le \tilde p^2 \le \ldots \le \tilde p^{m}$. 
    
    \item \textbf{Greedy Pooling:}
    Initialize a resource pool $\mathcal P = \{0\}$, initially only containing the global resource with (temporary) solution $\tilde p^0 = \beta^0 / \sigma^0$. 
    
    Iterate $r$ from $1$ to $m$:
    \begin{itemize}
        \item If $\tilde p^r < \tilde p^0$, allow the conversion of resource $r$ to the global resource.
        Therefore, merge it into the pool: $\mathcal P \gets R \cup \{ r \}$ and set $\tilde p^0 \gets ({ \sum_{r' \in \mathcal P}\beta^{r'} })/({ \sum_{r' \in \mathcal P}\sigma^{r'} })$.
        
        \item Otherwise, if $\tilde p^r \ge \tilde p^0$, halt the pooling process.
    \end{itemize}
    
    \item \textbf{Output:} 
    Set the prices of the pooled resources, $p^r = \tilde p^0$ for $r \in \mathcal P$. 
    For unpooled resources $r \notin \mathcal{P}$, set the unrestricted prices $p^r = \tilde p^r$.
\end{enumerate}

By replacing generic numerical optimization solvers with this exact $\order{m \log m}$ combinatorial subroutine, we drastically accelerate the running time of each iteration,
enabling the outer Budget Descent iterations to seamlessly scale to extensive organizational trees and high-dimensional resource spaces.

\bfparagraph{Correctness of the Greedy Pooling Algorithm.}
We want to solve the following problem:
\begin{equation} \label{eq:experiments:harmonic_optimization2}
    \max_{p^0, p^1, \ldots, p^m} \quad
    \sum_{r \in [m] \cup \{0\}} \qty( \beta^r \log p^r - \sigma^r p^r )
    \qquad
    \textrm{such that } \quad
    p^r \ge p^0 \ge 0 \qquad \forall r \in [m]
\end{equation}

Consider the Lagrangian with multiplier $c^{(r, 0)}$ for $r \in [m]$
\begin{equation*}
    L(p, c) =
    \sum_{r \in [m] \cup \{0\}} \qty(\big. \beta^r \log p^r - \sigma^r p^r )
    +
    \sum_{r \in [m]} c^{(r, 0)} \qty(\big. p^r - p^0 )
\end{equation*}

Since the problem is concave and the feasible region has an interior (i.e., Slater's condition holds), all we have to find is a solution where the primal and dual optimality conditions hold: for $r \in [m]$,
\begin{equation*}
    \frac{\partial L}{\partial p^r}
    =
    \frac{\beta^r}{p^r} - \sigma^r + c^{(r, 0)}
    \le
    0
    \quad\bot\quad
    p^r \ge 0
    \quad\implies\quad
    p^r = \max\qty{
        \frac{\beta^r}{\sigma^r - c^{(r, 0)}}, 0
    }
\end{equation*}
and for $r = 0$
\begin{equation*}
    \frac{\partial L}{\partial p^0}
    =
    \frac{\beta^0}{p^0} - \sigma^0 - \sum_{r \in [m]} c^{(r, 0)}
    \le
    0
    \quad\bot\quad
    p^0 \ge 0
    \quad\implies\quad
    p^r = \max\qty{
        \frac{\beta^0}{\sigma^0 + \sum_{r \in [m]} c^{(r, 0)}}, 0
    }
\end{equation*}

Rearrange the $[m]$ resources such that $\frac{\beta^1}{\sigma^1} \le \frac{\beta^2}{\sigma^2} \le \ldots \le \frac{\beta^m}{\sigma^m}$.
Also let $\mathcal P = \{0\} \cup [r^*]$ as in the algorithm: $r^*$ is such that
for all $r \le r^*$
\begin{equation} \label{eq:987}
    \frac{
        \sum_{r' \in \{0\} \cup [r-1]}\beta^{r'}
    }{
        \sum_{r' \in \{0\} \cup [r-1]}\sigma^{r'}
    }
    >
    \frac{\beta^{r}}{\sigma^{r}}
\end{equation}
and for $r^*$
\begin{equation*} \label{eq:654}
    \frac{
        \sum_{r' \in \{0\} \cup [r^*]}\beta^{r'}
    }{
        \sum_{r' \in \{0\} \cup [r^*]}\sigma^{r'}
    }
    \le
    \frac{\beta^{r^*+1}}{\sigma^{r^*+1}}
\end{equation*}

We now set for $r \in [r^*]$
\begin{equation*}
    p^r
    =
    \frac{\beta^r}{\sigma^r - c^{(r, 0)}}
    =
    \frac{\sum_{r \in \{0\} \cup [r^*]}\beta^r}{\sum_{r \in \{0\} \cup [r^*]}\sigma^r}
    \quad\implies\quad
    c^{(r, 0)} = \sigma^r - \beta^r \frac{\sum_{r \in \{0\} \cup [r^*]}\sigma^r}{\sum_{r \in \{0\} \cup [r^*]}\beta^r}
    \ge
    0
\end{equation*}
where the inequality is implied by \cref{eq:987}.
The above makes
\begin{equation*}
    p^0
    =
    \frac{\beta^0}{\sigma^0 + \sum_{r\in[r^*]} c^{(r,0)}}
    =
    \frac{
        \beta^0
    }{
        \sigma^0 + \sum_{r\in[r^*]} \qty( \sigma^r - \beta^r \frac{\sum_{r \in \{0\} \cup [r^*]}\sigma^r}{\sum_{r \in \{0\} \cup [r^*]}\beta^r} )
    }
    =
    \frac{
        \sum_{r' \in \{0\} \cup [r^*]}\beta^{r'}
    }{
        \sum_{r' \in \{0\} \cup [r^*]}\sigma^{r'}
    }
\end{equation*}
as needed by the algorithm.
This completes the correctness.

\end{document}